\tikzset{->-/.style={decoration={
  markings,
  mark=at position #1 with {\arrow{>}}},postaction={decorate}}}
\tikzset{->-/.default=0.5}
\newcommand\wt{\widetilde}
\newcommand{\CC}{{\mathbb C}}
\newcommand{\PP}{{\mathbb P}}
\newcommand{\RR}{{\mathbb R}}
\newcommand{\TT}{{\mathbb T}}
\newcommand{\ZZ}{{\mathbb Z}}
\newcommand{\NN}{{\mathbb N}}
\newcommand{\im}{\mathrm{i}}
\newtheorem{theorem}{Theorem}[section]
\newtheorem{lemma}[theorem]{Lemma}
\newtheorem{proposition}[theorem]{Proposition}
\newtheorem{definition}[theorem]{Definition}
\newtheorem{remark}[theorem]{Remark}
\long\def\red#1{\textcolor {red}{#1}} 
\long\def\blue#1{\textcolor {blue}{#1}}
\begin{document}

\title{Edge-following topological states}

\author{Guo Chuan Thiang 
}
\affil{School of Mathematical Sciences, University of Adelaide, SA 5005, Australia}

\maketitle

\begin{abstract}
We prove that Chern insulators have topologically protected edge states which not only propagate unidirectionally along a straight line boundary, but also swerve around arbitrary-angled corners and geometric imperfections of the material boundary. This is a physical manifestation of the index theory of certain semigroup operator algebras.
\end{abstract}

\section{Introduction}
{\bf Physical motivation.} It is by now an old physics result that the quantised Hall conductance in the bulk (i.e.\ deep in the interior) of a 2D quantum Hall system leads to quantised boundary currents along its boundary, due to the appearance of topologically protected states near its boundary. The related \emph{Chern insulator}, recalled in Section \ref{sec:abstract.Chern.bulk}, has its roots in a model for the quantum anomalous Hall effect \cite{Haldane} which has no external magnetic field. It is a band insulator in the bulk, characterised by a certain topological invariant (a Chern number $k\in\ZZ$), at least abstractly. The physical expectation remains: there must appear $k$ chiral boundary states filling up the insulating energy gap. The abstract Chern insulator idea has quickly been exported from solid-state electron systems to many other areas of physics such as photonics \cite{Lu}, acoustics \cite{Ding}, cold atoms \cite{Jotzu}, gyroscopic metamaterials \cite{Nash}, Floquet systems \cite{Rechtsman}, mechanics \cite{Susstrunk}, exciton-polariton systems \cite{Klembt}. 

The adjective \emph{chiral} is intended to mean that the boundary states persist even when they encounter a \emph{corner} of the material, and indeed propagate \emph{around corners} and generally ``follow the edge'' without dissipation, even when bumps are present; see Fig.\ \ref{fig:bumpy.corner} for a schematic diagram. These remarkable properties have been seen experimentally, e.g. Fig.\ 3-5 of \cite{Rechtsman}, Fig.\ 3 of \cite{Lu}, Fig.\ 3 of \cite{Klembt}, Fig.\ 4 of \cite{Susstrunk}, Movies S1-3 of \cite{Nash}, and are nowadays considered hallmarks of topological protected edge states distinguishing them spurious ones. Following \cite{Haldane}, many popular models of Chern insulators are realised on a honeycomb lattice, which has $\ZZ^2$ translational symmetry generated by non-orthogonal vectors. Consequently, there are two basic types of boundary conditions --- armchair and zig-zag --- which are commonly studied (see Fig.\ \ref{fig:honeycomb}), and chiral boundary states have been observed to propagate around a corner where one condition switches to the other \cite{Klembt}.

{\bf Main result.} We prove that lattice models of Chern insulators, in any physical incarnation, have bulk-gap-filling spectra which give rise to \emph{quantised topologically protected boundary currents propagating along the material boundary, following it around corners and imperfections}. The ability to make precise computations for arbitrary imperfect corners is especially new. A concrete ``\emph{edge-travelling operator}'' $w_\urcorner$ (Remark \ref{rem:edge.travelling.op}, illustrated in Fig.\ \ref{fig:bumpy.corner}) is constructed, as a representative for the $K$-theory index of Chern insulator boundary states. This construction is fundamental for \emph{coarse index} computations in a follow-up paper \cite{Ludewig-Thiang-cobordism} addressing \emph{differential operator} models of Chern insulators and quantum Hall systems. 

{\bf Mathematical approach and previous work.} Topological boundary states can be derived in a $C^*$-algebraic approach \cite{KRS, PSB} (recalled in Section \ref{sec:half.plane}), building on work of Bellissard \cite{Bellissard} who introduced operator $K$-theory (see \cite{Rordam,WO} for gentle introductions) and noncommutative index theory methods into solid-state physics. In this formalism, which contains at its heart the index theory of Toeplitz operators, many important features of the above bulk-boundary correspondence can be accurately derived. Furthermore, although not discussed in this paper important physical effects due to disorder can be handled, and powerful abstract machinery from $KK$-theory can also be brought to bear.

However, these methods have so far been applied only to a limited geometric setup in which the material occupies a nice Euclidean half-plane, whereas in experimental practice, physicists are now especially interested in robustness of boundary states against changes in geometry of the boundary. To understand boundary states for more complicated material boundaries, the first step is to construct the analogue of the Toeplitz-like $C^*$-algebra extensions, which is straightforward, or at least prescriptive. The second step is to compute the $K$-theory groups and the connecting maps between them. The difficulty of this computational step may have been a pragmatic reason why very little has been done beyond the Euclidean half-plane setup (but see \cite{MTHyp} for an attempt for hyperbolic half-planes). Prior to this work, there had been no verification that the crucial exponential map (\S \ref{sec:boundary.states}) is always non-vanishing, regardless of boundary shape --- it may well be the case that the $K$-theory -- $C^*$-algebra machinery predicts trivial results except for perfect half-spaces, which would be damaging to the program. Let us remark that the $K$-theory groups for quarter-plane $C^*$-algebras already depend in a sensitive way on the angle of the corner (see \S \ref{sec:irrational.computations}), so it is not at all obvious that the exponential map survives such angle changes.

Fortuitously, much recent progress has been made in the $K$-theory of semigroup $C^*$-algebras \cite{Cuntz}, which generalise Toeplitz algebras in the appropriate way for our purposes. In particular, subsemigroups $S\subset \ZZ^2$ and their associated Toeplitz $C^*$-algebras $C^*_r(S)$, as initially studied by \cite{CoDo, DoHo, Ji, Jiang, ParkThesis, Parkcones, ParkSchochet}, are exactly the concept required for studying quarter-planes modelling material corners. We mention in passing that these works were recently utilised to demonstrate another interesting notion of ``bulk-edge to corner'' correspondence \cite{Hayashi, Hayashi2}. 

\emph{Relationship with coarse geometry.} For differential operator (``continuum'') models, quantum Hall Hamiltonians for uniform magnetic fields had been shown to have chiral edge states along boundaries with fairly general geometries \cite{FGW}, but quantisation was not established. In a companion paper \cite{Ludewig-Thiang-cobordism} to this one, the author proves that the boundary $K$-theory index of Chern insulators and quantum Hall Hamiltonians is a \emph{coarse index}, and utilises its cobordism invariance to prove that quantisation of the Chern insulator boundary current persists under deformations of the boundary preserving its coarse geometry. The methods and results of \cite{Ludewig-Thiang-cobordism} consistently complement those of this paper, in the sense that lattice models here are derived, in principle, from continuum ones via localised Wannier basis construction \cite{Ludewig-Thiang}.

{\bf Outline.} After recalling some background material, we construct in Section \ref{sec:quarter.plane.section} some generalisations of $C^*_r(S)$ as extensions of $C^*_r(\ZZ^2)$, which are the physical algebras of operators to which Hamiltonian operators on quarter-planes are affiliated. We are able to explicitly handle quarter-planes with any rational slope faces and arbitrary imperfections/bumps in a finite region, and in all cases define a topological invariant measuring the number of boundary-following modes. All relevant $K$-theory groups and connecting maps are computed in Section \ref{sec:quarter.plane.section} in terms of concrete generators, while cyclic cocycles computing the boundary currents are constructed in Section \ref{sec:topological.cornering.states} to demonstrate their quantisation. We also outline how this work extends to irrational slope cases, concave corners, and the quantum Hall effect with magnetic translations (Section \ref{sec:generalisations}).

\begin{figure}

\begin{center}
\begin{tikzpicture}

\draw (0,0) -- (-0.5,0.866) -- (0,1.732) -- (-0.5,2.598) -- (0,3.464);
\draw (3,0) -- (2.5,0.866) -- (3,1.732) -- (2.5,2.598) -- (3,3.464);
\draw (6,0) -- (5.5,0.866) -- (6,1.732) -- (5.5,2.598) -- (6,3.464);
\draw (1,0) -- (1.5,0.866) -- (1,1.732) -- (1.5,2.598) -- (1,3.464);
\draw (4,0) -- (4.5,0.866) -- (4,1.732) -- (4.5,2.598) -- (4,3.464);

\draw (0,0) -- (1,0);
\draw (3,0) -- (4,0);
\draw (0,1.732) -- (1,1.732);
\draw (3,1.732) -- (4,1.732);
\draw (0,3.464) -- (1,3.464);
\draw (3,3.464) -- (4,3.464);
\draw (1.5,0.866) -- (2.5,0.866);
\draw (4.5,0.866) -- (5.5,0.866);
\draw (1.5,2.598) -- (2.5,2.598);
\draw (4.5,2.598) -- (5.5,2.598);

\draw[ultra thick] (1,0) -- (1.5,0.866) -- (1,1.732) -- (1.5,2.598) -- (1,3.464);
\draw[ultra thick, dashed] (1,0) -- (1.5,0.866) -- (2.5,0.866) -- (3,0) -- (4,0) -- (4.5,0.866) -- (5.5,0.866) -- (6,0); 

\draw[->-] (1,0) -- (2.5,0.866);
\node[above left] at (2.5,0.4) {${\mathbf a}$};
\draw[->-] (1,0) -- (1,1.732);
\node[left] at (1.1,1.4) {${\mathbf b}$};

\fill[lightgray, opacity=0.2] (1,0) -- (2.5,0.866) -- (2.5,2.598) -- (1,1.732) -- (1,0);

\node at (1,0) {$\bullet$};
\node at (4,0) {$\bullet$};
\node at (1,1.732) {$\bullet$};
\node at (4,1.732) {$\bullet$};
\node at (1,3.464) {$\bullet$};
\node at (4,3.464) {$\bullet$};
\node at (-0.5,0.866) {$\bullet$};
\node at (-0.5,2.598) {$\bullet$};
\node at (2.5,0.866) {$\bullet$};
\node at (5.5,0.866) {$\bullet$};
\node at (2.5,2.598) {$\bullet$};
\node at (5.5,2.598) {$\bullet$};

\end{tikzpicture}

\end{center}
\caption{Honeycomb lattice, with vectors $\mathbf{a}$ and $\mathbf{b}$ generating the sublattice $\ZZ^2$ of translation symmetries. A fundamental domain is shaded. Translates of a vertex are marked with $\bullet$, while the unmarked ones are the $\ZZ^2$-translates of a second vertex. A vertical edge leads to the zig-zag boundary conditions (thick line), while a horizontal edge leads to armchair boundary conditions (dashed lines). The horizontal translation symmetry operation is $(2,-1)$ in terms of the basis $\mathbf{a}, \mathbf{b}$ for $\ZZ^2$.}\label{fig:honeycomb}
\end{figure}
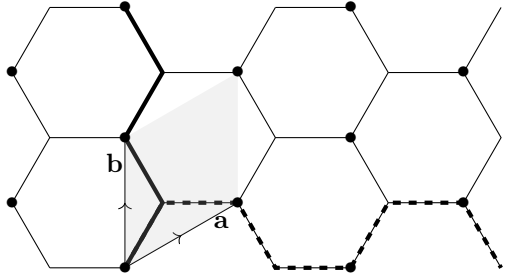

\section{Recap: chiral edge states of Chern insulators in half-plane geometry}
\subsection{Abstract Chern insulator}\label{sec:abstract.Chern.bulk}
Consider a lattice $\ZZ^2$ of translations acting on the Euclidean plane, identified with an orbit (the ``atomic sites'') after picking an origin. A tight-binding Hilbert space with two degrees of freedom per lattice site is $\mathscr{H}=\ell^2(\ZZ^2)\otimes \CC^2$, where $\ell^2(\ZZ^2)=\ell^2_{\rm reg}(\ZZ^2)$
denotes the (left or right) regular representation of $\ZZ^2$. Thus there is a unitary representation of $\gamma\in\ZZ^2$ as operators $U_{\gamma}\otimes 1$ acting by translation on the $\ell^2(\ZZ^2)$ factor and trivially on the ``internal'' factor $\CC^2$. Note that the operator-norm closure of the algebra generated by $U_\gamma, \gamma\in \ZZ^2$, is just the reduced group $C^*$-algebra $C^*_r(\ZZ^2)$, and is isomorphic to the algebra $C(\TT^2)$ of continuous functions on the Pontryagin dual $\TT^2$ of $\ZZ^2$ (\emph{Brillouin torus} in physics), by the Fourier transform. As generators for $C^*_r(\ZZ^2)$, we can take the basic horizontal translation $U_x=U_{(1,0)}$ and vertical translation $U_y=U_{(0,1)}$.

In what follows, we will suppress the subscript on $\ell^2_{\rm reg}(\ZZ^2)$. A \emph{tight-binding bulk Hamiltonian} is a bounded self-adjoint operator 
\begin{equation}
H=\sum_{\gamma\in\ZZ^2} U_\gamma\otimes W_\gamma\;\in \mathcal{B}(\ell^2(\ZZ^2)\otimes\CC^2),\label{eqn:tight.binding.Hamiltonian}
\end{equation}
where each $W_\gamma$ is a $2\times 2$ \emph{hopping matrix} satisfying $W_{\gamma}^*=W_{-\gamma}$ to ensure $H=H^*$. Note that $H$ is translation invariant. When the decay of $W_\gamma$ is sufficiently fast (e.g.\ only finitely many non-zero terms in the case of \emph{finite hopping range} Hamiltonians), the Fourier transform $\mathcal{F}:\ell^2(\ZZ^2)\otimes \CC^2\rightarrow L^2(\TT^2)\otimes \CC^2$ effects
$$ \mathcal{F}H\mathcal{F}^{-1}=\int_{\TT^2}^\oplus H_k\,dk$$
with $\TT^2\ni k\mapsto H_k$ a continuous (or even smooth) family of $2\times 2$ Hermitian matrices and $dk$ the normalised Haar measure. 

The spectrum $\sigma(H)$ of $H$ is the union of the $\sigma(H_k)$ over $k\in\TT^2$ . Suppose the spectrum of $H$ is $\sigma(H)=[a,b]\cup[c,d]$ with $b<c$ --- we call this the \emph{bulk spectral gap hypothesis} (see Fig.\ \ref{fig:gap.filling}). Then there is a continuous (or smooth) eigenspace assignment $k\mapsto \mathcal{L}^-_k\in\CC\PP^1$ where $\mathcal{L}^-_k$ is the lower energy eigenspace of $H_k$. This assignment can be thought of as the classifying map for the continuous (or smooth) \emph{valence} line bundle $\mathcal{L}^-\rightarrow\TT^2$ of eigenspaces for energies below the spectral gap. In this language, a \emph{Chern insulator} is a Hamiltonian $H$ whose valence bundle $\mathcal{L}^-$ is topologically non-trivial, as measured exactly by its non-vanishing first Chern class $c_1(\mathcal{L}^-)\in H^2(\TT^2,\ZZ)$, or equivalently, reduced $K$-theory class $[\mathcal{L}^-]-[1]\in\wt{K}^0(\TT^2)$. Specific choices of hopping matrices $W_y$ which result in nontrivial $\mathcal{L^-}$ are known, e.g.\ \cite{Haldane, PSB}.

{\bf Chern insulator in $C^*$-algebra language.} We have $H\in M_2(C^*_r(\ZZ^2))$ (the $2\times 2$ matrix algebra over $C^*_r(\ZZ^2)$), realised concretely on the Hilbert space $\mathscr{H}=\ell^2(\ZZ^2)\otimes\CC^2$. Rather than Fourier transforming and then constructing the valence bundle $\mathcal{L}^-$, we can directly construct by functional calculus the spectral projection $P_-=\varphi(H)\in M_2(C^*_r(\ZZ^2))$ onto energies below and including $b$: take $\varphi$ to be any continuous (or smooth) real-valued function which is 1 on the interval $(-\infty,b]$ and 0 on $[c,\infty)$. Then $P_-$ defines a $K$-theory class in $K_0(C^*_r(\ZZ^2))$. Now, the non-trivial generator of $K_0(C^*_r(\ZZ^2))\cong\ZZ^2$ can be taken to be the Bott projection\footnote{A general construction of $\mathfrak{b}\in M_2(C(\TT^2))$ with smooth entries can be found in \S2 of \cite{Loring}.} $\mathfrak{b}\in M_2(C^*_r(\ZZ^2))\cong M_2(C(\TT^2))$ (the other generator is the class $[1]$ of the identity element projection). After passing to cohomology via the Chern character, $\mathfrak{b}$ corresponds to the line bundle $\mathcal{L}\rightarrow \TT^2$ with Chern class the generator of $H^2(\TT^2,\ZZ)\cong\ZZ$. Thus a Chern insulator whose valence bundle has Chern class $k$ equivalently has 
\begin{equation}
[P_-]=(1-k)[1]+k[\mathfrak{b}].\label{eqn:projection.to.Bott}
\end{equation}

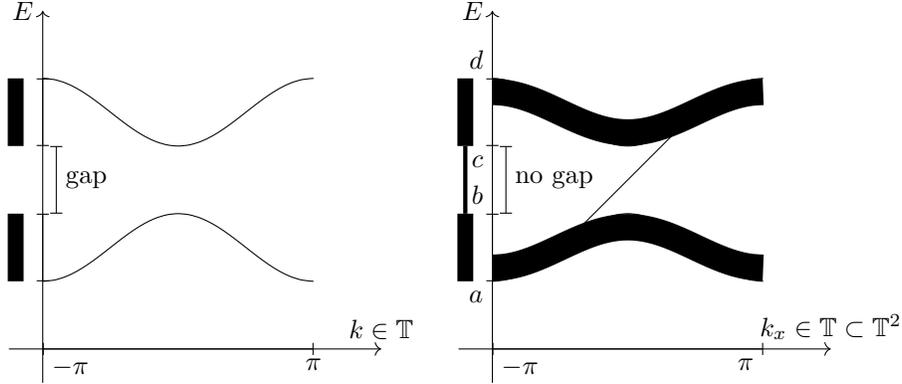
\begin{figure}

\begin{center}
\begin{tikzpicture}[scale=.9]
\draw[thin,->] (0,-0.5) -- (0,5);
\draw[thin,->] (-0.5,0) -- (5,0);
\draw[line width=6pt] (-0.4,1) -- (-0.4,2);
\draw[line width=6pt] (-0.4,3) -- (-0.4,4);

\draw[|-|] (0,1) -- (0,2);
\draw[|-|] (0,3) -- (0,4);
\draw[|-|] (0.2,2) -- (0.2,3);
\draw[|-|] (0,0) -- (4,0);

\node[right] at (0.2,2.5) {gap};
\node[below right] at (0,0) {$-\pi$};
\node[below] at (4,0) {$\pi$};
\node[left] at (0,5) {$E$};
\node[above] at (5,0) {$k\in\TT$};

\draw[domain=0:4,smooth,variable=\r] plot ({\r},{0.5*sin(deg(\r-1)*3.14*0.5)+1.5});
\draw[domain=0:4,smooth,variable=\r] plot ({\r},{-0.5*sin(deg(\r-1)*3.14*0.5)+3.5});
\end{tikzpicture}
\hspace{0.5em}
\begin{tikzpicture}[scale=.9]
\draw[thin,->] (0,-0.5) -- (0,5);
\draw[thin,->] (-0.5,0) -- (5,0);
\draw[line width=6pt] (-0.4,1) -- (-0.4,2);
\draw[line width=6pt] (-0.4,3) -- (-0.4,4);
\draw[ultra thick] (-0.4,2) -- (-0.4,3);
\draw (1,1.5) -- (3,3.5);
\draw[|-|] (0,1) -- (0,2);
\draw[|-|] (0,3) -- (0,4);
\draw[|-|] (0.2,2) -- (0.2,3);
\draw[|-|] (0,0) -- (4,0);
\node[below left] at (0,1) {$a$};
\node[above left] at (0,2) {$b$};
\node[below left] at (0,3) {$c$};
\node[above left] at (0,4) {$d$};
\node[right] at (0.2,2.5) {no gap};
\node[below right] at (0,0) {$-\pi$};
\node[below left] at (4,0) {$\pi$};
\node[left] at (0,5) {$E$};
\node[above] at (5,0) {$k_x\in\TT\subset\TT^2$};

\draw[domain=0:4,smooth,variable=\r] plot ({\r},{0.5*sin(deg(\r-1)*3.14*0.5)+1.5});
\draw[line width=10pt,domain=0:4,smooth,variable=\r] plot ({\r},{0.3*sin(deg(\r-1)*3.14*0.5)+1.5});
\draw[domain=0:4,smooth,variable=\r] plot ({\r},{-0.5*sin(deg(\r-1)*3.14*0.5)+3.5});
\draw[line width=10pt,domain=0:4,smooth,variable=\r] plot ({\r},{-0.3*sin(deg(\r-1)*3.14*0.5)+3.5});
\end{tikzpicture}

\end{center}
\caption{(L) Spectrum of a bulk Hamiltonian $H$ in one spatial dimension, with separated energy intervals. Its energy-momentum dispersion $E=E(k), k\in\TT$ is plotted. (R) Bulk Hamiltonian of a 2D Chern insulator with spectrum $[a,b]\cup[c,d]$ initially having a gap. Energy-momentum dispersion is indicated by thickened curved bands because only the dependence on one coordinate $k_x\in\TT^2$ is plotted while the dispersion in $k_y$ is collapsed. Half/quarter-plane truncations $\hat{H}$ of $H$ acquire new spectra (thin line) filling the spectral gap of $H$.}\label{fig:gap.filling}
\end{figure}

\subsection{Boundary topological invariants in half-plane geometry}\label{sec:half.plane}

While the bulk Hamiltonian $H$ for a Chern insulator has a spectral gap, the ``true'' Hamiltonian $\hat{H}$ (which is supposed to be $H$ acting on a restricted Hilbert space with appropriate boundary conditions) has extra ``chiral edge states'' filling up the spectral gap of $H$, which decay rapidly into the bulk and propagate unidirectionally along the material boundary, see Fig.\ \ref{fig:gap.filling}. 

The simplest way to model a material boundary is to truncate $\ell^2(\ZZ^2)$ to $\ell^2(\NN\times\ZZ)$. Thus the material occupies the right half-plane with straight line boundary $x=0$. We give a brief outline of how the language of Toeplitz extensions \cite{KRS, PSB} is used to prove the existence of chiral edge states and their finer analytic properties, with much more detail available in \cite{PSB}.

\subsubsection{Index theory of classical Toeplitz operators}\label{sec:classical.Toeplitz}
The following is classical, e.g. \S 3.F of \cite{WO}. With $U_x$ the unitary shift operator on $\ell^2(\ZZ)$, let $\hat{U}_x$ denote its truncation to the (right) unilateral shift operator on the ``right-half line'' Hilbert space $\ell^2(\NN)$ (after a Fourier transform, $\ell^2(\NN)$ is the classical Hardy subspace of $L^2(\TT)$). The \emph{Toeplitz} algebra $\mathcal{T}$ is the $C^*$-subalgebra of $\mathcal{B}(\ell^2(\NN))$ generated by $\hat{U}_x$. There is a short exact sequence
$$0\rightarrow \mathcal{K}(\ell^2(\NN))\rightarrow \mathcal{T}\xrightarrow{\pi}C^*_r(\ZZ)\overset{\rm Fourier}{\cong} C(\TT)\rightarrow 0,$$
where $\pi$ is the \emph{symbol} map taking $\hat{U}_x\mapsto U_x$. An Toeplitz operator $T_f\in\mathcal{T}$ with invertible symbol function $f\in C(\TT)\cong C^*_r(\ZZ)$ is Fredholm with \emph{analytic} index equal to minus the winding number of $f$ (the \emph{topological} index). This integer is invariant under $T_f+K$ with $K$ compact.
Correspondingly, the $K$-theory six-term long exact sequence
\begin{equation*}
\xymatrix{
{\overbrace{K_0(\mathcal{K})}^{\ZZ}} \ar[r] & K_0(\mathcal{T}) \ar[r] & {\overbrace{K_0(C(\TT))}^{\ZZ}} \ar[d]^{{\rm Exp}} \\
{\underbrace{K_1(C(\TT))}_{\ZZ}} \ar[u]^{{\rm Ind}} & K_1(\mathcal{T}) \ar[l] & {\underbrace{K_1(\mathcal{K})}_{0}}\ar[l]_{\quad 0}
}
\end{equation*}
has connecting index map an isomorphism, and so $K_0(\mathcal{T})\cong \ZZ$, $K_1(\mathcal{T})=0$. 

We remark that even though the other connecting map ${\rm Exp}$ is the trivial map in the above case, its \emph{existence} is a deep result drawing upon Bott periodicity. We will exploit its utility in detecting obstructions to difficult lifting problems.

\subsubsection{Toeplitz extension for half-plane geometry}\label{sec:half.plane.Toeplitz}
The two-dimensional version of the Toeplitz ``half-line algebra'' $\mathcal{T}$ is a ``half-plane algebra'', which will contain the half-plane versions of $H$. Let $\iota$ be the inclusion $\ell^2(\NN\times\ZZ)\rightarrow \ell^2(\ZZ^2)$, and $p:\ell^2(\ZZ^2)\rightarrow \ell^2(\NN\times\ZZ)$ its adjoint orthogonal projection. Each operator $A\in C^*_r(\ZZ^2)\subset \mathcal{B}(\ell^2(\ZZ^2))$ has a truncation $\hat{A}\coloneqq p\circ A\circ\iota \in \mathcal{B}(\ell^2(\NN\times\ZZ))$, and this assignment is $*$-linear but \emph{not} multiplicative --- for example, $\hat{U}_x$ is a \emph{non-unitary} isometry satisfying $\hat{U}_x\hat{U}_x^*=1-P_{x=0}$ where $P_{x=0}$ is the orthogonal projection onto the ``boundary subspace'' of $\ell^2(\ZZ^2)$ spanned by basis vectors at $(0,n), n\in\ZZ$. Let $C^*_r(\NN\times\ZZ)$ be the $C^*$-algebra generated by the $\hat{U}_\gamma, \gamma\in\NN\times\ZZ$ (or equivalently $\gamma\in\ZZ^2$ due to $\hat{U}_\gamma^*=\hat{U}_{\gamma^{-1}}$). The isometry $\hat{U}_x$ and the unitary operator $\hat{U}_y$ are already enough to generate $C^*_r(\NN\times\ZZ)$, which exhibits the isomorphism $C^*_r(\NN\times\ZZ)\cong \mathcal{T}\otimes C^*_r(\ZZ)$.  There is a short exact sequence
\begin{equation}
0\rightarrow\mathcal{J} \rightarrow C^*_r(\NN\times\ZZ) \xrightarrow{\pi} C^*_r(\ZZ^2)\rightarrow 0,\label{eqn:SES.standard.half.plane}
\end{equation}
where $\pi$ is the $*$-homomorphism defined on generators by $\hat{U}_x\mapsto U_x, \hat{U}_y\mapsto U_y$. The kernel $\mathcal{J}$ is in fact the commutator ideal in $C^*_r(\NN\times\ZZ)$, cf.\ \cite{CoDo}. Since $[\hat{U}_y,\hat{U}_x^*]=0$, we see that $\mathcal{J}$ is just the ideal generated by $P_{x=0}=1-\hat{U}_x\hat{U}_x^*$, and we have $\mathcal{J}\cong\mathcal{K}(\ell^2(\NN))\otimes C^*_r(\ZZ)$.

For any $A\in C^*_r(\ZZ^2)$, the truncations $\hat{A}=p\circ A\circ \iota$ are elements of $C^*_r(\NN\times\ZZ)$. If $A\in M_N(C^*_r(\ZZ^2))$, there is a similar truncation to $\hat{A}\in M_N(C^*_r(\NN\times\ZZ))$, so for example, we can truncate the bulk Hamiltonian $H$ to a ``half-plane Hamiltonian'' $\hat{H}$ acting on $\ell^2(\NN\times\ZZ)\otimes\CC^2$. The homomorphism $\pi$ extends to matrix algebras, and (retaining the same notation) $\pi(\hat{H})=H$. We could also consider more general half-plane Hamiltonians $\hat{H}'=\hat{H}+\tilde{H}$ with $\tilde{H}=\tilde{H}^*\in\mathcal{J}$ some extra ``boundary Hamiltonian'' term. We still have $\pi(\hat{H}')=H=\pi(\hat{H})$, so that the spectrum of half-plane versions of $H$ generally contains that of $H$.

\subsubsection{Topological boundary states in half-plane geometry}\label{sec:boundary.states}
The extra spectra of the half-plane Hamiltonian have interesting features, deducible from the long exact sequence (LES) in $K$-theory for the sequence Eq.\ \eqref{eqn:SES.standard.half.plane}, cf.\ \S 4.3.1 of \cite{PSB}. This LES is easily computed (e.g.\ with K\"{u}nneth formula) to be
\begin{equation*}
\xymatrix{
{\overbrace{K_0(\mathcal{J})}^{\ZZ}} \ar[r]^{0\quad} & {\overbrace{K_0(C^*_r(\NN\times\ZZ))}^{\ZZ[1]}} \ar[r] & {\overbrace{K_0(C^*_r(\ZZ^2))}^{\ZZ[1]\oplus\ZZ[\mathfrak{b}]}} \ar[d]^{{\rm Exp}} \\
{\underbrace{K_1(C^*_r(\ZZ^2))}_{\ZZ^2}} \ar[u]^{{\rm Ind}} & {\underbrace{K_1(C^*_r(\NN\times\ZZ))}_{\ZZ}} \ar[l] & {\underbrace{K_1(\mathcal{J})}_{\ZZ[U_y]}}\ar[l]_{\quad 0}
}
\end{equation*}
where the operator $\mathcal{J}\ni \hat{U}_yP_{x=0}\longleftrightarrow P_{x=0}\otimes U_y\in\mathcal{K}(\ell^2(\NN))\otimes C^*_r(\ZZ)$ effecting ``translation-along-the-boundary'' represents the $K_1(\mathcal{J})\cong\ZZ$ generator (see Fig.\ \ref{fig:standard.windings}). Actually $\mathcal{J}$ is non-unital, so its $K_1$-group representatives should be unitaries in the (matrix algebras over the) unitisation of $\mathcal{J}$ rather than $\mathcal{J}$ itself. This means that we implicitly regard $\hat{U}_yP_{x=0}$ as a unitary operator on $\ell^2(\NN\times\ZZ)$ by extending it by the identity operator on the complement of $x=0$. Notwithstanding this technicality, we abuse notation and simply write $K_1(\mathcal{J})\cong\ZZ[U_y]$. 

{\bf $K$-theory exponential map.} The connecting map $\rm{Exp}$ takes $[\mathfrak{b}]\mapsto -[U_y]$. Quite generally, ${\rm Exp}$ is a suspended/``higher'' index map, which measures the obstruction to lifting a projection $P$ in $M_N(C^*_r(\ZZ^2))$ to a projection in \mbox{$M_N(C^*_r(\NN\times\ZZ))$}, where $N$ is taken to be arbitrarily large. Namely, ${\rm Exp}[P]=[{\rm exp}(-2\pi i \hat{P})]$ where $\hat{P}$ is any self-adjoint lift of $P$, e.g.\ see \S 12.2 of \cite{Rordam}. So if ${\rm Exp}[P]\neq[1]$ (note that $[1]=0$ in $K_1$-theory), it cannot be the case that $\hat{P}$ is a projection.

Just as we have the spectral projection $P_-=\varphi(H)\in M_N(C^*_r(\ZZ^2))$ for a bulk Hamiltonian $H$ with spectral gap, we can similarly consider $\varphi(\hat{H})\in M_N(C^*_r(\NN\times\ZZ))$. Since $\pi(\varphi(\hat{H}))=\varphi(\pi(\hat{H}))=\varphi(H)=P_-$, we see that $\varphi(\hat{H})$ is a lift of $P_-$ in $M_N(C^*_r(\NN\times\ZZ))$, \emph{but this lift need not be a projection}. If $H$ is a Chern insulator, i.e.\ $[P_-]=(1-k)[1]+k[\mathfrak{b}]$ with $k\neq 0$ (Eq.\ \eqref{eqn:projection.to.Bott}), then ${\rm Exp}([P_-])=-k[U_y]\neq [1]$, so that no lift of $P_-$ is a projection. In particular, $\varphi(\hat{H})$ is not a projection, which means $\varphi(\lambda)\neq 0,1$ for some $\lambda$ in the spectrum $\sigma(\hat{H})$ of $\hat{H}$. We could have chosen $\varphi$ such that $\varphi(x)\neq 0,1$ only on some arbitrarily given small interval in the spectral gap $(b,c)$ of $H$. Then $\hat{H}$ must have some spectrum in this small interval, and by varying the choice of interval in $(b,c)$, we conclude that $\hat{H}$ actually has spectra filling the entire gap $(b,c)$. The same argument holds even if $\hat{H}$ is replaced by $\hat{H}'=\hat{H}+\tilde{H}$ with $\tilde{H}=\tilde{H}^*\in M_N(\mathcal{J})$ a boundary perturbation. The gap-filling spectra are furthermore exponentially localised near the boundary, cf.\ \S2.4.3 of \cite{PSB}.

For Chern insulators, ${\rm Exp}[P_-]\neq 0$ gives a non-vanishing ``boundary topological invariant''. By pairing ${\rm Exp}[P_-]$ with a suitable cyclic 1-cocycle, one obtains a local formula for the winding number $-k\in\ZZ$ of ${\rm Exp}[P_-]=-k[\hat{U}_y]$, which has the physical meaning of an integer quantised boundary current, see \S 7 of \cite{PSB}. We generalise this construction in Section \ref{sec:topological.cornering.states} for quarter-plane geometries.

\begin{remark}
The $K$-theory exponential map had also been used to obtain an index theorem for Dirac-type operators on noncompact partitioned manifolds (Theorem 3.3 of \cite{RoeToeplitz}), with the indices furthermore insensitive to certain deformations of the partitioning hypersurface (see Prop.\ 1.7, Theorem 7.7 of \cite{RoeToeplitz} for details). The connection of our work to this flavour of index theory is left for a future work \cite{Ludewig-Thiang-cobordism}.
\end{remark}

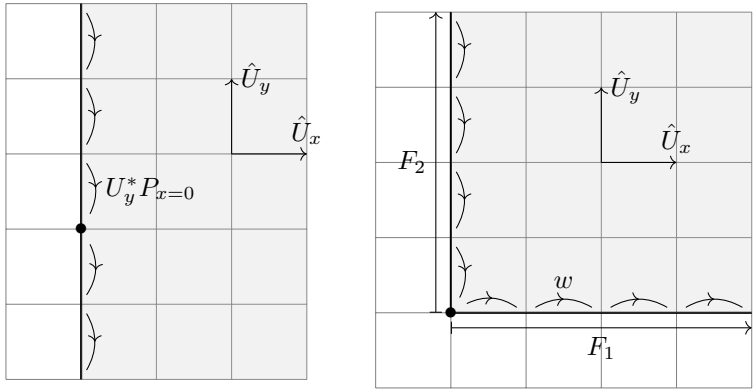
\begin{figure}

\begin{center}
\begin{tikzpicture}
\draw[help lines] (-1,-2) grid (3,3);
\draw[thick] (0,-2) -- (0,3);

\fill[lightgray, opacity=0.2] (0,-2) rectangle (3,3);

\node (a) at (0,3) {};
\node (b) at (0,2) {};
\node (c) at (0,1) {};
\node (d) at (0,0) {$\bullet$};
\node (e) at (0,-1) {};
\node (f) at (0,-2) {};

\path
    (a) edge[bend left, ->-] node [right] {} (b)
    (b) edge[bend left, ->-] node [right] {} (c)
    (c) edge[bend left, ->-] node [right] {} (d)
    (d) edge[bend left, ->-] node [right] {} (e)
    (e) edge[bend left, ->-] node [right] {} (f);

\node [right] at (0.2,0.5) {$U_y^*P_{x=0}$};

\node[above] at (3,1) {$\hat{U}_x$};
\node[right] at (2,2) {$\hat{U}_y$};
\draw [->] (2,1) -- (3,1);
\draw [->] (2,1) -- (2,2);
\end{tikzpicture}
\hspace{1em}
\begin{tikzpicture}
\draw[help lines] (-1,-1) grid (4,4);
\draw[thick] (0,4) -- (0,0) -- (4,0);
\draw [|->] (0,-0.2) -- (4,-0.2);
\draw [|->] (-0.2,0) -- (-0.2,4);

\fill[lightgray, opacity=0.2] (0,0) rectangle (4,4);

\node (a) at (0,4) {};
\node (b) at (0,3) {};
\node (c) at (0,2) {};
\node (d) at (0,1) {};
\node (e) at (0,0) {$\bullet$};
\node (f) at (1,0) {};
\node (g) at (2,0) {};
\node (h) at (3,0) {};
\node (i) at (4,0) {};

\path
    (a) edge[bend left, ->-] node [right] {} (b)
    (b) edge[bend left, ->-] node [right] {} (c)
    (c) edge[bend left, ->-] node [right] {} (d)
    (d) edge[bend left, ->-] node [right] {} (e)
    (e) edge[bend left, ->-] node [right] {} (f)
    (f) edge[bend left, ->-] node [right] {} (g)
    (g) edge[bend left, ->-] node [right] {} (h)
    (h) edge[bend left, ->-] node [right] {} (i);

\node [above] at (1.5,0.2) {$w$};

\node [left] at (-0.2,2) {$F_2$};
\node [below] at (2,-0.2) {$F_1$};
\node[above] at (3,2) {$\hat{U}_x$};
\node[right] at (2,3) {$\hat{U}_y$};
\draw [->] (2,2) -- (3,2);
\draw [->] (2,2) -- (2,3);
\end{tikzpicture}

\end{center}
\caption{Standard half-plane and quarter-plane geometries. The effect of the generator $[U_yP_{x=0}]$ of $K_1(\mathcal{J})$ and $[w]$ of $K_1(\mathcal{I}')$ are illustrated with curved arrows.}\label{fig:standard.windings}
\end{figure}

\section{$K$-theory of bumpy quarter-plane algebras}\label{sec:quarter.plane.section}
In this Section, we will generalise the machinery outlined in Section \ref{sec:half.plane}, to the ``bumpy quarter-plane'' setting.

\subsection{Standard quarter-plane Toeplitz algebra}\label{sec:standard.quarter.plane}
Instead of a half-plane, now consider the material occupying the \emph{standard} quarter-plane $C$, which is the upper-right quadrant bounded by the $x=0$ and $y=0$ axes. The set of lattice points in the material is $C\cap\ZZ^2$, which is the subsemigroup $\NN^2=\{(m,n)\,:\,m,n\in\NN\}\subset \ZZ^2$, see Fig. \ref{fig:standard.windings}. The half-plane Toeplitz algebra $C^*_r(\NN\times\ZZ)\cong \mathcal{T}\otimes C^*_r(\ZZ)$, has a generalisation to a ``quarter-plane Toeplitz algebra'' $C^*_r(\NN^2)$ as follows.

In the same vein as Section \ref{sec:half.plane.Toeplitz}, let $\iota:\ell^2(\NN^2)\hookrightarrow \ell^2(\ZZ^2)$ be the inclusion, with adjoint the orthogonal projection $p: \ell^2(\ZZ^2)\rightarrow \ell^2(\NN^2)$, so that $p\circ\iota=1_{\ell^2(\NN^2)}$. Given an operator $A\in C^*_r(\ZZ^2)$, its truncation to $\mathcal{B}(\ell^2(\NN^2))$ is $\hat{A}:=p\circ A\circ\iota$. This is a continuous $*$-linear assignment, although it does not respect products. As particular examples, for each unitary translation $U_\gamma, \gamma\in\ZZ$, we write $\hat{U}_\gamma=p\circ U_\gamma\circ \iota$. We also write $\hat{U}_x=p\circ U_x\circ \iota$ and $\hat{U}_y=p\circ U_y\circ \iota$, which are both non-unitary isometries.

For any subset $Y\subset\NN^2$, we denote the orthogonal projection $\ell^2(\NN^2)\rightarrow\ell^2(Y)$ by $P_Y$. Such projections commute among themselves.
Notice that for $\gamma\in\NN^2$, the truncation $\hat{U}_\gamma$ remains an isometry, $\hat{U}_\gamma^*\hat{U}_\gamma=1$, but its range projection $\hat{U}_\gamma \hat{U}_\gamma^*$ is generally not the identity. For example, $[\hat{U}_y^*,\hat{U}_y]=P_{F_1}$, the orthogonal projection onto $\ell^2(F_1)$ where $F_1=\{(m,0):m\in\NN\}$ is the ``horizontal boundary face'', see Fig.\ \ref{fig:standard.windings}. Similarly, $[\hat{U}_x^*,\hat{U}_x]=P_{F_2}$ where $F_2=\{(0,n):n\in\NN\}$ is the ``vertical boundary face''. We also have $0=[\hat{U}_x,\hat{U}_y]=[\hat{U}_x^*,\hat{U}_y^*]=[\hat{U}_x^*,\hat{U}_y]$.

\begin{definition}\label{defn:standard.quarter.plane}
The standard quarter-plane algebra, denoted $C^*_r(\NN^2)$, is defined to be the (unital) $C^*$-subalgebra of $\mathcal{B}(\ell^2(\NN^2))$ generated by $\hat{U}_x$ and $\hat{U}_y$.
\end{definition}
\noindent
It is easy to see that we can also define $C^*_r(\NN^2)$ to be the $C^*$-subalgebra generated by $\hat{U}_\gamma, \gamma\in\NN^2$, or by  $\hat{U}_\gamma, \gamma\in\ZZ^2$.

Let $\mathcal{I}'$ be the commutator ideal in $C^*_r(\NN^2)$. By inspecting the above commutators amongst $\hat{U}_x,\hat{U}_y$ and their adjoints, we may deduce that $\mathcal{I}'=\mathcal{I}'_1+\mathcal{I}'_2$ where for $i=1,2$, $\mathcal{I}'_i$ denotes the closed ideal generated by the projection $P_{F_i}$. We remark that $P_{F_1}P_{F_2}=P_{\{(0,0)\}}\in\mathcal{I}'$ is a rank-1 projection, so that in fact $\mathcal{K}(\ell^2(\NN^2))\subset\mathcal{I}'$. Modulo $\mathcal{I}'$, each $\hat{U}_\gamma, \gamma\in\NN^2$ is unitary, and the map $\pi:\hat{U}_\gamma\mapsto U_\gamma, \gamma\in\NN^2$ extends to a $*$-homomorphism $\pi:C^*_r(\NN^2)\rightarrow C^*_r(\ZZ^2)$ with kernel $\mathcal{I}'$, i.e.\ there is a short exact sequence, cf.\ \S 4 of \cite{CoDo},
\begin{equation}
0\rightarrow \mathcal{I}'\rightarrow C^*_r(\NN^2)\overset{\pi}{\rightarrow} C^*_r(\ZZ^2)\rightarrow 0. \label{standard.quarter.plane.SES}
\end{equation}
\noindent
The prime on the ideal $\mathcal{I}'$ is meant to distinguish it from the corresponding ideals $\mathcal{I}$ constructed in Eq.\ \eqref{staircase.SES} for general quarter-plane geometries later on.

We would like to compute the corresponding $K$-theory long exact sequence for Eq.\ \eqref{standard.quarter.plane.SES}. For this, we appeal to Prop.\ 1 of \cite{DoHo}, which says that there is a canonical isomorphism $C^*_r(\NN^2)\cong C^*_r(\NN)\otimes C^*_r(\NN)$, where each $C^*_r(\NN)$ factor is the classical Toeplitz algebra $\mathcal{T}$ (Section \ref{sec:classical.Toeplitz}). We saw that $K_0(C^*_r(\NN))\cong \ZZ$ and $K_1(C^*_r(\NN))=0$, so the K\"{u}nneth theorem gives $K_0(C^*_r(\NN^2))=K_0(C^*_r(\NN))\otimes_\ZZ K_0(C^*_r(\NN))\cong\ZZ$ and $K_1(C^*_r(\NN^2))=0$. We also know that $K_0(C^*_r(\ZZ^2))\cong\ZZ[1]\oplus\ZZ[\mathfrak{b}]$. Then the LES is

\begin{equation}\label{standard.quarter.plane.LES}
\xymatrix{
{\overbrace{K_0(\mathcal{I}')}^{\ZZ^2}} \ar[r]^{0} & {\overbrace{K_0(C^*_r(\NN^2))}^{\ZZ[1]}} \ar[r]^{\pi_*} & {\overbrace{K_0(C^*_r(\ZZ^2))}^{\ZZ[1]\oplus\ZZ[\mathfrak{b}]}} \ar[d]^{{\rm Exp}} \\
{\underbrace{K_1(C^*_r(\ZZ^2))}_{\ZZ^2}} \ar[u]^{{\rm Ind}}_{\cong} & {\underbrace{K_1(C^*_r(\NN^2))}_{0}} \ar[l]_{0} & {\underbrace{K_1(\mathcal{I}')}_{\ZZ[w]}}\ar[l]_{0}
}
\end{equation}
The exponential map must take $[\mathfrak{b}]$ to the generator of $K_1(\mathcal{I}')\cong\ZZ$. We may represent this generator by the operator $w=\hat{U}_y^*P_{F_2}+\hat{U}_xP_{F_1}$ translating anticlockwise along the boundary, as illustrated in Fig.\ \ref{fig:standard.windings}. Also, the generators of $K_0(\mathcal{I}')\cong\ZZ^2$ can be taken to be $[P_{F_1}]$ and $[P_{F_2}]$. These results follow from a direct calculation identical to that provided in Section \ref{sec:staircase} with the ``${}_\urcorner$'' decorations dropped (see Eq.\ \eqref{eqn:staircase.K0.ideal.generators}, \eqref{eqn:staircase.cornering}), or from \cite{Cuntz} as recalled in Section \ref{sec:quarter.plane.semigroup}.

\subsection{Standard quarter-plane with imperfect boundaries}\label{sec:staircase}
Let us modify the standard quarter-plane geometry with a ``staircase-shaped'' boundary condition, with a step corner at $(1,1)$, as illustrated in Fig. \ref{fig:staircase}. Thus, we need to truncate to $\ell^2(\NN^2_\urcorner)$ where $\NN^2_\urcorner:=\NN^2\setminus \{(0,0)\}$. We will construct the analogue of Eq.\ \eqref{standard.quarter.plane.SES} for this geometry.

\begin{figure}
\begin{center}
\begin{tikzpicture}
\draw[help lines] (0,0) grid (4,4);
\draw[thick] (0,4) --(0,1) -- (1,1) -- (1,0) -- (4,0);

\fill[lightgray, opacity=0.2] (0,4) --(0,1) -- (1,1) -- (1,0) -- (4,0) -- (4,4) -- (0,4);

\node[right] at (3,2) {$V_x$};
\node[above left] at (2,3) {$V_y$};
\draw[->] (2,2) -- (3,2);
\draw[->] (2,2) -- (2,3);

\node[left, above] at (0,0) {(0,0)};
\node at (0,0) {$\bullet$};
\node[red] at (1,0) {$\bullet$};
\node[red] at (2,0) {$\bullet$};
\node[red] at (3,0) {$\bullet$};
\node[red] at (4,0) {$\bullet$};
\node[red] at (0,1) {$\bullet$};

\node[blue] at (0,1) {$\times$};
\node[blue] at (0,2) {$\times$};
\node[blue] at (0,3) {$\times$};
\node[blue] at (0,4) {$\times$};
\node[blue] at (1,0) {$\times$};

\node[green] at (1,1) {$\circ$};

\node (a) at (0,4) {};
\node (b) at (0,3) {};
\node (c) at (0,2) {};
\node (d) at (0,1) {};
\node (e) at (1,1) {};
\node (f) at (1,0) {};
\node (g) at (2,0) {};
\node (h) at (3,0) {};
\node (i) at (4,0) {};

\path
    (a) edge[bend left, ->-] node [right] {} (b)
    (b) edge[bend left, ->-] node [right] {} (c)
    (c) edge[bend left, ->-] node [right] {} (d)
    (d) edge[bend left, ->-] node [right] {} (e)
    (e) edge[bend left, ->-] node [right] {} (f)
    (f) edge[bend left, ->-] node [right] {} (g)
    (g) edge[bend left, ->-] node [right] {} (h)
    (h) edge[bend left, ->-] node [right] {} (i);
    
 \node[above] at (2.5,0.2) {$w_\urcorner$};

\end{tikzpicture}
\hspace{1em}
\begin{tikzpicture}
\draw[help lines] (0,0) grid (4,4);
\draw[thick] (0,4) --(0,1) -- (1,1) -- (1,0) -- (4,0);

\fill[lightgray, opacity=0.2] (0,4) --(0,1) -- (1,1) -- (1,0) -- (4,0) -- (4,4) -- (0,4);

\node[left, above] at (0,0) {(0,0)};
\node at (0,0) {$\bullet$};

\node (a) at (1,4) {};
\node (b) at (1,3) {};
\node (c) at (1,2) {};
\node (d) at (1,1) {};
\node (e) at (2,1) {};
\node (f) at (3,1) {};
\node (g) at (4,1) {};

\path
    (a) edge[bend left, ->-] node [right] {} (b)
    (b) edge[bend left, ->-] node [right] {} (c)
    (c) edge[bend left, ->-] node [right] {} (d)
    (d) edge[bend left, ->-] node [right] {} (e)
    (e) edge[bend left, ->-] node [right] {} (f)
    (f) edge[bend left, ->-] node [right] {} (g);

 \node[above] at (2.5,1.2) {$w_\urcorner'$};

\end{tikzpicture}
\end{center}
\caption{Quarter-plane with a ``staircase'' boundary modification at the corner. The faces $F_1, F_2$ are indicated with the red $\red{\bullet}$ and the blue $\blue{\times}$ respectively. Two possible representatives $w_\urcorner, w_\urcorner'$ of the generator of $K_1(\mathcal{I}^\urcorner)$ are indicated.}\label{fig:staircase}
\end{figure}
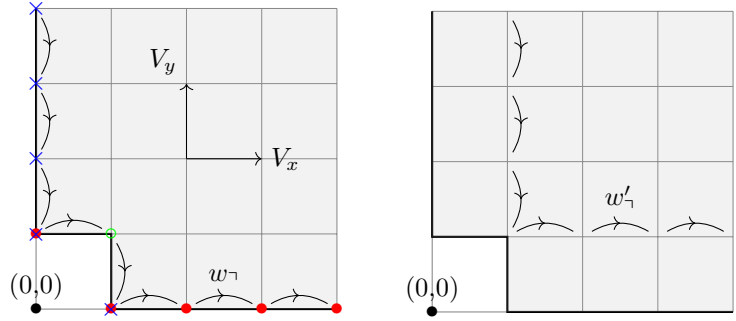

\begin{definition}\label{defn:staircase.algebra}
With $\iota:\ell^2(\NN^2_\urcorner)\rightarrow\ell^2(\ZZ^2)$ the inclusion and $p:\ell^2(\ZZ^2)\rightarrow\ell^2(\NN^2_\urcorner)$ its adjoint projection, let $V_\gamma=p\circ U_\gamma\circ \iota$ be the corresponding truncation of $U_\gamma, \gamma\in\ZZ^2$. We define $C^*_r(\NN^2_\urcorner)\subset \mathcal{B}(\ell^2(\NN^2_\urcorner))$ to be the unital $C^*$-algebra generated by $V_\gamma, \gamma\in\ZZ^2$, or equivalently by the isometries $V_\gamma, \gamma\in\NN^2$. 
\end{definition}

It is also enough to take as a generating set the basic isometries $V_x=V_{(0,1)}$ and $V_y=V_{(1,0)}$. Although $[V_x,V_y]=0=[V_y^*,V_x^*]$, there are nonvanishing commutators
\begin{equation}
[V_x^*,V_x]=P_{F_2^\urcorner},\qquad  [V_y^*,V_y]=P_{F_1^\urcorner},\qquad [V_y^*,V_x]=|(1,0)\rangle\langle(0,1)|,\label{commutators}
\end{equation}
where the ``faces'' $F_1^\urcorner, F_2^\urcorner$ (see Fig.\ \ref{fig:staircase}) are now defined by
$$F_1^\urcorner=\{(0,1)\}\cup \{(m,0):m\geq 1\},\qquad F_2^\urcorner=\{(1,0)\}\cup \{(0,n):n\geq 1\}.$$
Note that $V_y^*P_{F_1^\urcorner}=0=V_x^*P_{F_2^\urcorner}$ --- a face is ``killed'' when translated outwards.

\begin{lemma}\label{lem:staircase.commutator.ideal}
The commutators in Eq.\ \eqref{commutators} generate the (closed) commutator ideal $\mathcal{I}^\urcorner$ in $C^*_r(\NN^2_\urcorner)$, and the quotient is isomorphic to $C^*_r(\ZZ^2)$.
Thus there is a short exact sequence
\begin{equation}
0\rightarrow \mathcal{I}^\urcorner\rightarrow C^*_r(\NN^2_\urcorner)\overset{\pi_\urcorner}{\rightarrow} C^*_r(\ZZ^2)\rightarrow 0,\label{staircase.SES}
\end{equation}
where the quotient map $\pi_\urcorner$ takes $V_x\mapsto U_x$ and $V_y\mapsto U_y$.
\end{lemma}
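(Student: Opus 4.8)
The plan is to mimic the structure of the half-plane argument from Section \ref{sec:half.plane.Toeplitz}, adapted to the bumpy corner. First I would verify that $\pi_\urcorner$ is well-defined as a $*$-homomorphism onto $C^*_r(\ZZ^2)$. The natural way is to exhibit $C^*_r(\ZZ^2)$ as a quotient: one checks that modulo the closed ideal generated by the three commutators in Eq.\ \eqref{commutators}, the images of $V_x, V_y$ become commuting \emph{unitaries} (since $[V_x^*,V_x]=P_{F_2^\urcorner}$ and $[V_y^*,V_y]=P_{F_1^\urcorner}$ are in the ideal, the images are isometries with trivial defect, hence unitary, and the remaining commutator relations collapse), so by the universal property of $C^*_r(\ZZ^2)\cong C(\TT^2)$ there is a surjection from the quotient onto $C^*_r(\ZZ^2)$ sending the generators correctly. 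Conversely, because the truncation map $A\mapsto \hat A$ followed by $\pi_\urcorner$ recovers $A$ at the level of the generators $U_\gamma$, the surjection is in fact an isomorphism onto $C^*_r(\ZZ^2)$; equivalently, one can invoke that $V_\gamma$ differs from $U_\gamma$ by a compact-type correction supported near the corner, so the symbol map is faithful on the quotient.

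The heart of the lemma is showing that the commutator ideal $\mathcal{I}^\urcorner$ (by definition the smallest closed two-sided ideal with commutative quotient) equals the closed ideal $J$ generated by the three explicit elements in Eq.\ \eqref{commutators}. The inclusion $J\subseteq\mathcal{I}^\urcorner$ is immediate since each generator of $J$ is a commutator. For the reverse inclusion, I would argue that $C^*_r(\NN^2_\urcorner)/J$ is commutative: it is generated by the images of $V_x,V_y$, which as noted become commuting unitaries in the quotient, so the quotient is abelian and hence $\mathcal{I}^\urcorner\subseteq J$. This gives $\mathcal{I}^\urcorner=J=\ker\pi_\urcorner$ and simultaneously identifies the quotient, so the short exact sequence Eq.\ \eqref{staircase.SES} follows. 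The statement that $\pi_\urcorner(V_x)=U_x$, $\pi_\urcorner(V_y)=U_y$ is then just the definition of the quotient map.

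The main obstacle I anticipate is the \emph{faithfulness} direction — confirming that nothing beyond $J$ is killed, i.e.\ that $\pi_\urcorner$ does not have an unexpectedly larger kernel. One clean route: compose the canonical quotient $C^*_r(\NN^2_\urcorner)\to C^*_r(\NN^2_\urcorner)/J$ with the asserted surjection onto $C^*_r(\ZZ^2)$; then build a right inverse by using the truncation assignment $U_\gamma\mapsto V_\gamma$, which is a completely positive (indeed $*$-linear contractive) splitting, and observe that its composite with $\pi_\urcorner$ is the identity on generators, hence on all of $C^*_r(\ZZ^2)$. That $\pi_\urcorner$ is multiplicative (not just $*$-linear) is precisely what needs care, since truncation itself is not multiplicative; the point is that the \emph{defect} of multiplicativity lies in $J$, which is exactly the content of the commutator computations, so passing to the quotient repairs it. Alternatively one can cite \S 4 of \cite{CoDo}, where the analogous assertion for the standard quarter plane is proved, and note that the single extra point $(0,0)$ removed here only perturbs the ideal by rank-one and rank-finite corrections near the corner, not affecting the symbol quotient.
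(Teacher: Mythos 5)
Your strategy matches the paper's in spirit: both show that the quotient of $C^*_r(\NN^2_\urcorner)$ by the ideal $J$ generated by the commutators in Eq.~\eqref{commutators} is commutative and generated by two commuting unitaries, then identify that quotient with $C(\TT^2)\cong C^*_r(\ZZ^2)$ (the paper phrases this as a normal-ordering argument). Your two-inclusion argument that $\mathcal{I}^\urcorner=J$ — generators of $J$ are commutators, so $J\subseteq\mathcal{I}^\urcorner$; and $C^*_r(\NN^2_\urcorner)/J$ is abelian, so $\mathcal{I}^\urcorner\subseteq J$ — is clean and correct.

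There is, however, a concrete error in the identification step. The universal property of $C(\TT^2)$ as the universal $C^*$-algebra on two commuting unitaries produces a surjection in the \emph{opposite} direction to what you assert: it gives $\phi\colon C^*_r(\ZZ^2)\twoheadrightarrow C^*_r(\NN^2_\urcorner)/J$, not a surjection from the quotient onto $C^*_r(\ZZ^2)$. A commutative $C^*$-algebra generated by two unitaries is $C(X)$ for a closed $X\subseteq\TT^2$, and the substantive point — the faithfulness you correctly flag — is precisely to rule out $X\subsetneq\TT^2$, i.e.\ to show $\phi$ is injective. Because of the reversed direction, your proposed repair via the truncation splitting is circular as written: you invoke $\pi_\urcorner$ (composing it with $\sigma\colon U_\gamma\mapsto V_\gamma$) in order to establish that $\pi_\urcorner$ exists, but a right inverse only shows a map is surjective once it is already known to be well-defined. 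This step needs concrete input. Your fallback of quoting \S 4 of \cite{CoDo} for the standard $\NN^2$ case, plus the observation that removing $(0,0)$ only perturbs the ideal by finite-rank operators already in $J$, is a clean fix; alternatively one can show directly that conjugating by the translations $V_{(n,n)}$ annihilates $J$ in norm as $n\to\infty$ while preserving $\|\hat A\|=\|A\|$, so that $\hat A\in J$ forces $A=0$. For what it's worth, the paper's own proof is equally terse about why the normal-form identification $(V_x^*)^aV_x^b(V_y^*)^cV_y^d\leftrightarrow U_x^{b-a}U_y^{d-c}$ is injective rather than merely surjective, so you are in good company on the omission — but the reversed universal-property direction is a definite mistake and feeds the circularity.
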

\begin{proof}
Each product of the generators $V_x,V_x^*,V_y,V_y^*$ can be ``normal-ordered'', at the expense of some commutator-terms, into the form $(V_x^*)^aV_x^b(V_y^*)^cV_y^d$ for some (unique) non-negative integers $a,b,c,d$. Then after modding out the commutator terms, we can identify $(V_x^*)^aV_x^b(V_y^*)^cV_y^d\leftrightarrow U_x^{b-a}U_y^{d-c}\in C^*_r(\ZZ^2)$.
\end{proof}
\noindent
We wish to compute the $K$-theory 6-term exact sequence for Eq.\ \eqref{staircase.SES}.

 Let $\mathcal{I}_1^\urcorner$ and $\mathcal{I}_2^\urcorner$ be the ideals in $C^*_r(\NN^2_\urcorner)$ generated by the commutators $[V_y^*,V_y]=P_{F_1^\urcorner}$ and $[V_x^*,V_x]=P_{F_2^\urcorner}$ respectively. 
 
 \begin{lemma}\label{lem:commutator.ideal}
 The commutator ideal $\mathcal{I}^\urcorner={\rm ker}(\pi_\urcorner)$ can be written as
 $$\mathcal{I}^\urcorner=\mathcal{I}_1^\urcorner+\mathcal{I}_2^\urcorner,\qquad \mathcal{I}_1^\urcorner\cap\mathcal{I}_2^\urcorner=\mathcal{K}(\ell^2(\NN^2_\urcorner)).$$
 \end{lemma}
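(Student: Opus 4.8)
The plan is to analyze the two ideals $\mathcal{I}_1^\urcorner$ and $\mathcal{I}_2^\urcorner$ by understanding which rank-one operators (matrix units $|(m,n)\rangle\langle(m',n')|$ on $\ell^2(\NN^2_\urcorner)$) each of them contains, and then to verify the two claimed identities set-theoretically on the level of these ``building blocks'' before taking closures. First I would establish $\mathcal{I}^\urcorner = \mathcal{I}_1^\urcorner + \mathcal{I}_2^\urcorner$: the reverse inclusion $\supseteq$ is immediate since $P_{F_1^\urcorner}=[V_y^*,V_y]$ and $P_{F_2^\urcorner}=[V_x^*,V_x]$ are both commutators, hence lie in $\mathcal{I}^\urcorner$. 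For $\subseteq$, recall from the proof of Lemma~\ref{lem:staircase.commutator.ideal} that $\mathcal{I}^\urcorner$ is generated by the three commutators in Eq.~\eqref{commutators}; the first two generate $\mathcal{I}_1^\urcorner$ and $\mathcal{I}_2^\urcorner$ by definition, so it remains to show that the third, $[V_y^*,V_x]=|(1,0)\rangle\langle(0,1)|$, already lies in $\mathcal{I}_1^\urcorner+\mathcal{I}_2^\urcorner$. This is where one computes: $|(1,0)\rangle\langle(0,1)|$ can be produced from $P_{F_1^\urcorner}$ (which contains $|(1,0)\rangle\langle(1,0)|$ as a summand, since $(1,0)\in F_1^\urcorner$) by left-multiplying with a suitable element of $C^*_r(\NN^2_\urcorner)$, because the anticlockwise boundary-translation operator $w_\urcorner=V_y^*P_{F_2^\urcorner}+V_xP_{F_1^\urcorner}$ and its powers move any boundary basis vector to any other, so appropriate words in the generators carry the ``source'' $(1,0)$ to wherever needed and similarly adjust the ``target''. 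Concretely one exhibits $|(1,0)\rangle\langle(0,1)| = A\,P_{F_1^\urcorner}\,B$ (or a finite sum of such) with $A,B\in C^*_r(\NN^2_\urcorner)$; hence it lies in $\mathcal{I}_1^\urcorner$.

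Next I would identify each ideal concretely. The claim is effectively that $\mathcal{I}_i^\urcorner$ is the closed ideal of operators that are ``supported near the face $F_i^\urcorner$'' — more precisely, that $\mathcal{I}_1^\urcorner \cong \mathcal{K}(\ell^2(F_1^\urcorner))\otimes(\text{something})$ respecting the semi-infinite direction transverse to $F_1^\urcorner$, and likewise for $\mathcal{I}_2^\urcorner$. The key structural point is that, since $P_{F_1^\urcorner}$ together with the generators $V_x,V_y$ generates, inside $\mathcal{I}_1^\urcorner$, all matrix units $|(m,n)\rangle\langle(m',n')|$ with $n,n'$ bounded by how far $V_y^{*n}$ fails to kill the face — and because $V_y$ translates $F_1^\urcorner$ inward and $V_y^*$ kills it (Eq.~\eqref{commutators}: $V_y^*P_{F_1^\urcorner}=0$), the ``transverse coordinate'' to $F_1^\urcorner$ stays in $\ell^2(\NN)$, so $\mathcal{I}_1^\urcorner\cong \mathcal{K}(\ell^2(F_1^\urcorner))\otimes\mathcal{T}$ in the standard-quarter-plane analogy (after the bump correction). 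The upshot that I actually need is: $\mathcal{I}_1^\urcorner$ contains a matrix unit $|(m,n)\rangle\langle(m',n')|$ iff $(m,n)$ and $(m',n')$ both lie in the ``horizontal strip''—the set of sites reachable from $F_1^\urcorner$ by applying $V_y$ finitely many times—and symmetrically $\mathcal{I}_2^\urcorner$ contains $|(m,n)\rangle\langle(m',n')|$ iff both sites lie in the ``vertical strip'' reachable from $F_2^\urcorner$. Since $F_1^\urcorner$ lies along the $x$-axis (plus the extra point $(0,1)$) and $F_2^\urcorner$ along the $y$-axis (plus $(1,0)$), the two strips are each a full half-plane's worth only in one direction; but the point is that every matrix unit $|(m,n)\rangle\langle(m',n')|$ belongs to $\mathcal{I}_1^\urcorner$ (as one translates inward), so in fact $\mathcal{K}(\ell^2(\NN^2_\urcorner))\subseteq \mathcal{I}_i^\urcorner$ for $i=1,2$, and hence $\mathcal{K}\subseteq \mathcal{I}_1^\urcorner\cap\mathcal{I}_2^\urcorner$. (Indeed the excerpt already noted $P_{F_1}P_{F_2}$ is rank one and lands in $\mathcal{I}'$ in the standard case; the same argument via $P_{F_1^\urcorner}P_{F_2^\urcorner}$ works here.)

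For the reverse inclusion $\mathcal{I}_1^\urcorner\cap\mathcal{I}_2^\urcorner\subseteq\mathcal{K}(\ell^2(\NN^2_\urcorner))$, the plan is to use the quotient picture. Consider the composite $C^*_r(\NN^2_\urcorner)\to C^*_r(\NN^2_\urcorner)/\mathcal{K}$; under it, $V_x$ and $V_y$ become isometries whose images generate an algebra in which, I claim, the images of $\mathcal{I}_1^\urcorner$ and $\mathcal{I}_2^\urcorner$ intersect trivially. The cleanest route is to exhibit two $*$-homomorphisms: $\rho_1:C^*_r(\NN^2_\urcorner)\to \mathcal{T}\otimes C(\TT)$ (``forget the $x=0$ face'', i.e. realize the algebra on a half-plane $\ell^2(\NN\times\ZZ)$ by extending past the vertical boundary) whose kernel is exactly $\mathcal{I}_2^\urcorner$, and symmetrically $\rho_2$ with kernel $\mathcal{I}_1^\urcorner$; these are the quarter-plane analogues of the two ``edge restriction'' maps. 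Then $\mathcal{I}_1^\urcorner\cap\mathcal{I}_2^\urcorner = \ker\rho_2\cap\ker\rho_1 = \ker(\rho_1\oplus\rho_2)$, and one checks that $\rho_1\oplus\rho_2$ is injective on $C^*_r(\NN^2_\urcorner)/\mathcal{K}$ — equivalently that the only operators in $C^*_r(\NN^2_\urcorner)$ killed by both half-plane restrictions are compact, which holds because an operator invisible on both the ``$F_1$ side'' and the ``$F_2$ side'' of the quarter plane must decay away from both faces and hence be a norm-limit of finite-rank operators. I expect \emph{this last injectivity/compactness step to be the main obstacle}: making rigorous the assertion that $\ker\rho_1\cap\ker\rho_2 = \mathcal{K}$ requires either a careful faithful-representation argument (diagonalizing the action on $\ell^2(\NN^2_\urcorner)$ and tracking supports), or appealing to the identification $C^*_r(\NN^2_\urcorner)$ with a known semigroup $C^*$-algebra from \cite{Cuntz, DoHo} where the ideal structure is already worked out. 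Given the excerpt's remark that the standard-case analogue follows from \cite{CoDo, Cuntz}, I would in practice reduce to that case by noting $\NN^2_\urcorner$ differs from $\NN^2$ by one point, compute the effect of that single-site perturbation on $\mathcal{I}'_1\cap\mathcal{I}'_2=\mathcal{K}$ directly, and conclude.
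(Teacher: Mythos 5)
Your plan for $\mathcal{I}^\urcorner=\mathcal{I}_1^\urcorner+\mathcal{I}_2^\urcorner$ and for $\mathcal{K}(\ell^2(\NN^2_\urcorner))\subseteq\mathcal{I}_1^\urcorner\cap\mathcal{I}_2^\urcorner$ is correct and matches the paper's in spirit. (The paper is slightly slicker on the latter: it exhibits the rank-one projection $P_{\{(1,1)\}}=V_yP_{F_1^\urcorner}V_y^*\cdot V_xP_{F_2^\urcorner}V_x^*$ directly as a product of elements of $\mathcal{I}_1^\urcorner$ and $\mathcal{I}_2^\urcorner$, rather than chasing all matrix units; this also disposes of the third commutator $|(1,0)\rangle\langle(0,1)|$ immediately, since it is compact.)

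The genuine gap is where you say it is: the inclusion $\mathcal{I}_1^\urcorner\cap\mathcal{I}_2^\urcorner\subseteq\mathcal{K}$. Your proposed route — realize $\mathcal{I}_1^\urcorner$ and $\mathcal{I}_2^\urcorner$ as kernels of two half-plane restriction homomorphisms $\rho_1,\rho_2$ and argue $\ker(\rho_1\oplus\rho_2)=\mathcal{K}$ — is exactly the Mayer--Vietoris/pullback machinery the paper later uses for the irrational-slope case (cf.\ \S\ref{sec:irrational.computations} and \cite{Parkcones,Jiang}), and it is heavier than what this lemma needs. As you acknowledge, the injectivity of $\rho_1\oplus\rho_2$ modulo compacts is not free; establishing that $\ker\rho_i=\mathcal{I}_j^\urcorner$ on the nose, and that the joint kernel is exactly $\mathcal{K}$, is essentially equivalent to what you are trying to prove, so without an external citation the argument is circular. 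The paper's own proof is more elementary and closes the gap directly: take $C\in\mathcal{I}_1^\urcorner\cap\mathcal{I}_2^\urcorner$ and observe that an approximant built from finitely many words $AP_{F_1^\urcorner}A'$ (with $A,A'$ finite products of $V_x^{\pm},V_y^{\pm}$) necessarily has support bounded in the $y$-coordinate, because $P_{F_1^\urcorner}$ lives at $y\in\{0,1\}$ and only finitely many $V_y$'s appear; symmetrically, an approximant from $\mathcal{I}_2^\urcorner$ has support bounded in $x$. Combining these — e.g.\ conjugate a $y$-bounded $\mathcal{I}_1^\urcorner$-approximant by the cutoff projection $P_{\{n<N\}}=1-V_y^N(V_y^*)^N\in C^*_r(\NN^2_\urcorner)$, which stays in $\mathcal{I}_2^\urcorner$ by the ideal property, then re-approximate inside $\mathcal{I}_2^\urcorner$ by an $x$-bounded element — produces a finite-rank operator within any prescribed norm distance of $C$. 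This is the self-contained step your plan is missing.

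Two small corrections to your intermediate discussion. First, the tentative ``iff'' characterizing which matrix units lie in $\mathcal{I}_1^\urcorner$ is vacuous: as you then observe, every matrix unit lies in both ideals, so this cannot be used to distinguish the two. Second, the identification $\mathcal{I}_1^\urcorner\cong\mathcal{K}(\ell^2(F_1^\urcorner))\otimes\mathcal{T}$ has the roles of the factors reversed in the way you interpret them: operators along the face $F_1^\urcorner$ (the $x$-direction) are Toeplitz, while the compact factor acts in the transverse $y$-direction; what the paper actually isolates and uses (Eq.\ \eqref{eqn:ideal.modK}) is $\mathcal{I}_1^\urcorner/\mathcal{K}\cong C^*_r(\ZZ)\otimes\mathcal{K}$, which is the statement relevant for the $K$-theory.
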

 \begin{proof}
Observe that the projections onto the translated faces $(0,1)\cdot F_1$ and $(1,0)\cdot F_2$ are given by $V_yP_{F_1^\urcorner}V_y^*$ and $V_xP_{F_2^\urcorner}V_x^*$, so that they lie in $\mathcal{I}_1^\urcorner$ and $\mathcal{I}_2^\urcorner$ respectively; the product of these translated face-projections is the rank-1 projection $P_{(1,1)}\in \mathcal{I}_1^\urcorner\cap\mathcal{I}_2^\urcorner$. This shows that\footnote{In particular, the third commutator $[V_y^*,V_x]=|(1,0)\rangle\langle(0,1)|$ in Eq.\ \eqref{commutators} is already present in $\mathcal{I}_1^\urcorner\cap\mathcal{I}_2^\urcorner$ and does not generate anything outside $\mathcal{I}_1^\urcorner+\mathcal{I}_2^\urcorner$.} $\mathcal{K}(\ell^2(\NN^2_\urcorner))\subset\mathcal{I}_1^\urcorner\cap\mathcal{I}_2^\urcorner$. For the reverse inclusion $\mathcal{K}(\ell^2(\NN^2_\urcorner))\supset\mathcal{I}_1^\urcorner\cap\mathcal{I}_2^\urcorner$, let $C\in\mathcal{I}_1^\urcorner\cap\mathcal{I}_2^\urcorner$, then it can be norm-approximated by a $C'\in\mathcal{I}_1^\urcorner\cap\mathcal{I}_2^\urcorner$ that is expressible as a finite linear combination of terms of the form $Ap_{F_1^\urcorner}A'$ or of the form $Bp_{F_2^\urcorner}B'$, with $A,A',B,B'$ some finite product of the $V_x,V_x^*,V_y,V_y^*$. Such an operator $C'$ is zero on all $|(m,n)\rangle$ except for finitely many $m$ and for finitely many $n$, so $C'$ is a finite-rank approximation of $C$.
 \end{proof}

Up to commutator-terms involving $[V_y^*,V_x]$ or $[V_x^*,V_y]$ (which are compact operators), an element of $\mathcal{I}_1^\urcorner$ is approximated by a linear sum of terms like $$\underbrace{V_y^aP_{F_1^\urcorner}(V_y^*)^b}_{{\rm finite-rank\, in}\, y}\underbrace{(V_x^*)^{c_1}V_x^{d_1}\ldots (V_x^*)^{c_k}V_x^{d_k}}_{(V_x^*)^{c}V_x^{d}\,{\rm up\, to\, }P_{F_2^\urcorner}\,{\rm terms}}\quad\overset{{\rm mod}\,\mathcal{I}_1^\urcorner\cap\mathcal{I}_2^\urcorner}{\mapsto} \quad U_x^{d-c}\otimes |a\rangle\langle b|,$$
so after modding out by $\mathcal{K}(\ell^2(\NN^2_\urcorner))=\mathcal{I}_1^\urcorner\cap\mathcal{I}_2^\urcorner$, there is an isomorphism
$$\mathcal{I}_1^\urcorner/\mathcal{K}(\ell^2(\NN^2_\urcorner))\overset{\cong}{\rightarrow} C^*_r(\ZZ)\otimes\mathcal{K}(\ell^2(\NN)).$$
Similarly for $\mathcal{I}_2^\urcorner$ with the roles of $x$ and $y$ switched. Thus we have
\begin{eqnarray}
\mathcal{I}^\urcorner/\mathcal{K}(\ell^2(\NN^2_\urcorner))=(\mathcal{I}_1^\urcorner+\mathcal{I}_2^\urcorner)/(\mathcal{I}_1^\urcorner\cap\mathcal{I}_2^\urcorner)&\cong& \mathcal{I}_1^\urcorner/(\mathcal{I}_1^\urcorner\cap\mathcal{I}_2^\urcorner)\oplus \mathcal{I}_2^\urcorner/(\mathcal{I}_1^\urcorner\cap\mathcal{I}_2^\urcorner) \nonumber\\
&\cong & (C^*_r(\ZZ)\otimes \mathcal{K})\oplus(\mathcal{K}\otimes C^*_r(\ZZ)),\label{eqn:ideal.modK}
\end{eqnarray}
with quotient map $q$ explicitly given by
$$\mathcal{I}_1^\urcorner\ni V_y^aP_{F_1^\urcorner}(V_y^*)^b(V_x^*)^cV_x^d + \mathcal{K} \mapsto (U_x^{d-c}\otimes |a\rangle\langle b|,0), $$
\begin{equation}
\mathcal{I}_2^\urcorner\ni V_x^aP_{F_2^\urcorner}(V_x^*)^b(V_y^*)^cV_y^d + \mathcal{K} \mapsto (0,|a\rangle\langle b| \otimes U_y^{d-c}).\label{q.formula}
\end{equation}
\noindent
We have shown:
\begin{lemma}\label{lem:bumpy.SES}
There is a short exact sequence
\begin{equation} 0\rightarrow \mathcal{K}\rightarrow \mathcal{I}^\urcorner\overset{q}{\rightarrow}\mathcal{I}^\urcorner/\mathcal{K}\cong (C^*_r(\ZZ)\otimes\mathcal{K})\oplus(\mathcal{K}\otimes C^*_r(\ZZ))\rightarrow 0.\label{subideal.SES}
\end{equation}
\end{lemma}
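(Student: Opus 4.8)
The plan is to obtain the exact sequence \eqref{subideal.SES} as a bookkeeping consequence of Lemma \ref{lem:commutator.ideal} together with the normal-ordering discussion that produced \eqref{eqn:ideal.modK}--\eqref{q.formula}. By Lemma \ref{lem:commutator.ideal}, $\mathcal{K}=\mathcal{K}(\ell^2(\NN^2_\urcorner))=\mathcal{I}_1^\urcorner\cap\mathcal{I}_2^\urcorner$ is contained in $\mathcal{I}^\urcorner=\mathcal{I}_1^\urcorner+\mathcal{I}_2^\urcorner$, and being a closed two-sided ideal of $C^*_r(\NN^2_\urcorner)$ it is a fortiori a closed ideal of $\mathcal{I}^\urcorner$; hence $0\to\mathcal{K}\hookrightarrow\mathcal{I}^\urcorner\xrightarrow{q}\mathcal{I}^\urcorner/\mathcal{K}\to 0$ is automatically exact for the canonical quotient map $q$. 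The only substantive point is therefore the identification of the quotient $C^*$-algebra.

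For that I would proceed in two steps. First, invoke the elementary fact that for closed two-sided ideals $I,J$ in a $C^*$-algebra, $(I+J)/(I\cap J)\cong I/(I\cap J)\oplus J/(I\cap J)$: the images of $I$ and $J$ in $(I+J)/(I\cap J)$ are ideals with zero intersection whose sum is the whole quotient, so the quotient is their orthogonal direct sum. Applying this with $I=\mathcal{I}_1^\urcorner$, $J=\mathcal{I}_2^\urcorner$ gives $\mathcal{I}^\urcorner/\mathcal{K}\cong(\mathcal{I}_1^\urcorner/\mathcal{K})\oplus(\mathcal{I}_2^\urcorner/\mathcal{K})$. Second, identify $\mathcal{I}_1^\urcorner/\mathcal{K}\cong C^*_r(\ZZ)\otimes\mathcal{K}(\ell^2(\NN))$, and symmetrically $\mathcal{I}_2^\urcorner/\mathcal{K}\cong\mathcal{K}(\ell^2(\NN))\otimes C^*_r(\ZZ)$. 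Concretely: using the relations \eqref{commutators}, every word in $V_x,V_x^*,V_y,V_y^*$ lying in $\mathcal{I}_1^\urcorner$ normal-orders into the form $V_y^aP_{F_1^\urcorner}(V_y^*)^b(V_x^*)^cV_x^d$ modulo a finite sum of correction terms, each of which either contains the rank-one commutator $[V_y^*,V_x]$ or is of the form $A\,P_{F_2^\urcorner}\,A'$ with $A$ finite-rank in the $y$-variable, and so lies in $\mathcal{K}$. After closing up, $\mathcal{I}_1^\urcorner/\mathcal{K}$ is the closed span of the images of the normal forms; reading the $y$-part $V_y^aP_{F_1^\urcorner}(V_y^*)^b$ as the rank-one operator $|a\rangle\langle b|$ and the $x$-part $(V_x^*)^cV_x^d$ as the unitary $U_x^{d-c}$ (its $P_{F_2^\urcorner}$-defect being killed in the quotient) yields the explicit map \eqref{q.formula}, which one verifies is a well-defined surjective $*$-homomorphism with kernel exactly $\mathcal{K}$, hence an isomorphism onto $C^*_r(\ZZ)\otimes\mathcal{K}$. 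Assembling the three identifications gives \eqref{subideal.SES}.

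The step I expect to be the real obstacle is the well-definedness of the normal-ordering map \eqref{q.formula}: one must be sure that two reductions of the same operator to normal form produce the same data $(a,b,d-c)$ up to a compact error, and that the assignment is bounded, so that it descends to norm-closures. The cleanest remedy is not to define $q$ on monomials by fiat but to use the faithful representation of $C^*_r(\NN^2_\urcorner)$ on $\ell^2(\NN^2_\urcorner)\cong\ell^2(\NN)\otimes\ell^2(\NN)$ and to check directly that every element of $\mathcal{I}_1^\urcorner$ acts, modulo finite-rank operators, as an element of $C^*_r(\ZZ)\otimes\mathcal{K}(\ell^2(\NN))$ supported on finitely many $y$-rows, so that no choice of normal form enters; surjectivity is then clear since the normal-form monomials already hit a dense subset, and the kernel is $\mathcal{K}$ by Lemma \ref{lem:commutator.ideal}. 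Once $q$ is a bona fide $*$-homomorphism, its descent to an isometric isomorphism of quotients is automatic from standard $C^*$-theory.
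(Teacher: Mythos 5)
Your proposal is correct and follows essentially the same route as the paper: both reduce via Lemma \ref{lem:commutator.ideal} to the lattice identity $(\mathcal{I}_1^\urcorner+\mathcal{I}_2^\urcorner)/(\mathcal{I}_1^\urcorner\cap\mathcal{I}_2^\urcorner)\cong\mathcal{I}_1^\urcorner/\mathcal{K}\oplus\mathcal{I}_2^\urcorner/\mathcal{K}$ and then identify each summand by normal-ordering modulo $\mathcal{K}$, exactly as in the display preceding Eq.\ \eqref{eqn:ideal.modK} and the explicit formula \eqref{q.formula}. (One small slip in your ``cleanest remedy'': $\ell^2(\NN^2_\urcorner)$ is not canonically $\ell^2(\NN)\otimes\ell^2(\NN)$ since $\NN^2_\urcorner$ omits the origin, though this one-dimensional discrepancy is of course harmless modulo $\mathcal{K}$.)
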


The $K$-theory LES for Eq.\ \eqref{subideal.SES} is
\begin{equation}\label{subideal.LES}
\xymatrix{
{\overbrace{K_0(\mathcal{K})}^{\ZZ}} \ar[r]^{0} & {\overbrace{K_0(\mathcal{I}^\urcorner)}^{\ZZ\oplus\ZZ}} \ar[r]^{q_*} & {\overbrace{K_0(\mathcal{I}^\urcorner/\mathcal{K})}^{\ZZ\oplus\ZZ}} \ar[d]^{{\rm Exp}} \\
{\underbrace{K_1(\mathcal{I}^\urcorner/\mathcal{K})}_{\ZZ\oplus\ZZ}} \ar[u]^{{\rm Ind}={\rm -sum}} & {\underbrace{K_1(\mathcal{I}^\urcorner)}_{\ZZ}} \ar[l]_{q_*} & {\underbrace{K_1(\mathcal{K})}_{0}}\ar[l]_{0}
}
\end{equation}
Here, the index map can be deduced by its action on the representative generators of $K_1(\mathcal{I}^\urcorner/\mathcal{K})\cong K_1(C^*_r(\ZZ)\otimes\mathcal{K})\oplus K_1(\mathcal{K}\otimes C^*_r(\ZZ))$. These are $U_x\otimes |0\rangle\langle 0|$ for the first direct sum factor and $ |0\rangle\langle 0|\otimes U_y$ for the second factor. From Eq.\ \eqref{q.formula}, a lift of $U_x\otimes |0\rangle\langle 0|$ in $\mathcal{I}^\urcorner$ is $V_xP_{F_1^\urcorner}$, which is just the unilateral shift on $\ell^2(F_1)\cong\ell^2(\NN)$ which has Fredholm index $-1$. Thus ${\rm Ind}([U_x])=-[P_{\{(0,1)\}}]=-1$; similarly,  ${\rm Ind}([U_y])=-[P_{\{(1,0)\}}]=-1$. From Eq.\ \eqref{q.formula} applied to the LES, we also see that 
\begin{equation}
K_0(\mathcal{I}^\urcorner)\cong \ZZ [P_{F_1^\urcorner}]\oplus \ZZ[P_{F_2^\urcorner}],\label{eqn:staircase.K0.ideal.generators}.
\end{equation}
and that $K_1(\mathcal{I}^\urcorner)$ is identified inside $K_1(\mathcal{I}^\urcorner/\mathcal{K})\cong (\mathcal{K}\otimes C^*_r(\ZZ))\oplus(\mathcal{K}\otimes C^*_r(\ZZ))$ as ${\rm ker}({\rm Ind})\cong\ZZ$. The latter kernel has representative generator $[(U_x\otimes |0\rangle\langle 0|\,,\,|0\rangle\langle 0|\otimes U_y^*)]$, so that 
its lift 
\begin{equation}
w_\urcorner:=V_xP_{F_1^\urcorner}+V_y^*(P_{F_2^\urcorner}+P_{\{(1,1)\}})\in\mathcal{I}^\urcorner\label{eqn:staircase.cornering}
\end{equation}
represents the generator of $K_1(\mathcal{I}^\urcorner)\cong\ZZ$. This $w_\urcorner$ is illustrated in Fig.\ \ref{fig:staircase}, along with an alternative ``smoother'' representative
\begin{equation}
w'_\urcorner=V_yV_x(P_{F_1^\urcorner}-P_{\{(0,1)\}})V_y^*+V_xV_y^*P_{F_2^\urcorner}V_x^*.\label{alternative.cornering}
\end{equation}

We can now analyse the desired LES for the staircase boundary condition SES Eq.\ \eqref{staircase.SES}:
\begin{equation*}
\xymatrix{
{\overbrace{K_0(\mathcal{I}^\urcorner)}^{\ZZ [P_{F_1^\urcorner}]\oplus \ZZ[P_{F_2^\urcorner}]}} \ar[r]^{0} & {\overbrace{K_0(C^*_r(\NN^2_\urcorner))}^{?\oplus\ZZ[1]}} \ar[r]^{\pi_*} & {\overbrace{K_0(C^*_r(\ZZ^2))}^{\ZZ[1]\oplus\ZZ[\mathfrak{b}]}} \ar[d]^{{\rm Exp}} \\
{\underbrace{K_1(C^*_r(\ZZ^2))}_{\ZZ[U_x]\oplus\ZZ[U_y]}} \ar[u]^{{\rm Ind}}_{\cong} & {\underbrace{K_1(C^*_r(\NN^2_\urcorner))}_{?}} \ar[l]_{\pi_*=0} & {\underbrace{K_1(\mathcal{I}^\urcorner)}_{\ZZ[w_\urcorner]}}\ar[l]_{?}
}
\end{equation*}
The index map may be computed to be an isomorphism by considering generators (e.g.\ ${\rm Ind}([U_x])=-[P_{F_2^\urcorner}]$). To complete the diagram, we need to argue that $K_1(C^*_r(\NN^2_\urcorner))=0$. A priori, $K_1(C^*_r(\NN^2_\urcorner))$ should be given by the image of $K_1(\mathcal{I}^\urcorner)\cong\ZZ[w_\urcorner]$ under inclusion, which we now compute. We use the alternative generator $w'_\urcorner$ from Eq.\ \eqref{alternative.cornering}, which resembles the generator $w=\hat{U}_xP_{F_1}+\hat{U}_y^*P_{F_2}$ of $K_1(\mathcal{I}')\cong\ZZ$ encountered in Section \ref{sec:standard.quarter.plane}. There, we saw that $[w]\in K_1(C^*_r(\NN^2))=0$ necessarily trivialises\footnote{Note the convention that inside the unital algebra $C^*_r(\NN^2)$, $w$ is regarded as the identity operator everywhere away from the boundary.}. This means that inside $C^*_r(\NN^2)$, $w$ may be deformed through unitaries into the identity operator (after passing to unitisations and perhaps matrix algebras). The same deformation now performed on the shifted quarter-plane with corner at $(1,1)$ will have the effect of turning $w'_\urcorner$ into the identity operator in $C^*_r(\NN^2_\urcorner)$. The upshot is that $[w_\urcorner]=[w'_\urcorner]$ again trivialises when mapped into $K_1(C^*_r(\NN^2_\urcorner))$. Thus $K_1(C^*_r(\NN^2_\urcorner))=0$, and the above LES is completed as:

\begin{theorem}\label{thm:staircase.LES}
There is a long exact sequence,
\begin{equation}\label{staircase.LES}
\xymatrix{
{\overbrace{K_0(\mathcal{I}^\urcorner)}^{\ZZ [P_{F_1^\urcorner}]\oplus \ZZ[P_{F_2^\urcorner}]}} \ar[r]^{0} & {\overbrace{K_0(C^*_r(\NN^2_\urcorner))}^{\ZZ[1]}} \ar[r]^{\pi_*} & {\overbrace{K_0(C^*_r(\ZZ^2))}^{\ZZ[1]\oplus\ZZ[\mathfrak{b}]}} \ar[d]^{{\rm Exp}}_{[\mathfrak{b}]\mapsto[w_\urcorner]} \\
{\underbrace{K_1(C^*_r(\ZZ^2))}_{\ZZ[U_x]\oplus\ZZ[U_y]}} \ar[u]^{{\rm Ind}}_{\cong} & {\underbrace{K_1(C^*_r(\NN^2_\urcorner))}_{0}} \ar[l]_{\pi_*=0} & {\underbrace{K_1(\mathcal{I}^\urcorner)}_{\ZZ[w_\urcorner]}}\ar[l]
}
\end{equation}
\end{theorem}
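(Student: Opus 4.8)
The plan is to assemble the long exact sequence in Eq.\ \eqref{staircase.LES} from the pieces already established, filling in the three entries marked ``?'' in the preceding diagram: the groups $K_0(C^*_r(\NN^2_\urcorner))$, $K_1(C^*_r(\NN^2_\urcorner))$, and the connecting map $K_1(\mathcal{I}^\urcorner)\to K_0(C^*_r(\NN^2_\urcorner))$ (which turns out to be zero since the sequence splits at that point). The first step is purely formal: write down the six-term exact sequence associated to the short exact sequence Eq.\ \eqref{staircase.SES}, and substitute the already-known data, namely $K_0(\mathcal{I}^\urcorner)\cong\ZZ[P_{F_1^\urcorner}]\oplus\ZZ[P_{F_2^\urcorner}]$ and $K_1(\mathcal{I}^\urcorner)\cong\ZZ[w_\urcorner]$ from Eq.\ \eqref{eqn:staircase.K0.ideal.generators} and the surrounding discussion, together with $K_0(C^*_r(\ZZ^2))\cong\ZZ[1]\oplus\ZZ[\mathfrak{b}]$ and $K_1(C^*_r(\ZZ^2))\cong\ZZ[U_x]\oplus\ZZ[U_y]$.

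Next I would pin down the index map ${\rm Ind}\colon K_1(C^*_r(\ZZ^2))\to K_0(\mathcal{I}^\urcorner)$ on generators. The claim is that it is an isomorphism, and this is checked by lifting $U_x$ (resp.\ $U_y$) to the isometry $V_x$ (resp.\ $V_y$) in $C^*_r(\NN^2_\urcorner)$ and reading off ${\rm Ind}[U_x]=-[1-V_xV_x^*]=-[P_{F_2^\urcorner}]$ and ${\rm Ind}[U_y]=-[P_{F_1^\urcorner}]$ from the standard formula for the connecting map applied to a partial isometry lift; since $\{[P_{F_1^\urcorner}],[P_{F_2^\urcorner}]\}$ is the chosen basis of $K_0(\mathcal{I}^\urcorner)$, ${\rm Ind}$ is a bijection. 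Exactness at $K_1(C^*_r(\ZZ^2))$ then forces the map $K_1(C^*_r(\NN^2_\urcorner))\to K_1(C^*_r(\ZZ^2))$ to be zero, and exactness at $K_0(\mathcal{I}^\urcorner)$ forces the map $K_0(\mathcal{I}^\urcorner)\to K_0(C^*_r(\NN^2_\urcorner))$ to be zero (its image is the kernel of an injection). Dually, the ${\rm Exp}$ map $K_0(C^*_r(\ZZ^2))\to K_1(\mathcal{I}^\urcorner)$ has kernel equal to the image of $\pi_*$; since ${\rm Ind}$ being onto means the image of $\pi_*\colon K_0(C^*_r(\NN^2_\urcorner))\to K_0(C^*_r(\ZZ^2))$ is exactly the kernel of ${\rm Exp}$, and since $[1]$ is visibly in that image (it lifts to the unit of the unital algebra $C^*_r(\NN^2_\urcorner)$), I would conclude ${\rm Exp}[1]=0$ and deduce ${\rm Exp}[\mathfrak{b}]$ generates a subgroup whose quotient matches $K_0(C^*_r(\NN^2_\urcorner))/\ZZ[1]$; comparing with the standard quarter-plane computation of Eq.\ \eqref{standard.quarter.plane.LES} (or the explicit representative $w_\urcorner$ of Eq.\ \eqref{eqn:staircase.cornering}) gives ${\rm Exp}[\mathfrak{b}]=[w_\urcorner]$, an isomorphism onto $K_1(\mathcal{I}^\urcorner)$.

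The one genuinely non-formal step — and the main obstacle — is showing $K_1(C^*_r(\NN^2_\urcorner))=0$, equivalently (by exactness, once ${\rm Exp}$ is onto) that the inclusion $\mathcal{I}^\urcorner\hookrightarrow C^*_r(\NN^2_\urcorner)$ kills $[w_\urcorner]$ in $K_1$. Here I would follow the deformation argument indicated before the theorem statement: replace $w_\urcorner$ by the homotopic ``smoother'' representative $w'_\urcorner$ of Eq.\ \eqref{alternative.cornering}, observe that $w'_\urcorner$ has literally the same local form as the generator $w=\hat{U}_xP_{F_1}+\hat{U}_y^*P_{F_2}$ of $K_1(\mathcal{I}')$ for the \emph{unbumped} standard quarter-plane (merely translated so its corner sits at $(1,1)$), and recall from Section \ref{sec:standard.quarter.plane} that there $[w]=0$ in $K_1(C^*_r(\NN^2))=0$, i.e.\ $w$ is connected to the identity through unitaries in (matrix algebras over, the unitisation of) $C^*_r(\NN^2)$. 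Since that nullhomotopy only involves the boundary-face shift structure near a single corner, which is identical in $C^*_r(\NN^2_\urcorner)$, transporting it yields a nullhomotopy of $w'_\urcorner$ inside $C^*_r(\NN^2_\urcorner)$. Hence $[w_\urcorner]=[w'_\urcorner]=0$ in $K_1(C^*_r(\NN^2_\urcorner))$, so the image of $K_1(\mathcal{I}^\urcorner)$ is trivial, and by exactness (the incoming map from $K_1(\mathcal{K})=0$ and the outgoing ${\rm Ind}$ being injective on what survives) $K_1(C^*_r(\NN^2_\urcorner))=0$. The subtlety to handle carefully is making the phrase ``the same deformation performed on the shifted quarter-plane'' precise — i.e.\ exhibiting an explicit $*$-isomorphism between the corner-neighbourhood subalgebras, or more cleanly, noting that $C^*_r(\NN^2_\urcorner)$ and $C^*_r(\NN^2)$ are Morita-equivalent (both being $\mathcal{T}\otimes\mathcal{T}$ up to the removed corner point, which is a compact perturbation) so their $K$-theories agree — and then the computation of Eq.\ \eqref{standard.quarter.plane.LES} transfers verbatim.
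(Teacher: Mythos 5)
Your proposal follows the paper's argument essentially verbatim: compute $\mathrm{Ind}$ on the generators $U_x,U_y$ to see it is an isomorphism $\ZZ^2\to\ZZ[P_{F_1^\urcorner}]\oplus\ZZ[P_{F_2^\urcorner}]$, then show $[w_\urcorner]=[w_\urcorner']$ dies in $K_1(C^*_r(\NN^2_\urcorner))$ by transporting the nullhomotopy of $w$ from the unbumped $C^*_r(\NN^2)$ to the shifted corner, whence the six-term sequence closes up as stated. Your closing remark that $C^*_r(\NN^2_\urcorner)$ is Morita equivalent to $C^*_r(\NN^2)$ is in fact a cleaner way to make the transport step rigorous — one checks directly that $C^*_r(\NN^2_\urcorner)=p_0\,C^*_r(\NN^2)\,p_0$ with $p_0=1-P_{\{(0,0)\}}$ a \emph{full} projection, so this is a full corner rather than a ``compact perturbation'' as you phrase it, and the resulting morphism of short exact sequences then transfers the standard quarter-plane LES.
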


\subsubsection{General imperfect quarter-plane boundaries}\label{sec:standard.imperfect}

For general quarter-plane boundary conditions, say with multiple steps of varying length, bumps, angles etc., essentially the same arguments can be used. Let us sketch how this works.

Consider the standard quarter-plane, with imperfections confined within a square $[-R,R]\times[-R,R]$, as illustrated in Fig.\ \ref{fig:bumpy.corner}. This models, near its corner, a material with imperfect boundaries. Let $\NN^2_\urcorner$ now denote the subset of lattice points included in the material. As before, we can construct $C^*_r(\NN^2_\urcorner)\subset \mathcal{B}(\ell^2(\NN^2_\urcorner))$, generated by the truncations $V_\gamma$ of the $U_\gamma, \gamma\in\NN^2$ to $\ell^2(\NN^2_\urcorner)$. A little thought shows that this $C^*_r(\NN^2_\urcorner)$ does contain all the truncated $V_\gamma, \gamma\in \ZZ^2$ --- write $\gamma=\gamma_-^{-1}\gamma_+$ for some $\gamma_\pm\in S$ with $\gamma_+$ ``large enough'' to ensure $U_{\gamma_+}\circ\iota=\iota\circ p \circ U_{\gamma_+}\circ\iota\equiv V_{\gamma_+}$, then we can write $V_\gamma=V_{\gamma_-}^*V_{\gamma_+}$.

Up to some finite-rank projection, the commutator $[V_x^*,V_x]$ is again a projection onto a vertical face $F_2$, and similarly for $[V_y^*,V_y]$ (see Fig. \ref{fig:bumpy.corner}). Sufficiently shifted versions of these two commutators multiply into a rank-one projection, since the shifted faces eventually intersect at a single point (see Fig.\ \ref{fig:bumpy.corner}). So the compact operators are in the the commutator ideal $\mathcal{I}^\urcorner\subset C^*_r(\NN^2_\urcorner)$. For $\gamma=(m,n)\in\NN^2$, the difference $V_{(m,n)}-V_x^mV_y^n$ is at most finite-rank. This means that $C^*_r(\NN^2_\urcorner)$ is already generated from the basic partial isometries $V_x$ and $V_y$. As before we can define $\mathcal{I}_1^\urcorner$ and $\mathcal{I}_2^\urcorner$ to be the ideals in $C^*_r(\NN^2_\urcorner)$ generated respectively by $P_{F_1},P_{F_2}$, and we have $\mathcal{K}\subset\mathcal{I}_1^\urcorner\cap\mathcal{I}_2^\urcorner$ (the other basic commutator $[V_y^*,V_x]$ is finite-rank, so does not generate anything extra).

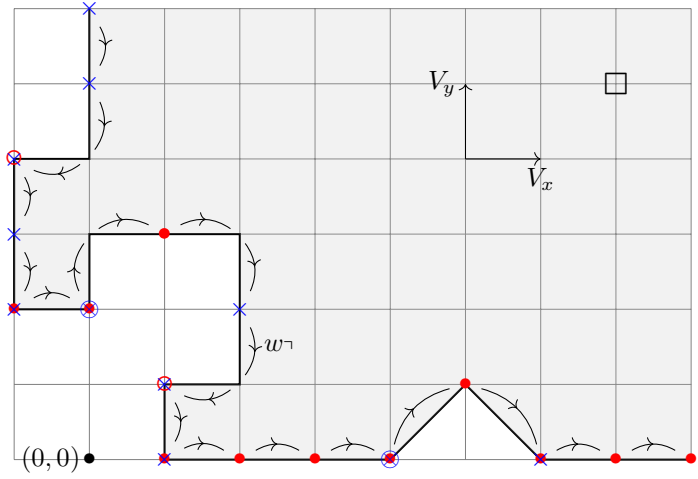
\begin{figure}
\begin{tikzpicture}
\draw[help lines] (0,0) grid (9,6);
\draw[thick] (1,6) -- (1,4) -- (0,4) -- (0,2) -- (1,2) -- (1,3) -- (3,3) -- (3,1) -- (2,1) -- (2,0) -- (5,0) -- (6,1) -- (7,0) -- (9,0);
\fill[lightgray, opacity=0.2] (1,6) -- (1,4) -- (0,4) -- (0,2) -- (1,2) -- (1,3) -- (3,3) -- (3,1) -- (2,1) -- (2,0) -- (5,0) -- (6,1) -- (7,0) -- (9,0) -- (9,6) -- (0,6);

\node[red] at (0,2) {$\bullet$};
\node[red] at (1,2) {$\bullet$};
\node[red] (i) at (2,3) {$\bullet$};
\node (j) at (3,3) {};
\node[red] at (2,3) {$\bullet$};
\node[red] at (2,0) {$\bullet$};
\node[red] (o) at (3,0) {$\bullet$};
\node[red] (p) at (4,0) {$\bullet$};
\node[red] (q) at (5,0) {$\bullet$};
\node[red] (r) at (6,1) {$\bullet$};
\node[red] (s) at (7,0) {$\bullet$};
\node[red] (t) at (8,0) {$\bullet$};
\node[red] (u) at (9,0) {$\bullet$};
\node[red] at (2,1) {{\Large $\circ$}};
\node[red] at (0,4) {{\Large $\circ$}};

\node[blue] (b) at (1,5) {$\times$};
\node[blue] (a) at (1,6) {$\times$};
\node[blue] (c) at (1,4) {};
\node[blue] (d) at (0,4) {$\times$};
\node[blue] (e) at (0,3) {$\times$};
\node[blue] (f) at (0,2) {$\times$};
\node[blue] (k) at (3,2) {$\times$};
\node (l) at (3,1) {};
\node[blue] (m) at (2,1) {$\times$};
\node[blue] (n) at (2,0) {$\times$};
\node[blue] at (7,0) {$\times$};
\node[blue] (g) at (1,2) {$\otimes$};
\node (h) at (1,3) {};
\node[blue] at (5,0) {$\otimes$};

\node at (8,5) {{\Large $\Box$}};

\path
    (a) edge[bend left, ->-] node [right] {} (b)
    (b) edge[bend left, ->-] node [right] {} (c)
    (c) edge[bend left, ->-] node [right] {} (d)
    (d) edge[bend left, ->-] node [right] {} (e)
    (e) edge[bend left, ->-] node [right] {} (f)
    (f) edge[bend left, ->-] node [right] {} (g)
    (g) edge[bend left, ->-] node [right] {} (h)
    (h) edge[bend left, ->-] node [right] {} (i)
    (i) edge[bend left, ->-] node [right] {} (j)
    (j) edge[bend left, ->-] node [right] {} (k)
    (k) edge[bend left, ->-] node [right] {} (l)
    (l) edge[bend left, ->-] node [right] {} (m)
    (m) edge[bend left, ->-] node [right] {} (n)
    (n) edge[bend left, ->-] node [right] {} (o)
    (o) edge[bend left, ->-] node [right] {} (p)
    (p) edge[bend left, ->-] node [right] {} (q)
    (q) edge[bend left, ->-] node [right] {} (r)
    (r) edge[bend left, ->-] node [right] {} (s)
    (s) edge[bend left, ->-] node [right] {} (t)
    (t) edge[bend left, ->-] node [right] {} (u);

\node[right] at (3.2,1.5) {$w_\urcorner$};

\node (A) at (4,6) {};
\node (B) at (4,5) {};
\node (C) at (4,4) {};
\node (D) at (5,4) {};
\node (E) at (6,4) {};
\node (F) at (7,4) {};
\node (G) at (8,4) {};
\node (H) at (9,4) {};

\draw[->] (6,4) -- (6,5) {};
\draw[->] (6,4) -- (7,4) {};
\node[below] at (7,4) {$V_x$};
\node[left] at (6,5) {$V_y$};

\node at (1,0) {$\bullet$};
\node[left] at (1,0) {$(0,0)$};

\end{tikzpicture}
\caption{Possible imperfection of the boundary near the corner of a quarter-plane. The horizontal and vertical face projections are indicated by the red $\red{\bullet}$ and blue $\blue{\times}$ respectively, and are the respective commutators $[V_x^*,V_x]$, $[V_x^*,V_x]$ up to a projection onto some finite set of points indicated by ${\Large{\red{\circ}}}$ and ${\Large{\blue{\circ}}}$. Sufficiently shifted versions of the two face projections multiply into a rank-1 projection (${\Large{\Box}}$).}\label{fig:bumpy.corner}
\end{figure}

With the above definitions and considerations, we may check that Lemmas \ref{lem:staircase.commutator.ideal}, \ref{lem:commutator.ideal}, \ref{lem:bumpy.SES}, and the computations leading to Theorem \ref{thm:staircase.LES}, continue to hold in the setting of quarter-planes with imperfect boundaries.

\subsection{Rational slope quarter-plane $C^*$-algebra}\label{sec:quarter.plane.semigroup}
Next, consider a material occupying a closed convex cone $C$ in the Euclidean plane which is pointed and nondegenerate (to exclude the half-plane and half-line cases), whose corner is taken to be at the origin. We also call $C$ a \emph{quarter-plane}. The lattice points in the material are labelled by the intersection $S=C\cap \ZZ^2$, which is a subsemigroup of $\ZZ^2$ (with identity). 

We again have the inclusion $\iota:\ell^2(S)\rightarrow\ell^2(\ZZ^2)$ and projection $p=\iota^*:\ell^2(\ZZ^2)\rightarrow\ell^2(S)$, and can define the truncated translation operators $\hat{U}_\gamma=p\circ U_\gamma\circ\iota, \gamma\in \ZZ^2$. Generalising Definition \ref{defn:standard.quarter.plane}, we have

\begin{definition}\label{defn:general.quarter.plane}
Let $S=C\cap\ZZ^2\subset\ZZ^2$ be the subsemigroup associated to a (pointed, nondegenerate) closed convex cone $C$ as above. We define $C^*_r(S)$ to be the unital $C^*$-subalgebra of $\mathcal{B}(\ell^2(S))$ generated by $\hat{U}_\gamma, \gamma\in S$.
\end{definition}
\noindent
Because $S$ generates $\ZZ^2$, any $\gamma\in\ZZ^2$ can be written as $\gamma=\gamma_-^{-1}\gamma_+$ with $\gamma_\pm\in S$, so that $\hat{U}_\gamma=p\circ U_{\gamma_-}^*U_{\gamma_+}\circ\iota =p\circ U_{\gamma_-}^*\circ\iota\circ p\circ U_{\gamma_+}\circ \iota =\hat{U}_{\gamma_-}^*\hat{U}_{\gamma_+}$. Thus we could equivalently define $C^*_r(S)$ to be the $C^*$-algebra generated by $\hat{U}_\gamma, \gamma\in\ZZ^2$.

\vspace{0.5em}

\emph{In this subsection, we restrict to quarter-planes with rational slopes} (for irrational slopes, see Section \ref{sec:irrational.computations}). Formally, this means that the cone $C$ is \emph{integral}, i.e.\ its two extremal rays have rational slopes (possibly $\pm \infty$). Then $S=C\cap\ZZ^2$ is finitely-generated, generates $\ZZ^2$ (i.e.\ $S-S=\ZZ^2$), and saturated (i.e.\ if some positive multiple of $\gamma\in\ZZ^2$ lies in $S$, then already $\gamma\in S$), e.g.\ \cite{Neeb} Lemma II.5. We are in a special case of \cite{Cuntz}, and can compute the $K$-theory of the semigroup $C^*$-algebra $C^*_r(S)$. 

\begin{remark}
The $K$-theory of $C^*_r(S)$ was first studied in \cite{ParkThesis, ParkSchochet}, while recent work of \cite{Cuntz} provides a more direct computation in the rational slope case. However, \cite{Cuntz} is primarily concerned with finitely-generated subsemigroups $S\subset\ZZ^2$ without the saturation assumption. In the rest of this subsection, we distil from \cite{Cuntz} a minimal path to $K_\bullet(C^*_r(S))$ (Theorem \ref{thm:big.LES}), for the (automatically saturated) $S$ that appear in our physical problem.
\end{remark}

The subset of lattice points $F_i$ lying on an extremal ray are given by $n(x_i,y_i), n\in\NN$ for some integers $x_i,y_i$, $i=1,2$. As in \cite{Cuntz}, we call $F_1,F_2$ the two \emph{faces} of $S$, and $a_i=(x_i,y_i)\in\ZZ^2$ the two \emph{asymptotic generators} of $S$, ordering them in such a way that the following integral matrices have positive determinant (so $F_1$ rotates anticlockwise onto $F_2$ by an angle smaller than $\pi$):
\begin{equation}
M=\begin{pmatrix}y_2 & -x_2 \\ -y_1 & x_1\end{pmatrix},\qquad M^\perp=\begin{pmatrix}x_1 & x_2 \\ y_1 & y_2\end{pmatrix}.
\end{equation}
Note that ${\rm det}(M)={\rm det}(M^\perp)$ and $MM^\perp=M^\perp M=({\rm det}\, M)\mathbf{1}$, and equals the number of lattice points in the parallelogram defined by $a_1\wedge a_2$.

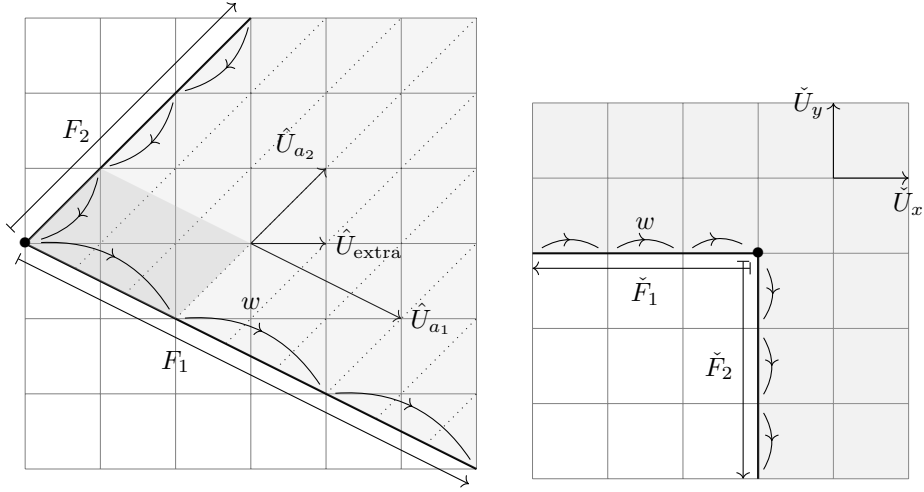
\begin{figure}
\begin{tikzpicture}
\draw[help lines] (0,0) grid (6,6);
\draw[thick] (6,0) -- (4,1) -- (2,2) -- (0,3) -- (1,4) -- (2,5) -- (3,6);

\draw[thin, dotted] (0.67,2.67) -- (4,6);
\draw[thin, dotted] (1.33,2.33) -- (5,6);
\draw[thin, dotted] (2,2) -- (6,6);
\draw[thin, dotted] (2.67,1.67) -- (6,5);
\draw[thin, dotted] (3.33,1.33) -- (6,4);
\draw[thin, dotted] (4,1) -- (6,3);
\draw[thin, dotted] (4.67,0.67) -- (6,2);
\draw[thin, dotted] (5.33,0.33) -- (6,1);

\draw [->] (3,3) -- (5,2);
\draw [->] (3,3) -- (4,4);
\draw [->] (3,3) -- (4,3);

\draw [|->] (-0.2,3.2) -- (2.8,6.2);
\draw [|->] (-0.1,2.8) -- (5.9,-0.2);

\fill[lightgray, opacity=0.15] (0,3) -- (6,0) -- (6,6) -- (3,6) -- (0,3);
\fill[lightgray, opacity=0.3] (0,3) -- (2,2) -- (3,3) -- (1,4) -- (0,3);

\node (a) at (3,6) {};
\node (b) at (2,5) {};
\node (c) at (1,4) {};
\node (d) at (0,3) {$\bullet$};
\node (e) at (2,2) {};
\node (f) at (4,1) {};
\node (g) at (6,0) {};

\path
    (a) edge[bend left, ->-] node [right] {} (b)
    (b) edge[bend left, ->-] node [right] {} (c)
    (c) edge[bend left, ->-] node [right] {} (d)
    (d) edge[bend left, ->-] node [right] {} (e)
    (e) edge[bend left, ->-] node [right] {} (f)
    (f) edge[bend left, ->-] node [right] {} (g);

\node [above] at (3,2) {$w$};

\node [left] at (1,4.5) {$F_2$};
\node [below] at (2,1.7) {$F_1$};

\node[right] at (5,2) {$\hat{U}_{a_1}$};
\node[above left] at (4,4) {$\hat{U}_{a_2}$};
\node[right] at (4,3) {$\hat{U}_{\rm extra}$};

\end{tikzpicture}
\hspace{1em}
\begin{tikzpicture}
\draw[help lines] (-3,-3) grid (2,2);
\draw[thick] (-3,0) -- (0,0);
\draw[thick] (0,0) -- (0,-3);
\fill[lightgray, opacity=0.2] (-3,0) -- (0,0) -- (0,-3) -- (2,-3) -- (2,2) -- (-3,2) -- (-3,0);

\node (a) at (-3,0) {};
\node (b) at (-2,0) {};
\node (c) at (-1,0) {};
\node (d) at (0,0) {$\bullet$};
\node (e) at (0,-1) {};
\node (f) at (0,-2) {};
\node (g) at (0,-3) {};
\path
    (a) edge[bend left, ->-] node [right] {} (b)
    (b) edge[bend left, ->-] node [right] {} (c)
    (c) edge[bend left, ->-] node [right] {} (d)
    (d) edge[bend left, ->-] node [right] {} (e)
    (e) edge[bend left, ->-] node [right] {} (f)
    (f) edge[bend left, ->-] node [right] {} (g);

\draw[->] (1,1) -- (1,2);
\draw[->] (1,1) -- (2,1);
\node[below] at (2,1) {$\check{U}_x$};
\node[left] at (1,2) {$\check{U}_y$};

\draw[|->] (-0.1,-0.2) -- (-3,-0.2);
\draw[|->] (-0.2,-0.1) -- (-0.2,-3);
\node[below] at (-1.5,-0.2) {$\check{F}_1$};
\node[left] at (-0.2,-1.5) {$\check{F}_2$};

\node[above] at (-1.5,0.2) {$w$};

\end{tikzpicture}
\caption{(L) Quarter-plane with rational slope faces. A fundamental domain is indicated by the dark parallelogram. A foliation of $S$ by translates of face $F_2$ is indicated by the dotted lines. (R) A concave quarter-plane.}\label{fig:rational.angle.concave}
\end{figure}

The range projections of the partial isometries obtained as all possible products of $\hat{U}_\gamma, \hat{U}_\gamma^*, \gamma\in S$
commute among each other, generating a commutative $C^*$-subalgebra $D\subset C^*_r(S)$. For each subset $Y\subset S$, denote by $P_Y$ the projection from $\ell^2(S)$ onto $\ell^2(Y)$. It is shown in \cite{Cuntz}, Lemma 7.3.2, that $P_{F_1},P_{F_2}\in  D\subset C^*_r(S)$ for the two face projections; specifically, for $i=1,2$ one may construct $\gamma_i\in\ZZ^2$ such that $P_{F_i}=1-\hat{U}_{\gamma_i}\hat{U}_{\gamma_i}^*$. For example, in Fig.\ \ref{fig:rational.angle.concave} we could take $\gamma_1=(-1,1)$ and $\gamma_2=(0,-1)$. Thus we may define the ideals $\mathcal{I}_1, \mathcal{I}_2\subset C^*_r(S)$ generated respectively by $P_{F_1},P_{F_2}$. Remarkably, we have, for every $S$ as Definition \ref{defn:general.quarter.plane},
\begin{lemma}[cf.\ 7.2.10, 7.3.6 of \cite{Cuntz}]\label{quarter.plane.SES}
$\mathcal{I}_1\cap\mathcal{I}_2=\mathcal{K}(\ell^2(S))$, and $C^*_r(S)/(\mathcal{I}_1+\mathcal{I}_2)\cong C^*_r(\ZZ^2)$ via the map $\pi:\hat{U}_\gamma\mapsto U_\gamma$. Thus, writing $\mathcal{I}:=\mathcal{I}_1+\mathcal{I}_2$, there is a short exact sequence of $C^*$-algebras,
\begin{equation}
0\rightarrow \mathcal{I}\rightarrow C^*_r(S)\overset{\pi}{\rightarrow} C^*_r(\ZZ^2)\rightarrow 0.  \label{general.quarter.plane.SES}
\end{equation}
\end{lemma}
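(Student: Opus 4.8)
Both assertions are the rational-slope specialisations of \cite{Cuntz} (7.2.10, 7.3.6). The plan is to give a reasonably self-contained route, taking for granted only that $P_{F_1},P_{F_2}\in C^*_r(S)$ (\cite{Cuntz}, Lemma 7.3.2) and that $S-S=\ZZ^2$.

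\emph{The quotient.} First I would produce a surjective $\ast$-homomorphism $\pi\colon C^*_r(S)\to C^*_r(\ZZ^2)$ with $\hat U_\gamma\mapsto U_\gamma$. Fix $\gamma_0\in\ZZ^2$ in the interior of $C$; then $\ZZ^2=\bigcup_n(-n\gamma_0+S)$ is an increasing union, and conjugation by the unitary $U_{n\gamma_0}$ identifies $C^*_r(S)$ on $\ell^2(S)$ with the compression algebra on $\ell^2(-n\gamma_0+S)$, under which $\hat U_\delta$ corresponds to $p_nU_\delta\iota_n$. Since $p_n\to 1$ strongly, a finite polynomial in the $\hat U_\delta$ has norm at least that of the corresponding polynomial in the $U_\delta$; conjugation being isometric, this yields $\|\pi(a)\|\le\|a\|$ and hence the desired surjection. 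As $\pi(P_{F_i})=1-U_{\gamma_i}U_{\gamma_i}^*=0$ we get $\mathcal{I}\subseteq\ker\pi$ and an induced surjection $\bar\pi\colon C^*_r(S)/\mathcal{I}\to C^*_r(\ZZ^2)$. For the converse I would check that in $C^*_r(S)/\mathcal{I}$ the classes $u_\gamma:=[\hat U_\gamma]$ are isometries with $u_\gamma u_\delta=u_{\gamma+\delta}$ that are in fact \emph{unitary}: the range defect $1-\hat U_\gamma\hat U_\gamma^*$ is a projection onto a bounded neighbourhood of $F_1\cup F_2$ inside $S$, and such a projection lies in $\mathcal{I}_1+\mathcal{I}_2=\mathcal{I}$ after translating $P_{F_1},P_{F_2}$ by the maps $\hat U_\eta(\cdot)\hat U_\eta^*$ and absorbing a finite-rank remainder into $\mathcal{K}(\ell^2(S))\subseteq\mathcal{I}_i$. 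A unitary representation of $S$ extends canonically to one of $\ZZ^2=S-S$, so amenability of $\ZZ^2$ and the universal property of $C^*_r(\ZZ^2)=C^*(\ZZ^2)$ furnish $\sigma\colon C^*_r(\ZZ^2)\to C^*_r(S)/\mathcal{I}$ with $U_\gamma\mapsto u_\gamma$. Since $\bar\pi\sigma$ and $\sigma\bar\pi$ are the identity on generators, $\bar\pi$ is an isomorphism; in particular $\ker\pi=\mathcal{I}$ and \eqref{general.quarter.plane.SES} is exact.

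\emph{The intersection.} For the inclusion $\mathcal{K}(\ell^2(S))\subseteq\mathcal{I}_1\cap\mathcal{I}_2$ I would use nondegeneracy of $C$: translating the ray $F_1$ far enough into $S$ and the ray $F_2$ far enough into $S$ makes them cross in a single lattice point, so a product $(\hat U_\gamma P_{F_1}\hat U_\gamma^*)(\hat U_\delta P_{F_2}\hat U_\delta^*)=P_{\gamma+F_1}P_{\delta+F_2}$ is a rank-one projection lying in $\mathcal{I}_1\cap\mathcal{I}_2$; conjugating it by words in the $\hat U_\eta,\hat U_\eta^*$ and using $S-S=\ZZ^2$ produces every $P_{\{s\}}$, $s\in S$, together with the matrix units between them, hence all of $\mathcal{K}(\ell^2(S))$. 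For the reverse inclusion I would first observe that any $a\in\mathcal{I}_i$ is, up to arbitrarily small norm error, supported (in range and co-range) within a bounded transverse distance $L$ of $F_i$: approximate $a$ by a finite sum $\sum_kA_kP_{F_i}B_k$ with $A_k,B_k$ finite linear combinations of words in the generators, and note that a word moves each basis vector a bounded distance while $P_{F_i}$ lands on the ray. Consequently $\|Q_i^La\|,\|aQ_i^L\|\to0$, where $Q_i^L$ is the diagonal projection onto $\{s\in S:\mathrm{dist}(s,F_i)>L\}$. Now take $a\in\mathcal{I}_1\cap\mathcal{I}_2$: by nondegeneracy the set $B^L:=\{s\in S:\mathrm{dist}(s,F_1)\le L\text{ and }\mathrm{dist}(s,F_2)\le L\}$ is \emph{finite}, so $P^L:=P_{B^L}$ has finite rank, while $1-P^L=Q_1^L+Q_2^L-Q_1^LQ_2^L$ because these are commuting diagonal projections. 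Hence $\|a-P^LaP^L\|\le\|(1-P^L)a\|+\|a(1-P^L)\|\le 3\bigl(\|Q_1^La\|+\|Q_2^La\|\bigr)\to0$, so $a$ is a norm-limit of finite-rank operators, i.e.\ $a\in\mathcal{K}(\ell^2(S))$.

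\emph{Expected main obstacle.} The only genuinely non-formal ingredients are the two pieces of plane geometry — that sufficiently translated faces meet in exactly one lattice point, and that bounded transverse neighbourhoods of $F_1$ and of $F_2$ intersect in a finite set — both of which rest on $C$ being pointed and nondegenerate and are precisely what fails for a half-plane (and are, indirectly, why the corner angle matters later in \S\ref{sec:irrational.computations}). Everything else — that $C^*_r(S)$ is the closure of polynomials in the $\hat U_\gamma$, that words spread supports boundedly, and the input $P_{F_i}\in C^*_r(S)$ from \cite{Cuntz} — is routine bookkeeping and diagram chasing.
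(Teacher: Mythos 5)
Your proof is correct and follows essentially the same route as the paper's (which in turn follows \cite{Cuntz} 7.2.10, 7.3.6): show that rank-one projections arise from products of translated face projections to get $\mathcal{K}\subseteq\mathcal{I}_1\cap\mathcal{I}_2$, use finiteness of the region at bounded transverse distance from both faces (a consequence of pointedness and nondegeneracy) to get the reverse inclusion, and identify the quotient by observing that the range defects $1-\hat U_\gamma\hat U_\gamma^*$ are projections onto finite unions of translated faces lying in $\mathcal{I}$. You supply several details the paper elides — an explicit construction of $\pi$ via compression by $U_{n\gamma_0}$ and strong convergence of the projections $p_n$, the amenability/universal-property argument furnishing the inverse $\sigma$, and a careful phrasing of the finite-rank approximation in terms of commuting diagonal strip projections $Q_i^L$ (the paper's corresponding sentence in the proof of Lemma~\ref{lem:commutator.ideal} is only a sketch of this) — but the strategy and the two geometric inputs you highlight as the main obstacles are exactly what the paper relies on.
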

\begin{proof}
$P_{F_1}P_{F_2}\in \mathcal{I}_1\cap\mathcal{I}_2$ is the rank-1 projection $P_{\{0,0\}}$ onto the origin, so $\mathcal{I}_1\cap\mathcal{I}_2=\mathcal{K}(\ell^2(S))$ follows from the same argument as in the proof of Lemma \ref{lem:commutator.ideal}.
For each $\gamma\in S$, the complement $S\setminus (\gamma\cdot S)$ is a union of finitely many translates of the faces $F_1, F_2$, so the projection onto this complement is in $\mathcal{I}=\mathcal{I}_1+\mathcal{I}_2$. Thus $\hat{U}_\gamma$ is unitary modulo $\mathcal{I}$.
\end{proof}

Due to $[\hat{U}_{a_1}^*,\hat{U}_{a_2}]=0$, the extension Eq. \eqref{general.quarter.plane.SES} is related to the standard extension $0\rightarrow\mathcal{I}'\rightarrow C^*_r(\NN)^2\rightarrow C^*_r(\ZZ^2)\rightarrow 0$ (Eq. \eqref{standard.quarter.plane.SES}):
\begin{lemma}[7.3.8 of \cite{Cuntz}]\label{lem:subalgebra.isomorphic.to.standard}
The $C^*$-subalgebra of $C^*_r(S)$ generated by $\hat{U}_{a_1}$ and $\hat{U}_{a_2}$ is isomorphic to $C^*_r(\NN^2)$. Explicitly, there is a (injective) morphism $\kappa: C^*_r(\NN^2)\rightarrow C^*_r(S)$ defined on generators by 
$$\hat{U}_x\mapsto\hat{U}_{a_1},\qquad \hat{U}_y\mapsto\hat{U}_{a_2}.$$
\end{lemma}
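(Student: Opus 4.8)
The plan is to build $\kappa$ from Coburn's universal property of the Toeplitz algebra applied to each generating isometry separately, and then to prove injectivity by restricting the resulting representation to a single ``leaf'' of $S$. Two structural facts about the pair $\hat U_{a_1},\hat U_{a_2}$ are needed. Since $a_1,a_2\in S$ they are isometries, and since the truncation map is multiplicative on $S$ one has $\hat U_{a_1}\hat U_{a_2}=\hat U_{a_1+a_2}=\hat U_{a_2}\hat U_{a_1}$; combining this with the relation $[\hat U_{a_1}^*,\hat U_{a_2}]=0$ recorded just above the statement --- which holds because $a_1,a_2$ span the extremal rays of the convex cone $C$, so that for $s=\alpha a_1+\beta a_2\in S$ with $\alpha,\beta\geq 0$ one has $s-a_1\in S\iff\alpha\geq 1\iff s-a_1+a_2\in S$ --- one sees that $\hat U_{a_1}$ commutes with $\hat U_{a_2}$ and with $\hat U_{a_2}^*$, hence with all of $C^*(\hat U_{a_2})$. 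Therefore $C^*(\hat U_{a_1})$ and $C^*(\hat U_{a_2})$ are commuting $C^*$-subalgebras of $C^*_r(S)$.

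To construct $\kappa$: by Coburn's theorem the classical Toeplitz algebra $\mathcal T=C^*_r(\NN)$ is the universal $C^*$-algebra generated by a single isometry, so there are unital $*$-homomorphisms $\phi_1,\phi_2\colon\mathcal T\to C^*_r(S)$ sending the generating isometry of $\mathcal T$ to $\hat U_{a_1}$ and to $\hat U_{a_2}$ respectively. Their ranges commute by the previous step, and $\mathcal T$ is nuclear, so $\phi_1$ and $\phi_2$ amalgamate into a $*$-homomorphism on the tensor product; via the identification $C^*_r(\NN^2)\cong\mathcal T\otimes\mathcal T$ (Prop.\ 1 of \cite{DoHo}) this yields a unital $*$-homomorphism $\kappa\colon C^*_r(\NN^2)\to C^*_r(S)$ with $\kappa(\hat U_x)=\hat U_{a_1}$ and $\kappa(\hat U_y)=\hat U_{a_2}$, whose image is exactly the $C^*$-subalgebra of $C^*_r(S)$ generated by $\hat U_{a_1},\hat U_{a_2}$.

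For injectivity I would pass to a single leaf. Let $L=\ZZ a_1+\ZZ a_2$, a finite-index sublattice of $\ZZ^2$, and put $S_0:=S\cap L=C\cap L$. Written in the $\ZZ$-basis $a_1,a_2$ of $L$, the cone $C$ is precisely the first quadrant, so $S_0=\{ma_1+na_2:m,n\in\NN\}$, and $ma_1+na_2\leftrightarrow(m,n)$ gives a unitary $\ell^2(S_0)\cong\ell^2(\NN^2)$. Because $a_1,a_2\in L$, the subspace $\ell^2(S_0)\subseteq\ell^2(S)$ is reducing for $\hat U_{a_1},\hat U_{a_2}$ and their adjoints, hence for $\mathrm{im}(\kappa)$, and under the above identification the restrictions of $\hat U_{a_1},\hat U_{a_2}$ to $\ell^2(S_0)$ are exactly the standard truncated shifts $\hat U_x,\hat U_y$ on $\ell^2(\NN^2)$. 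Thus $\kappa$ followed by the restriction $*$-homomorphism $\mathrm{im}(\kappa)\to\mathcal B(\ell^2(S_0))$ is the defining --- hence faithful --- representation of $C^*_r(\NN^2)$, which forces $\kappa$ to be injective.

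The one point that requires care is the relation $[\hat U_{a_1}^*,\hat U_{a_2}]=0$: this is precisely where the convexity of $C$ (equivalently, that $a_1,a_2$ are the asymptotic generators spanning the two faces) is used, and it is what makes the pair ``doubly commute'' and so lets the single-isometry universal property be bootstrapped up to $\mathcal T\otimes\mathcal T$. Once that relation is granted, both the existence of $\kappa$ (via nuclearity of $\mathcal T$) and its faithfulness (via the leaf representation) are routine; alternatively one may simply invoke 7.3.8 of \cite{Cuntz}.
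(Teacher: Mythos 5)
Your proof is correct, and since the paper delegates this lemma entirely to Cuntz's Lemma~7.3.8 without reproducing a proof, what you have written is a self-contained alternative. The construction of $\kappa$ via Coburn's universal property for each isometry separately, combined with nuclearity of $\mathcal T$ to pass through $\mathcal T\otimes\mathcal T\cong C^*_r(\NN^2)$, is standard once the doubly-commuting relation is established; you correctly identify $[\hat U_{a_1}^*,\hat U_{a_2}]=0$ as the nontrivial input, and your convexity argument for it (expressing $s=\alpha a_1+\beta a_2$ and noting that membership of $s-a_1$ in $C$ depends only on $\alpha\ge 1$, untouched by adding $a_2$) is clean. The injectivity argument by restricting to the reducing subspace $\ell^2(S_0)$, $S_0 = S\cap(\ZZ a_1+\ZZ a_2)$, is the most interesting part: it is a direct, representation-theoretic proof that sidesteps the more structural machinery of semigroup $C^*$-algebras (Nica covariance, constructible ideals) that Cuntz's treatment relies on. One detail worth being explicit about, and which you implicitly get right, is that $\kappa$ does \emph{not} map $\mathcal K(\ell^2(\NN^2))$ into $\mathcal K(\ell^2(S))$ --- the image of the rank-one projection $P_{\{0\}}$ can have higher rank --- so the existence of a proper reducing subspace for $\mathrm{im}(\kappa)$ is consistent; had the compacts been contained in $\mathrm{im}(\kappa)$ as operators on $\ell^2(S)$, your restriction argument would have been impossible.
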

\noindent 
Thus there is the following commutative diagram associated to $\kappa$,
\begin{equation}\label{SES.relation}
\xymatrix{
0\ar[r] & \mathcal{I}\ar[r] & C^*_r(S) \ar[r]^{\pi} & C^*_r(\ZZ^2)\ar[r] & 0 \\
0\ar[r] & \mathcal{I}'\ar[r]\ar[u]^{\kappa |} & C^*_r(\NN^2) \ar[r]^{\pi}\ar[u]^{\kappa}& C^*_r(\ZZ^2)\ar[r]\ar[u]^{\check{\kappa}} & 0 
}
\end{equation}
where $\kappa |$ is the restriction of $\kappa$ to $\mathcal{I}'$, and $\check{\kappa}$ is the induced map from the group homomorphism $M^\perp:\ZZ^2\rightarrow\ZZ^2$.

The $K$-theory LES for the upper short exact sequence and for the lower one can be combined into a single big diagram, because the connecting homomorphisms in $K$-theory are \emph{natural} with respect to morphisms of short exact sequences. This observation will allow the desired six-term LES of Eq.\ \eqref{general.quarter.plane.SES} (the upper sequence) to be computed from the known standard LES (Eq.\ \eqref{standard.quarter.plane.LES}) for the lower sequence, together with knowledge of the functorially induced morphisms $(\kappa |)_*,\check{\kappa}_*$.

\begin{lemma}[7.3.9(1) of \cite{Cuntz}]\label{lem:ideal.k-theory}
For any subsemigroup $S\subset\ZZ^2$ as in Definition \ref{defn:general.quarter.plane}, we have $K_0(\mathcal{I})\cong K_0(\mathcal{I}')$ (thus they are each isomorphic to $\ZZ^2$ due to Eq.\ \eqref{standard.quarter.plane.LES}). Also, $K_1(\mathcal{I})\cong\ZZ$ with generator represented by $w=\hat{U}_{a_2}P_{F_1}+\hat{U}_{a_1}^*P_{F_2}$.
\end{lemma}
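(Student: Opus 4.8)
The plan is to compute $K_\bullet(\mathcal{I})$ directly through a Mayer--Vietoris six-term sequence, in exactly the way the staircase algebra was handled in Section~\ref{sec:staircase}; the only genuinely new ingredient is a structural fact about the ideals $\mathcal{I}_1,\mathcal{I}_2$ for a general integral cone, supplied by \cite{Cuntz}~\S7. By Lemma~\ref{quarter.plane.SES} we already have $\mathcal{I}=\mathcal{I}_1+\mathcal{I}_2$ with $\mathcal{I}_1\cap\mathcal{I}_2=\mathcal{K}(\ell^2(S))$, hence a short exact sequence
\[
0\rightarrow\mathcal{K}\rightarrow\mathcal{I}\xrightarrow{\ q\ }\mathcal{I}_1/\mathcal{K}\,\oplus\,\mathcal{I}_2/\mathcal{K}\rightarrow 0 .
\]
The crucial claim, to be verified first, is that $\mathcal{I}_i/\mathcal{K}\cong C^*_r(\ZZ)\otimes\mathcal{K}$ for $i=1,2$, so that this sequence becomes the verbatim analogue of Eq.~\eqref{subideal.SES}.

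To prove the claim I would repeat the reasoning of Lemmas~\ref{lem:commutator.ideal}--\ref{lem:bumpy.SES}, now using the description of $S$ from \cite{Cuntz}. Each face $F_i=\{n a_i:n\geq 0\}$ is a copy of $\NN$, and $P_{F_i}\in C^*_r(S)$ (Lemma~\ref{quarter.plane.SES}). Translating $F_i$ into the interior of $S$ yields an $\NN$-indexed family of pairwise disjoint translates which \emph{foliate} $S$ up to finitely many lattice points (the dotted lines in Fig.~\ref{fig:rational.angle.concave}); this provides the ``$\mathcal{K}$''-direction transverse to $F_i$, while translations along $F_i$ by $\hat{U}_{a_i}$ provide the ``$C^*_r(\ZZ)$''-direction. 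Since any word in the $\hat{U}_\gamma,\hat{U}_\gamma^*$ ($\gamma\in S$) normal-orders modulo lower-order commutator terms, a generic element of $\mathcal{I}_i$ is, modulo $\mathcal{K}$, a finite sum of operators that shift along $F_i$ and act as finite-rank operators transverse to $F_i$, producing a quotient map $q$ of exactly the form of Eq.~\eqref{q.formula}. This is the one step at which a general integral cone is genuinely harder than $\NN^2$ --- the semigroup $S$ typically needs generators beyond $a_1,a_2$, and $\det M>1$ --- and it is precisely what \cite{Cuntz}~\S7.2--7.3 establishes; I expect it to be the main obstacle. Modulo $\mathcal{K}$, however, these subtleties disappear and the outcome is identical to Eq.~\eqref{eqn:ideal.modK}.

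Granting the claim, the six-term sequence for $0\to\mathcal{K}\to\mathcal{I}\to(C^*_r(\ZZ)\otimes\mathcal{K})\oplus(\mathcal{K}\otimes C^*_r(\ZZ))\to 0$ has the same shape as Eq.~\eqref{subideal.LES}, using $K_0(\mathcal{K})\cong\ZZ$, $K_1(\mathcal{K})=0$, and $K_0(C^*_r(\ZZ)\otimes\mathcal{K})\cong K_1(C^*_r(\ZZ)\otimes\mathcal{K})\cong\ZZ$. I would then read off the index map $K_1(\mathcal{I}/\mathcal{K})\cong\ZZ^2\to K_0(\mathcal{K})\cong\ZZ$ on the evident generators: the generator of the $i$-th summand lifts to the along-face unilateral shift $\hat{U}_{a_i}P_{F_i}$ (extended by the identity off $F_i$), a Fredholm operator of index $-1$, just as after Eq.~\eqref{q.formula} in the staircase case, so the index map is $(m,n)\mapsto-(m+n)$. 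Surjectivity forces $K_0(\mathcal{K})\to K_0(\mathcal{I})$ to vanish and $q_*\colon K_0(\mathcal{I})\xrightarrow{\ \cong\ }K_0(\mathcal{I}/\mathcal{K})\cong\ZZ^2$, with generators $[P_{F_1}],[P_{F_2}]$ as in Eq.~\eqref{eqn:staircase.K0.ideal.generators}; and the kernel of the index map is $\cong\ZZ$, generated by $(1,-1)$, so $K_1(\mathcal{I})\cong\ZZ$. A lift of $(1,-1)$, i.e.\ of $\big([U_{a_1}\otimes|0\rangle\langle 0|],\,[|0\rangle\langle 0|\otimes U_{a_2}^*]\big)$, is the anticlockwise corner-traversal operator $w=\hat{U}_{a_1}P_{F_1}+\hat{U}_{a_2}^*P_{F_2}$, adjusted by a finite-rank term supported near the corner where the two faces meet (cf.\ the extra summand in $w_\urcorner$, Eq.~\eqref{eqn:staircase.cornering}); this represents the generator of $K_1(\mathcal{I})\cong\ZZ$.

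Since this computation is formally identical to the one giving the standard LES~\eqref{standard.quarter.plane.LES} for $\mathcal{I}'$, it yields $K_0(\mathcal{I})\cong\ZZ^2\cong K_0(\mathcal{I}')$ (and likewise $K_1(\mathcal{I})\cong\ZZ\cong K_1(\mathcal{I}')$) as asserted. For later use of the commuting diagram~\eqref{SES.relation} one may also track $(\kappa|)_*$ on generators: because $\kappa(P_{F_i^{\mathrm{std}}})=1-\hat{U}_{a_{3-i}}\hat{U}_{a_{3-i}}^*$ is the projection onto a strip consisting of $\det M$ translates of $F_i$, the map $(\kappa|)_*$ acts on both $K_0(\mathcal{I}')$ and $K_1(\mathcal{I}')$ as multiplication by $\det M$ --- an isomorphism only up to that scale, which nonetheless suffices for the applications of the diagram in the sequel. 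In the write-up I would state the identification $\mathcal{I}_i/\mathcal{K}\cong C^*_r(\ZZ)\otimes\mathcal{K}$ with a precise pointer to \cite{Cuntz}~\S7 and devote the rest of the proof to the short $K$-theory bookkeeping above.
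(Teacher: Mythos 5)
Your proposal takes essentially the same route as the paper: reduce to the short exact sequence $0\to\mathcal{K}\to\mathcal{I}\to\mathcal{I}/\mathcal{K}\to 0$, invoke the identification $\mathcal{I}/\mathcal{K}\cong(\mathcal{K}(\ell^2(S/F_1))\otimes C^*_r(\ZZ))\oplus(C^*_r(\ZZ)\otimes\mathcal{K}(\ell^2(S/F_2)))$ of Eq.\ \eqref{eqn:ideal.modK.rational} (which is Lemma 7.3.6 of \cite{Cuntz}, the step the paper also cites rather than reproves), and then run the six-term sequence with generators read off exactly as in the staircase computation of Section \ref{sec:staircase}. Your bookkeeping of the index map, the resulting $K_0(\mathcal{I})\cong\ZZ[P_{F_1}]\oplus\ZZ[P_{F_2}]$, and the identification of $K_1(\mathcal{I})\cong\ZZ$ with the lift of $(1,-1)$ are all correct, and your observation about $\kappa(P_{F_i^{\rm std}})$ being a strip of ${\rm det}\,M$ translates correctly explains why $(\kappa|)_*$ is multiplication by ${\rm det}\,M$ on both $K_0$ and $K_1$.

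One detail worth flagging in your favor: you write $w=\hat{U}_{a_1}P_{F_1}+\hat{U}_{a_2}^*P_{F_2}$, whereas the lemma as printed has the indices crossed, $w=\hat{U}_{a_2}P_{F_1}+\hat{U}_{a_1}^*P_{F_2}$. Your version is in fact the correct one: $\hat{U}_{a_1}P_{F_1}$ is a unilateral shift along $F_1$ (away from the corner) and $\hat{U}_{a_2}^*P_{F_2}$ shifts along $F_2$ toward the corner, giving the anticlockwise boundary translation that specializes to $w=\hat{U}_xP_{F_1}+\hat{U}_y^*P_{F_2}$ in the standard quarter-plane; by contrast $\hat{U}_{a_2}P_{F_1}$ would translate off the face $F_1$ entirely. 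The derivative computation in Eq.\ \eqref{eqn:derivative.anticlockwise.unitary}, which expands $\partial_i w=[Q_i,\hat{U}_{a_1}P_{F_1}]+[Q_i,\hat{U}_{a_2}^*P_{F_2}]$, confirms that the paper's lemma statement contains a transposition typo and that the operator actually being used later is the one you wrote down.
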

\noindent
\begin{remark}\label{rem:edge.travelling.op}
The representative generator $w$ is just the anticlockwise translation operator on the boundary Hilbert space $\ell^2(F_1\cup F_2)$, see Fig.\ \ref{fig:rational.angle.concave}. We call this the ``edge-travelling operator''. Quite generally, such an edge-travelling operator represents the $K_1$-theory generator of the coarse index of the Dirac operator on a noncompact 1D manifold, see \S 5.1 of \cite{Ludewig-Thiang-cobordism}.
\end{remark} 

\begin{lemma}[7.3.9 (2)-(4) of \cite{Cuntz}]\label{lem:functorial.maps}
Let $\kappa: C^*_r(\NN^2)\rightarrow C^*_r(S)$ be as in Lemma \ref{lem:subalgebra.isomorphic.to.standard}. Then
\begin{enumerate}
\item $\check{\kappa}_*:\ZZ^2\cong K_0(C^*_r(\ZZ^2))\rightarrow K_0(C^*_r(\ZZ^2))\cong\ZZ^2$ takes 
$$[1]\mapsto[1],\qquad [\mathfrak{b}]\mapsto ({\rm det}\,M)[\mathfrak{b}].$$
\item The index map for Eq.\ \eqref{general.quarter.plane.SES},
$${\rm Ind}: \ZZ^2\cong K_1(C^*_r(\ZZ^2))\rightarrow K_0(\mathcal{I})\cong\ZZ^2, $$
is given by multiplication by $M$.
\item The induced map $(\kappa |)_*:\ZZ\cong K_1(\mathcal{I}')\rightarrow K_1(\mathcal{I})\cong\ZZ$ is given by multiplication by ${\rm det}\,M$.
\end{enumerate}
\end{lemma}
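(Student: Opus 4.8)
The plan is to prove Lemma \ref{lem:functorial.maps} by tracking how the map $\kappa$ (and its restriction to ideals, and its induced quotient map $\check\kappa$) acts on the concrete generators identified in the preceding lemmas, exploiting the naturality of the $K$-theory six-term sequence with respect to the commutative diagram Eq.\ \eqref{SES.relation}. The three parts are logically intertwined: part (1) is essentially a statement about the classifying torus $\TT^2$, part (2) follows from part (1) combined with the known exponential map and the index map of the standard extension Eq.\ \eqref{standard.quarter.plane.LES}, and part (3) follows from part (2) together with naturality of the index map. So I would establish (1) first, then derive (2) and (3) from it.

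\textbf{Part (1).} The map $\check\kappa:C^*_r(\ZZ^2)\to C^*_r(\ZZ^2)$ is the $C^*$-algebra map induced by the group homomorphism $M^\perp:\ZZ^2\to\ZZ^2$, i.e.\ on the Fourier side it is the pullback along the dual torus endomorphism $(M^\perp)^{\sf T}:\TT^2\to\TT^2$. The class $[1]\in K_0$ is manifestly preserved. For $[\mathfrak b]$, recall that under the Chern character $K^0(\TT^2)\to H^{\rm even}(\TT^2,\ZZ)$ the class $[\mathfrak b]-[1]$ maps to the generator of $H^2(\TT^2,\ZZ)$; since $\check\kappa$ is pullback along a degree-$\det(M^\perp)$ self-map of $\TT^2$ (the degree of a linear torus endomorphism is the determinant of the matrix, and $\det M^\perp=\det M$), it multiplies $H^2(\TT^2,\ZZ)$ by $\det M$. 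As $\check\kappa_*[1]=[1]$ and reduced $K^0(\TT^2)\cong H^2(\TT^2,\ZZ)$ is torsion-free, this forces $\check\kappa_*[\mathfrak b]=(\det M)[\mathfrak b]+(1-\det M)[1]$; I would absorb the $[1]$-correction silently or state the identity in reduced $K$-theory exactly as in the lemma. Alternatively, and perhaps more in the spirit of \cite{Cuntz}, one can compute directly that $[\mathfrak b]$ pulls back to the Bott class for the sublattice $M^\perp\ZZ^2$, whose index-$\det M$ inclusion into $C^*_r(\ZZ^2)$ multiplies the top $K$-theory class by $\det M$.

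\textbf{Parts (2) and (3).} For part (2), naturality of the index map applied to Eq.\ \eqref{SES.relation} gives a commuting square relating $\mathrm{Ind}$ for the bottom (standard $\NN^2$) sequence to $\mathrm{Ind}$ for the top ($S$) sequence, intertwined by $\check\kappa_*$ on $K_1(C^*_r(\ZZ^2))$ and by $(\kappa|)_*$ on $K_0$ of the ideals. Since the standard index map $K_1(C^*_r(\ZZ^2))\xrightarrow{\cong}K_0(\mathcal I')$ is an isomorphism computed on generators in Eq.\ \eqref{standard.quarter.plane.LES} (with $\mathrm{Ind}[U_x]=-[P_{F_2}]$, $\mathrm{Ind}[U_y]=-[P_{F_1}]$, using the identification $K_0(\mathcal I')\cong\ZZ[P_{F_1}]\oplus\ZZ[P_{F_2}]$ from Eq.\ \eqref{eqn:staircase.K0.ideal.generators}), it suffices to compute $\mathrm{Ind}$ for the top sequence on $[U_x]$ and $[U_y]$ directly, or to compute $\check\kappa_*$ on $K_1$ and then transport. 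The cleanest route: compute $\mathrm{Ind}_S([U_{a_1}]),\mathrm{Ind}_S([U_{a_2}])$ directly — a lift of $U_{a_1}$ in $C^*_r(S)$ is the isometry $\hat U_{a_1}$, and the standard formula $\mathrm{Ind}[u]=[1-v^*v]-[1-vv^*]$ for a lift $v$ gives $\mathrm{Ind}_S([U_{a_1}])=-[P_{F_2}]$ and $\mathrm{Ind}_S([U_{a_2}])=-[P_{F_1}]$ by the commutator identities $[\hat U_{a_i}^*,\hat U_{a_i}]=P_{F_i}$ of \cite{Cuntz}, Lemma 7.3.2. Then, since $[U_{a_1}]=x_1[U_x]+y_1[U_y]$ and $[U_{a_2}]=x_2[U_x]+y_2[U_y]$ in $K_1(C^*_r(\ZZ^2))\cong\ZZ[U_x]\oplus\ZZ[U_y]$ (a translation $U_\gamma$ has $K_1$-class the coordinate vector of $\gamma$), solving the resulting linear system expresses $\mathrm{Ind}_S([U_x]),\mathrm{Ind}_S([U_y])$ in the basis $[P_{F_1}],[P_{F_2}]$; collecting coefficients yields exactly the matrix $M$ (the inverse relation $a_i=\sum M^\perp_{\cdot i}$ together with $MM^\perp=(\det M)\mathbf 1$ makes this bookkeeping transparent). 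Part (3): by naturality of the index map in the square above, $(\kappa|)_*:K_1(\mathcal I')\to K_1(\mathcal I)$ — wait, $(\kappa|)_*$ is on $K_0$ of the ideals for the index map; for $K_1$ of the ideals the relevant naturality is of the \emph{exponential} map, which is what computes $[w]$. So instead: both $K_1(\mathcal I')\cong\ZZ[w']$ and $K_1(\mathcal I)\cong\ZZ[w]$ are images of $[\mathfrak b]$ under the respective exponential maps (these are surjective by Eq.\ \eqref{standard.quarter.plane.LES} and Lemma \ref{quarter.plane.SES}'s LES), and naturality of $\mathrm{Exp}$ with respect to Eq.\ \eqref{SES.relation} gives $(\kappa|)_*\circ\mathrm{Exp}'=\mathrm{Exp}\circ\check\kappa_*$; applying this to $[\mathfrak b]$ and using part (1), $(\kappa|)_*[w']=(\kappa|)_*\mathrm{Exp}'[\mathfrak b]=\mathrm{Exp}(\check\kappa_*[\mathfrak b])=\mathrm{Exp}((\det M)[\mathfrak b])=(\det M)[w]$, which is the claim.

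\textbf{Main obstacle.} The routine but genuinely error-prone step is part (2): one must pin down the correct ordering and orientation conventions for $a_1,a_2$, $F_1,F_2$, and the sign of the index map, so that the linear-algebra bookkeeping produces $M$ rather than $M^{\sf T}$, $M^\perp$, $-M$, or $M^{-1}\det M$. The conceptual content is light once part (1) is in hand, but getting the matrix exactly right — including verifying that the generators $[P_{F_1}],[P_{F_2}]$ of $K_0(\mathcal I)$ are mapped \emph{isomorphically} (not merely up to finite index) from $K_0(\mathcal I')$, which is Lemma \ref{lem:ideal.k-theory} and uses $\mathcal I_1\cap\mathcal I_2=\mathcal K$ — requires care. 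A secondary subtlety is the $[1]$-versus-reduced-$K^0$ discrepancy in part (1), which I would handle by stating part (1) precisely (as the lemma does, with $[1]\mapsto[1]$ separately recorded) so the ambiguity never arises.
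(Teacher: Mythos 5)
The paper supplies no proof here: Lemma~\ref{lem:functorial.maps} is cited verbatim from Cuntz \S7.3.9, so your reconstruction is being compared against a reference, not a written argument. Structurally your plan (degree argument for (1), naturality of the connecting maps for (2) and (3)) is the natural one, and part (1) is correct, including the handling of the $[1]$-term; the label $\mathrm{diag}(1,\det M)$ in Theorem~\ref{thm:big.LES} confirms that the lemma's ``$[\mathfrak b]\mapsto(\det M)[\mathfrak b]$'' is a statement in reduced $K$-theory, i.e.\ with respect to the basis $\{[1],[\mathfrak b]-[1]\}$, exactly as you note.

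Your ``cleanest route'' for part (2) contains a genuine error. The identity $[\hat U_{a_i}^*,\hat U_{a_i}]=P_{F_j}$ holds for $S=\NN^2$ but fails for a general rational cone. The range-defect $1-\hat U_{a_1}\hat U_{a_1}^*$ is the projection onto $S\setminus(a_1\cdot S)$, a slab whose transverse thickness is $|\,a_2\wedge a_1\,|=\det M^\perp=\det M$: choosing a basis $\{a_2,b_2\}$ with $a_2\wedge b_2=1$, the $b_2$-coordinate of $a_1$ has magnitude $\det M$, so the slab decomposes (up to finite rank) into $\det M$ parallel translates of $F_2$. Hence $\mathrm{Ind}_S[U_{a_1}]=-(\det M)[P_{F_2}]$, not $-[P_{F_2}]$; with your values the linear system for $\mathrm{Ind}[U_x],\mathrm{Ind}[U_y]$ yields non-integral coefficients. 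The Cuntz Lemma 7.3.2 you invoke concerns the primitive transversal vectors $\gamma_i$ with $P_{F_i}=1-\hat U_{\gamma_i}\hat U_{\gamma_i}^*$, which are generally different from the asymptotic generators: in Fig.~\ref{fig:rational.angle.concave} the paper takes $\gamma_1=(-1,1)\neq a_2$ and $\gamma_2=(0,-1)\neq a_1$. Your alternative, naturality route for (2) — using $\check\kappa_*=M^\perp$ on $K_1$ and $(\kappa|)_*=(\det M)\mathbf 1$ on $K_0$ of the ideals — does give $M$, but those inputs are themselves Cuntz facts on a par with the lemma.

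Part (3) is circular within this paper's logical order. Your chain $(\kappa|)_*[w']=\mathrm{Exp}(\check\kappa_*[\mathfrak b])=(\det M)\,\mathrm{Exp}[\mathfrak b]=(\det M)[w]$ uses $\mathrm{Exp}[\mathfrak b]=[w]$ for the sequence \eqref{general.quarter.plane.SES} in the final step, but that identity is precisely the conclusion of Theorem~\ref{thm:big.LES}, which is deduced from Lemma~\ref{lem:functorial.maps}. Moreover, even the surjectivity you invoke for the second $\mathrm{Exp}$ requires $K_1(C^*_r(S))=0$, which is also downstream. The non-circular argument would compute $(\kappa|)_*[w']$ directly: $\kappa(P_{F_i}')=1-\hat U_{a_j}\hat U_{a_j}^*$ projects onto the thick slab, so $\kappa(w')$ is an anticlockwise shift on that slab, and one shows its $K_1$-class equals $\det M$ copies of $[w]$ by decomposing the slab into $\det M$ translates of a face — this is where the $\det M$ actually comes from, and it is the same combinatorial fact that you missed in part (2).
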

\noindent
It is easy to check (e.g.\ Eq.\ 7.3 of \cite{Cuntz}) that $(\kappa |)_*:\ZZ^2\cong K_0(\mathcal{I}')\rightarrow K_0(\mathcal{I})\cong\ZZ^2$ is given by multiplication by $({\rm det}\,M){\bf 1}$, and that $\check{\kappa}_*:\ZZ^2\cong K_1(C^*_r(\ZZ^2))\rightarrow K_1(C^*_r(\ZZ^2))\cong\ZZ^2$ is given by multiplication by $M^\perp$. 

In conclusion, we have deduced:
\begin{theorem}[cf.\ 7.3.11 of \cite{Cuntz}]\label{thm:big.LES}
For any subsemigroup $S\subset\ZZ^2$ in the setup of Definition \ref{defn:general.quarter.plane}, the following diagram commutes:
\begin{equation}
\xymatrix{
{\overbrace{K_0(\mathcal{I})}^{\ZZ^2}}\ar[rr] & & {\overbrace{K_0(C^*_r(S))}^{{\rm coker}\,M\oplus\ZZ[1]}} \ar[rr]^{\pi_*} & & {\overbrace{K_0(C^*_r(\ZZ^2))}^{\ZZ[1]\oplus\ZZ[\mathfrak{b}]}} \ar[ddd]^{\rm Exp} \\ 
 & {\overbrace{K_0(\mathcal{I}')}^{\ZZ^2}} \ar[r]^{0}\ar[lu]_{({\rm det}\,M){\bf 1}}^{(\kappa |)_*} & {\underbrace{K_0(C^*_r(\NN^2))}_{\ZZ[1]}} \ar[r]^{\pi_*}\ar[u]_{\kappa^*} & {\overbrace{K_0(C^*_r(\ZZ^2))}^{\ZZ[1]\oplus\ZZ[\mathfrak{b}]}} \ar[d]^{{\rm Exp}}\ar[ur]^{{\rm diag}(1,{\rm det}\,M)}_{\check{\kappa}_*} & \\
 & {\underbrace{K_1(C^*_r(\ZZ^2))}_{\ZZ^2}} \ar[u]^{{\rm Ind}={\bf 1}}\ar[ld]^{M^\perp}_{\check{\kappa}_*} & {\overbrace{K_1(C^*_r(\NN^2))}^{0}} \ar[l]_{0}\ar[d]_{\kappa_*} & {\underbrace{K_1(\mathcal{I}')}_{\ZZ[w]}}\ar[l]_{0}\ar[dr]^{(\kappa |)_*}_{{\rm det}\,M} & \\
{\underbrace{K_1(C^*_r(\ZZ^2))}_{\ZZ^2}}\ar[uuu]^{{\rm Ind}=M} & & {\underbrace{K_1(C^*_r(S))}_{0}}\ar[ll]^{0} & & {\underbrace{K_1(\mathcal{I})}_{\ZZ[w]}}\ar[ll]
}
\end{equation}
The exponential map ${\rm Exp}:K_0(C^*_r(\ZZ^2))\rightarrow K_1(\mathcal{I})$ for the sequence Eq.\ \eqref{general.quarter.plane.SES} maps the Bott generator $[\mathfrak{b}]$ to the anticlockwise winding generator $[w]$ of Lemma \ref{lem:ideal.k-theory}.
\end{theorem}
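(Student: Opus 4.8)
The plan is to obtain the entire diagram, and in particular all the unknown entries of the six-term exact sequence attached to Eq.~\eqref{general.quarter.plane.SES}, by transporting the already-known standard LES (Eq.~\eqref{standard.quarter.plane.LES}) along the morphism of short exact sequences Eq.~\eqref{SES.relation}. The mechanism is that the connecting homomorphisms ${\rm Ind}$ and ${\rm Exp}$ are natural for morphisms of extensions while $K_0,K_1$ are functorial, so once the three vertical maps are identified — which is precisely what Lemmas~\ref{lem:ideal.k-theory}, \ref{lem:functorial.maps} and the two ``easy'' remarks immediately after them supply — every square of the asserted diagram commutes automatically. What then remains is to pin down $K_\bullet(C^*_r(S))$ and the connecting maps of the upper sequence from exactness.

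The key step is the exponential map, and here I would invoke naturality of ${\rm Exp}$ for the pair of extensions in Eq.~\eqref{SES.relation}, which reads ${\rm Exp}\circ\check\kappa_* = (\kappa|)_*\circ{\rm Exp}'$ on $K_0(C^*_r(\ZZ^2))$. Evaluating on the Bott generator $[\mathfrak b]$: on the standard side ${\rm Exp}'[\mathfrak b]=[w]$ is the anticlockwise boundary-translation generator of $K_1(\mathcal I')\cong\ZZ$ (the content of Eq.~\eqref{standard.quarter.plane.LES}), and $(\kappa|)_*$ multiplies this generator by ${\rm det}\,M$ (Lemma~\ref{lem:functorial.maps}(3)); on the other side $\check\kappa_*[\mathfrak b]=({\rm det}\,M)[\mathfrak b]$ (Lemma~\ref{lem:functorial.maps}(1)). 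Hence $({\rm det}\,M)\,{\rm Exp}[\mathfrak b]=({\rm det}\,M)[w]$ in $K_1(\mathcal I)\cong\ZZ$, and since this group is torsion-free with ${\rm det}\,M\neq 0$, cancelling yields ${\rm Exp}[\mathfrak b]=[w]$. Combined with the obvious ${\rm Exp}[1]=0$ — the unit of $C^*_r(\ZZ^2)$ lifts to the honest projection $1\in C^*_r(S)$ — this shows ${\rm Exp}$ is onto $K_1(\mathcal I)\cong\ZZ[w]$ with kernel $\ZZ[1]$.

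The remaining groups then follow from exactness of the upper six-term sequence together with Lemma~\ref{lem:functorial.maps}(2), which identifies ${\rm Ind}\colon K_1(C^*_r(\ZZ^2))\to K_0(\mathcal I)$ with multiplication by $M$, hence injective (as ${\rm det}\,M\neq0$). Surjectivity of ${\rm Exp}$ forces the inclusion $K_1(\mathcal I)\to K_1(C^*_r(S))$ to vanish; injectivity of ${\rm Ind}$ forces $\pi_*\colon K_1(C^*_r(S))\to K_1(C^*_r(\ZZ^2))$ to vanish; so $K_1(C^*_r(S))=0$. For $K_0$, exactness produces a short exact sequence $0\to{\rm coker}\,M\to K_0(C^*_r(S))\to\ZZ[1]\to 0$, which splits since $\ZZ[1]$ is free (with $[1_{C^*_r(S)}]$ a lift of its generator), giving $K_0(C^*_r(S))\cong{\rm coker}\,M\oplus\ZZ[1]$. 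With every group and arrow now in place, commutativity of the full diagram is just naturality of the two connecting maps plus functoriality of $K_\bullet$ applied to Eq.~\eqref{SES.relation}.

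I expect the only non-formal point — everything else being bookkeeping — to be the divisibility cancellation in the computation of ${\rm Exp}[\mathfrak b]$: naturality on its own establishes merely that ${\rm Exp}[\mathfrak b]$ and $[w]$ agree after multiplication by ${\rm det}\,M$, and it is the torsion-freeness of $K_1(\mathcal I)\cong\ZZ$ from Lemma~\ref{lem:ideal.k-theory} that upgrades this to an exact equality. This is exactly the step that guarantees the quarter-plane exponential map is non-vanishing for \emph{every} rational corner angle, which is the crux of the whole program. A secondary care-point is keeping the orientation conventions for the edge-travelling operators and the signs of $\check\kappa_*,(\kappa|)_*$ mutually consistent, so that the winding generator emerges with the correct (anticlockwise) sign rather than its negative.
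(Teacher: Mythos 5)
Your proposal is correct and follows essentially the same route as the paper: the diagram is assembled from the morphism of extensions in Eq.~\eqref{SES.relation}, the naturality of the connecting homomorphisms, and the functorially induced maps recorded in Lemmas~\ref{lem:ideal.k-theory} and~\ref{lem:functorial.maps}; the paper's own exposition is terse (``In conclusion, we have deduced:''), but the mechanism is exactly yours. You correctly identify the one non-formal step — that naturality a priori only pins down $({\rm det}\,M)\,{\rm Exp}[\mathfrak b]=({\rm det}\,M)[w]$, and that torsion-freeness of $K_1(\mathcal I)\cong\ZZ$ with ${\rm det}\,M\neq 0$ is what upgrades this to ${\rm Exp}[\mathfrak b]=[w]$ — and you are right that this divisibility argument, together with injectivity of ${\rm Ind}=M$, is what forces $K_1(C^*_r(S))=0$ and yields the split extension $0\to{\rm coker}\,M\to K_0(C^*_r(S))\to\ZZ[1]\to 0$. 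Your flagged ``secondary care-point'' about sign/orientation conventions is also apt: the paper fixes the winding direction through its explicit choice of $w=\hat U_{a_2}P_{F_1}+\hat U_{a_1}^*P_{F_2}$ and the ordering of $a_1,a_2$ so that ${\rm det}\,M>0$, and a reader wanting the sign of ${\rm Exp}[\mathfrak b]$ (rather than just nontriviality) must trace these conventions through Eq.~\eqref{standard.quarter.plane.LES}.
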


While we are interested in the exponential map rather than the index map, let us mention for completeness that the index map in Lemma \ref{lem:functorial.maps} should be understood according to the following conventions of \cite{Cuntz}. Generalising Eq.\ \eqref{eqn:ideal.modK}, there is a quotient map (cf.\ Lemma 7.3.6 of \cite{Cuntz}),
\begin{eqnarray}
\mathcal{I}/(\mathcal{I}_1\cap\mathcal{I}_2)&=&\mathcal{I}/\mathcal{K}(\ell^2(S)) \cong \mathcal{I}_1/\mathcal{K}\oplus\mathcal{I}_2/\mathcal{K} \nonumber\\
&\cong& \left(\mathcal{K}(\ell^2(S/F_1))\otimes C^*_r(\ZZ)\right) \oplus \left(C^*_r(\ZZ)\otimes \mathcal{K}(\ell^2(S/F_2))\right), \label{eqn:ideal.modK.rational}
\end{eqnarray}
and then we can choose $P_{F_2}, P_{F_1}$ to represent a basis for $K_0(\mathcal{I})\cong\ZZ^2$. Then $K_0(\mathcal{I})\cong\ZZ[P_{F_2}]\oplus\ZZ[P_{F_1}]$ and $K_1(C^*_r(\ZZ^2))\cong\ZZ[U_x]\oplus\ZZ[U_y]$ with a sign factor included in the index map. We should think of $S/F_i$ as a ``transversal label'' for the foliation of $S$ by translates of $F_i$, see Fig.\ \ref{fig:rational.angle.concave}.

\subsubsection{Rational slope quarter-plane with imperfect boundary}
As in Section \ref{sec:standard.imperfect}, we can introduce imperfections to boundary of a rational slope quarter-plane, confined to a region of finite distance from the origin. Then we define the subset $S^\urcorner$ of lattice points contained in this modified quarter-plane, the truncated translations $V_\gamma, \gamma\in \ZZ^2$, and the algebra $C^*_r(S^\urcorner)$ generated by $V_\gamma, \gamma\in S$. The main complication arising from $S\neq\NN^2$ is that $S$ generally requires some (finite number of) extra generators other than the two asymptotic ones, see Fig.\ \ref{fig:rational.angle.concave}, so the arguments in Section \ref{sec:standard.imperfect} need some modifications, which we sketch here.

We still have the face projections $P_{F_1^\urcorner}, P_{F_2^\urcorner}$ generating ideals $\mathcal{I}_1^\urcorner, \mathcal{I}_2^\urcorner$ in $C^*_r(S^\urcorner)$ as before (although the faces have imperfections near the origin). Let us observe that translate $\gamma_0\cdot S^\urcorner$ is a subset of $S$ for some appropriate ``large'' $\gamma_0$, and that the complement $S^\urcorner\setminus (\gamma_0\cdot S^\urcorner)$ is a disjoint union of translates of $F_1^\urcorner, F_2^\urcorner$ up to some finite set of points. 
The projection onto this complement is thus in $\mathcal{I}_1^\urcorner+\mathcal{I}_2^\urcorner$ so that in the quotient $C^*_r(S^\urcorner)/(\mathcal{I}_1^\urcorner+\mathcal{I}_2^\urcorner)$, each $V_\gamma, \gamma\in S$ becomes unitary. So writing $\mathcal{I}^\urcorner=\mathcal{I}_1^\urcorner+\mathcal{I}_2^\urcorner$, there is again an exact sequence
$$0\rightarrow \mathcal{I}^\urcorner\rightarrow C^*_r(S^\urcorner)\xrightarrow{\pi} C^*_r(\ZZ^2)\rightarrow 0,$$
the ``bumpy version'' of Eq. \eqref{general.quarter.plane.SES}.
As in Section \ref{sec:standard.imperfect}, we conclude that the LES for the above sequence has the same essential property as that for Eq. \eqref{general.quarter.plane.SES}. Namely, namely the exponential map takes $[\mathfrak{b}]\mapsto [w_\urcorner]$ where $w_\urcorner$ is the unitary which translates anticlockwise along the bumpy boundary.

\section{Topological cornering states}\label{sec:topological.cornering.states}
\subsection{Cyclic 1-cocycles for boundary algebra}\label{sec:cyclic.cocycle}
Recall the short exact sequence, Eq.\ \eqref{general.quarter.plane.SES}
$$
0\rightarrow \mathcal{I}\rightarrow C^*_r(S)\overset{\pi}{\rightarrow} C^*_r(\ZZ^2)\rightarrow 0,
$$
for the quarter-plane (with no imperfections) contained between the faces $F_1, F_2$ parallel to $a_1,a_2$ respectively. We will construct a cyclic 1-cocycle \cite{Connes} which pairs with $K_1(\mathcal{I})\cong\ZZ[w]$ to give a quantised boundary current. The ``bumpy'' case requires only minor modifications.

First we define a trace $\tau_{F_1}$ on $\mathcal{I}$ as follows. Recall the quotient map 
$$\mathcal{I}/\mathcal{K}(\ell^2(S))\cong (\mathcal{K}(\ell^2(S/F_1))\otimes C^*_r(\ZZ))\,\oplus (C^*_r(\ZZ)\otimes \mathcal{K}(\ell^2(S/F_2))) $$
from Eq.\ \eqref{eqn:ideal.modK.rational} (the bumpy case is similar, see Eq.\ \ref{eqn:ideal.modK}). There is a canonical trace on $C^*_r(\ZZ)$ (extracting the coefficient of the identity element) and on $\mathcal{K}(\ell^2(S/F_i))$, thus there is a trace $\tau$ on each summand of the RHS. We choose the trace $\frac{1}{|a_1|}\tau\oplus 0$ on $\mathcal{I}/\mathcal{K}(\ell^2(S))$, and pull this back to a trace $\tau_{F_1}$ on $\mathcal{I}$. Similarly, we can define $\tau_{F_2}$ by pulling back $0\oplus\frac{1}{|a_2|}\tau$. With these definitions,  
\begin{equation}
\tau_{F_i}(P_{F_j})=\frac{1}{|a_i|}\delta_{ij}, \qquad \tau_{F_i}({\rm finite\,\, rank})=0, \quad i,j=1,2.\label{eqn:trace.values}
\end{equation} 
We should think of $\tau_{F_i}$ as a trace-per-unit-length along the face $F_i$.

For $i=1,2$, define an unbounded derivation $\partial_i$ on (the usual maximal dense subalgebra $\mathscr{I}$ of) $\mathcal{I}\subset \mathcal{B}(\ell^2(S))$, given by the commutator
$$\partial_i(\,\cdot\,):=[Q_i,\,\cdot\,],\qquad Q_i=\frac{x_iX+y_iY}{|a_i|},$$
representing momentum along the face $F_i$, up to a $\sqrt{-1}$ factor. One easily verifies that for $w=\hat{U}_{a_2}P_{F_1}+\hat{U}_{a_1}^*P_{F_2}$ we have 
\begin{eqnarray}
\partial_i w&=&[Q_i,\hat{U}_{a_1}P_{F_1}]+[Q_i,\hat{U}_{a_2}^*P_{F_2}]\nonumber \\
&=& \frac{1}{|a_i|}\left((a_i\cdot a_1)\hat{U}_{a_1}P_{F_1}-(a_i\cdot a_2)\hat{U}_{a_2}^*P_{F_2}\right)\label{eqn:derivative.anticlockwise.unitary}
\end{eqnarray}
Since $\tau_{F_i}\circ\partial_i=0$, we can define cyclic 1-cocycles (extended to the unitisation and matrix algebras)
$$\xi_i(a,a')=\tau_{F_i}(a\partial_i a'),\qquad a,a'\in\mathscr{I}^+.$$
The element $w\in\mathscr{A}$ is actually the unitary $W=w+1-P_{F_1\cup F_2}$ when regarded in $\mathscr{A}^+$ for $K$-theory computations. 
The pairing of $\xi_i$ with $[w]\in K_1(\mathcal{I})\cong\ZZ$ is
\begin{eqnarray}
\langle[\xi_i],[w]\rangle&:=&\xi_i(W^*-1,W-1)\nonumber\\
&=&\tau_{F_i}((w^*-P_{F_1\cup F_2})\partial_i (w-P_{F_1\cup F_2}))\nonumber\\
&=& \frac{1}{|a_i|}\tau_{F_i}((w^*-P_{F_1\cup F_2})((a_i\cdot a_1)\hat{U}_{a_1}P_{F_1}-(a_i\cdot a_2)\hat{U}_{a_2}^*P_{F_2}))\nonumber\\
&=& \frac{1}{|a_i|}\tau_{F_i}(w^*((a_i\cdot a_1)\hat{U}_{a_1}P_{F_1}-(a_i\cdot a_2)\hat{U}_{a_2}^*P_{F_2}))\nonumber \\
&=& \frac{1}{|a_i|}\tau_{F_i}((a_i\cdot a_1)\hat{U}_{a_1}^*\hat{U}_{a_1}P_{F_1}-(a_i\cdot a_2)\hat{U}_{a_2}\hat{U}_{a_2}^*P_{F_2})\nonumber\\
&=& \frac{1}{|a_i|}\tau_{F_i}((a_i\cdot a_1)P_{F_1}-(a_i\cdot a_2)P_{F_2})\nonumber \\
&=& (-1)^{i+1}.\label{eqn:cyclic.pairing.quarter}
\end{eqnarray}
where we have used Eq.\ \eqref{eqn:derivative.anticlockwise.unitary} and Eq. \eqref{eqn:trace.values}, and also $\tau_{F_i}(\hat{U}_{a_j}P_{F_k})=0=\tau_{F_i}(\hat{U}_{a_j}^*P_{F_k}), i,j,k=1,2$. 

\subsection{Cornering states and quantised boundary currents}\label{sec:cornering.boundary.states}
Consider a Chern insulator with Chern class $k\neq 0$, with bulk Hamiltonian $H$. For any quarter-plane with rational slope boundary, the spectral projection $P_-$ onto energies below its spectral gap has non-vanishing exponential map ${\rm Exp}[P_-]=k[w]\in K_1(\mathcal{I})$, due to $[P_-]$ containing $k$-multiples of the Bott projection class $[\mathfrak{b}]$, and our computation of the ${\rm Exp}$ map in \ref{thm:big.LES}). This guarantees that the quarter-plane truncation $\hat{H}$ of $H$ acquires gap-filling spectra, by exactly the same arguments as in Section \ref{sec:boundary.states}. Namely, in terms of the Hamiltonian $H$, we have
$$k[w]={\rm Exp}[P_-]=[{\rm exp}(-2\pi i \varphi(\hat{H}))]$$
where $\hat{H}$ is $H$ truncated to the quarter-plane (plus possibly some extra self-adjoint boundary term from $\mathscr{I}$), and $\varphi$ is some smooth real-valued function which is 1 below the spectral gap of $H$ and $0$ above the gap. Since the pairing computed in Eq.\ \eqref{eqn:cyclic.pairing.quarter} depends only on $K$-theory classes, so
$$\langle[\xi_i],[{\rm exp}(-2\pi i \varphi(\hat{H}))]\rangle=\langle [\xi_i], k[w]\rangle =k\cdot(-1)^{i+1}\in\ZZ.$$
The LHS gives the quantised boundary current in the direction of $a_i$ contributed by edge states localised near face $F_i$, generalising \S 7.1 of \cite{PSB}, and this is quantised to $k$ units (up to a sign) due to the RHS.

This result continues to hold even if imperfections are introduced into the boundary near the corner, in the sense of Section \ref{sec:standard.imperfect}. Specifically, $w$ is replaced by $w_\urcorner$ which translates anticlockwise along the bumpy boundary. The latter has the same $K$-theory class as its ``smoother'' version $w_\urcorner'$ obtained by shifting $w_\urcorner$ into the bulk (plus some finite-rank terms to make $w_\urcorner'$ unitary), whence we see easily that the pairings of $[w_\urcorner]$ with the cyclic cocycles $\xi_i$ are the same as those for $w$. That the {\rm Exp} map still takes $[P_-]\mapsto k[w_\urcorner]$ follows from Theorem \ref{thm:staircase.LES}.

Let us remark that near the corner, the separation of contributions by states near $F_1$ and near $F_2$ according to the definitions of $\tau_{F_i}$ is not precise, so the quantisation of currents along a face is expected to be exact only when measured far enough from the corner. This is in line with the exact quantisation of boundary currents obtained anywhere along the boundary in a half-plane geometry \cite{PSB}. 

\section{Further generalisations}\label{sec:generalisations}
\subsection{Cornering around irrational slope faces}\label{sec:irrational.computations}
Now suppose the cone $C$ has one/both of its slopes $\alpha_1<\alpha_2$ being \emph{irrational}. As before, we may still define $C^*_r(S)$ for the subsemigroup $S=C\cap\ZZ^2$ generated by the truncations $\hat{U}_\gamma^{\alpha_1,\alpha_2}$ of $U_\gamma, \gamma\in\ZZ^2$, and there is a short exact sequence
$$0\rightarrow\mathcal{I}\rightarrow C^*_r(S)\xrightarrow{\pi} C^*_r(\ZZ^2)\rightarrow 0$$
with $\mathcal{I}$ the commutator ideal \cite{CoDo}. However $S$, and thus $C^*_r(S)$ is no longer finitely-generated, so specifying $\mathcal{I}$ and its $K$-theory in terms of generating operators is not fruitful. Instead, we need some constructions from \cite{ParkThesis, Parkcones, Ji, Jiang}.

First consider the half-planes $y\geq\alpha_1 x$ and $y\leq\alpha_2 x$, and corresponding half-plane Toeplitz algebras $\mathcal{T}^{\alpha_i}$ generated by truncations $\hat{U}\hat{U}_\gamma^{\alpha_i}$ of $U_\gamma, \gamma\in\ZZ^2$ to the respective half-planes. There are
short exact sequences
$$0\rightarrow \mathcal{J}^{\alpha_i}\rightarrow\mathcal{T}^{\alpha_i} \xrightarrow{\pi_i} C^*_r(\ZZ^2)\rightarrow0$$
with $\pi_i:\hat{U}_\gamma^{\alpha_i}\mapsto U_\gamma$, generalising Eq. \eqref{standard.quarter.plane.SES}. The LES for rational $\alpha_i$ is essentially Eq.\ \eqref{standard.quarter.plane.SES} (make an integral basis change from the standard $\ZZ^2$ to exhibit $\mathcal{T}^{\alpha_i}\cong C^*_r(\NN\times\ZZ)$) while for irrational $\alpha_i$, this was computed in Prop.\ 3.2 of \cite{Ji}; the result that we need later is $K_1(\mathcal{J}^{\alpha_i})\cong\ZZ$. 

Next, there are also surjective morphisms $\eta_i:C^*_r(S)\rightarrow \mathcal{T}^{\alpha_i}$ taking $\hat{U}_\gamma^{\alpha_1,\alpha_2}\mapsto \hat{U}_\gamma^{\alpha_i}$, and we can define the pullback
\begin{equation}\label{pullback.Toeplitz}
\xymatrix{
\mathcal{S}^{\alpha_1,\alpha_2}\ar[r]^{\rho_1}\ar[d]_{\rho_2} & \mathcal{T}^{\alpha_1}\ar[d]^{\pi_1}  \\
\mathcal{T}^{\alpha_2}\ar[r]^{\pi_2} & C^*_r(\ZZ^2)
}
\end{equation}
By analysing the Mayer--Vietoris sequence for this pullback (and the computations of $K_\bullet(\mathcal{T}^{\alpha_i})$ in \cite{Ji}), see \S 3 of \cite{Parkcones}, one deduces that $K_1(\mathcal{S}^{\alpha_1,\alpha_2})\cong\ZZ$, and that $K_0(\mathcal{S}^{\alpha_1,\alpha_2})\cong\ZZ^2$ or $\ZZ^3$ depending on whether one/both of $\alpha_1, \alpha_2$ is/are irrational. The class $[1]$ of the identity projection is a generator in either case. 

To relate these $K$-groups of $\mathcal{S}^{\alpha_1,\alpha_2}$ to those of $C^*_r(S)$, we need the short exact sequence (Corollary 2.5 of \cite{Parkcones})
$$0\rightarrow \mathcal{K}(\ell^2(S))\rightarrow C^*_r(S)\xrightarrow{q} \mathcal{S}^{\alpha_1,\alpha_2}\rightarrow 0,$$
with $q(\cdot)=(\eta_1(\cdot),\eta_2(\cdot))$. By exhibiting a Fredholm operator $T\in C^*_r(S)$ of index $-1$ \cite{Parkcones,Jiang}, the connecting index map in $K$-theory for the above sequence is shown to be an isomorphism, so one deduces that $K_1(C^*_r(S))=0$ while $K_0(C^*_r(S))\cong K_0(\mathcal{S}^{\alpha_1,\alpha_2})$ is $\ZZ[1]\oplus \ZZ$ or $\ZZ[1]\oplus\ZZ^2$ as the case may be. When $q$ is restricted to $\mathcal{I}\subset C^*_r(S)$, we obtain a short exact sequence \cite{ParkThesis, Jiang}
$$0\rightarrow \mathcal{K}(\ell^2(S))\rightarrow\mathcal{I}\xrightarrow{q} \mathcal{J}^{\alpha_1}\oplus\mathcal{J}^{\alpha_2}\rightarrow 0,$$
generalising Eq. \eqref{eqn:ideal.modK.rational}. In its long exact sequence,
\begin{equation*}
\xymatrix{
{\overbrace{K_0(\mathcal{K}(\ell^2(S))}^{\ZZ}} \ar[r] & K_0(\mathcal{I}) \ar[r]^{q_*\qquad} & K_0(\mathcal{J}^{\alpha_1})\oplus K_0(\mathcal{J}^{\alpha_2}) \ar[d]^{{\rm Exp}} \\
{\underbrace{K_1(\mathcal{J}^{\alpha_1})\oplus K_1(\mathcal{J}^{\alpha_2})}_{\ZZ[\omega_{\alpha_1}]\oplus\ZZ[\omega_{\alpha_2}]}} \ar[u]^{{\rm Ind}} & {\underbrace{K_1(\mathcal{I})}_{?}} \ar[l] & {\underbrace{K_1(\mathcal{K}(\ell^2(S))}_{0}}\ar[l]_{0}
}
\end{equation*}
the index map is similarly shown to be surjective due to $T-1\in\mathcal{I}$ \cite{Jiang}, and then $K_1(\mathcal{I})\cong\ZZ[w]$ is deduced. Here, the representative $w\in\mathcal{I}$ is a unitary lift of $(\omega_{\alpha_1},\omega_{\alpha_2}^*)$ as in Lemma \ref{lem:ideal.k-theory}.

With these results at hand, we immediately deduce:
\begin{proposition}\label{prop:irrational.exponential.map}
In the LES for $0\rightarrow\mathcal{I}\rightarrow C^*_r(S)\xrightarrow{\pi} C^*_r(\ZZ^2)\rightarrow 0$, where $S\subset\ZZ^2$ is the subsemigroup corresponding to a cone with one/both extremal rays having irrational slope,
\begin{equation}\label{irrational.quarter.plane.LES}
\xymatrix{
K_0(\mathcal{I}) \ar[r] & {\overbrace{K_0(C^*_r(S))}^{\ZZ[1]\oplus\ZZ\, {\rm or}\, \ZZ[1]\oplus\ZZ^2}} \ar[r]^{\pi_*} & {\overbrace{K_0(C^*_r(\ZZ^2))}^{\ZZ[1]\oplus\ZZ[\mathfrak{b}]}} \ar[d]^{{\rm Exp}} \\
K_1(C^*_r(\ZZ^2)) \ar[u]^{{\rm Ind}} & {\underbrace{K_1(C^*_r(S))}_{0}} \ar[l]& {\underbrace{K_1(\mathcal{I})}_{\ZZ[w]}}\ar[l]_{0}
}
\end{equation}
we still have ${\rm Exp}:[\mathfrak{b}]\mapsto[w]$.
\end{proposition}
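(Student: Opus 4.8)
The plan is to reduce the irrational-slope case to the rational-slope case (Theorem \ref{thm:big.LES}) by a naturality/approximation argument, using the ability to ``sandwich'' the cone $C$ between two rational cones. First I would fix the Chern class $k\neq 0$ and note that, as in Section \ref{sec:boundary.states}, it suffices to verify that ${\rm Exp}[\mathfrak b]$ is the claimed generator $[w]$ of $K_1(\mathcal I)\cong\ZZ$; by the structure of the LES in Eq.\ \eqref{irrational.quarter.plane.LES} the only ambiguity is the \emph{sign} of ${\rm Exp}[\mathfrak b]$, i.e.\ whether it is $\pm[w]$ or $0$, since $K_1(C^*_r(S))=0$ forces ${\rm Exp}$ restricted to $\ZZ[\mathfrak b]$ to be surjective onto $K_1(\mathcal I)$. (The summand $\ZZ[1]\subset K_0(C^*_r(\ZZ^2))$ maps to $0$ under ${\rm Exp}$ because $[1]$ lifts to the identity projection of $C^*_r(S)$.) So the real content is: ${\rm Exp}[\mathfrak b]$ is a \emph{generator}, and it is the anticlockwise one.

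For surjectivity: exactness of Eq.\ \eqref{irrational.quarter.plane.LES} at $K_0(C^*_r(\ZZ^2))$ together with $\pi_*$ being injective when restricted to $\ZZ[1]$ (as $[1]$ lifts) shows $\ker\pi_* $ is a subgroup of $\ZZ[\mathfrak b]$ of finite index; but $K_1(C^*_r(S))=0$ forces ${\rm Exp}$ to be onto $K_1(\mathcal I)\cong\ZZ$, so ${\rm Exp}[\mathfrak b]=n[w]$ with $n=\pm1$. Alternatively, and more constructively, I would use the surjection $q_*$ from the bottom SES $0\to\mathcal K\to\mathcal I\xrightarrow{q}\mathcal J^{\alpha_1}\oplus\mathcal J^{\alpha_2}\to 0$ together with naturality of the exponential map against the commuting square relating $0\to\mathcal I\to C^*_r(S)\to C^*_r(\ZZ^2)\to0$ to the half-plane sequences $0\to\mathcal J^{\alpha_i}\to\mathcal T^{\alpha_i}\to C^*_r(\ZZ^2)\to0$ via the morphisms $\eta_i$. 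Naturality of ${\rm Exp}$ gives $q_*({\rm Exp}[\mathfrak b]) = ({\rm Exp}_1[\mathfrak b],{\rm Exp}_2[\mathfrak b])$, where ${\rm Exp}_i$ is the exponential map for the half-plane $\alpha_i$. For a half-plane, ${\rm Exp}_i[\mathfrak b]$ is the translation-along-the-boundary generator $[\omega_{\alpha_i}]$ (this is the content of \S\ref{sec:boundary.states} in the standard case, and of Prop.\ 3.2 of \cite{Ji} / \S3 of \cite{Parkcones} for irrational $\alpha_i$; a rational $\alpha_i$ is handled by an integral basis change reducing to Eq.\ \eqref{eqn:SES.standard.half.plane}, possibly composed with $\check\kappa_*$ as in Theorem \ref{thm:big.LES}). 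Hence $q_*({\rm Exp}[\mathfrak b]) = ([\omega_{\alpha_1}], \pm[\omega_{\alpha_2}])$, with the sign on the second factor coming from the $y$-vs.-$x$ orientation convention in $q$ (cf.\ Eq.\ \eqref{q.formula}). Since $w\in\mathcal I$ was \emph{defined} precisely as a unitary lift of $(\omega_{\alpha_1},\omega_{\alpha_2}^*)$ under $q$, and $q_*$ is injective on $K_1(\mathcal I)$ (because $K_1(\mathcal K)=0$), we conclude ${\rm Exp}[\mathfrak b]=[w]$, with the anticlockwise orientation matching that in Lemma \ref{lem:ideal.k-theory}.

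The orientation bookkeeping — checking that the two half-plane windings $[\omega_{\alpha_1}]$ and $[\omega_{\alpha_2}]$ assemble, under $q_*$, into the single anticlockwise generator $[w]$ rather than, say, $[\omega_{\alpha_1}\omega_{\alpha_2}]$ or a clockwise one — is the step I expect to be the main obstacle, since it depends on fixing consistent orientations for $\TT^2$, for each boundary face, and for the direction of the connecting maps across all three short exact sequences simultaneously. This is precisely the same subtlety already flagged after Theorem \ref{thm:big.LES} regarding the sign in the index map and the ordering $\ZZ[P_{F_2}]\oplus\ZZ[P_{F_1}]$; the cleanest way to pin it down is to track the Bott element $[\mathfrak b]$ through the Mayer--Vietoris diagram Eq.\ \eqref{pullback.Toeplitz} in tandem with Eq.\ \eqref{irrational.quarter.plane.LES}, using that both faces inherit their orientation from the outward anticlockwise traversal of $\partial C$, exactly as in the rational picture of Fig.\ \ref{fig:rational.angle.concave}. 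Once the signs are fixed on the half-plane pieces (where everything is explicit via $\hat U$'s), naturality of ${\rm Exp}$ propagates the answer to $C^*_r(S)$ with no further computation. Finally, a one-line remark records that, just as in Section \ref{sec:standard.imperfect}, the bumpy version follows by the same shift-into-the-bulk deformation, replacing $w$ by $w_\urcorner$ without changing the pairing.
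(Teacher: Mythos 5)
Your proof is correct and follows essentially the same route as the paper's: both rely on the preliminary identifications $K_1(C^*_r(S)) = 0$ (from the Fredholm operator $T$ and the Mayer--Vietoris computation for $\mathcal{S}^{\alpha_1,\alpha_2}$) and $K_1(\mathcal{I}) \cong \ZZ[w]$ (from the ideal extension $0\to\mathcal K\to\mathcal I\to\mathcal J^{\alpha_1}\oplus\mathcal J^{\alpha_2}\to 0$), after which exactness of the six-term sequence together with ${\rm Exp}[1]=0$ pins ${\rm Exp}[\mathfrak b]=\pm[w]$; the paper compresses this into ``we immediately deduce.'' One wording glitch worth fixing: your sentence ``$\ker\pi_*$ is a subgroup of $\ZZ[\mathfrak b]$ of finite index'' misplaces the object (the kernel of $\pi_*$ lives in $K_0(C^*_r(S))$, not $K_0(C^*_r(\ZZ^2))$), and in fact that intermediate step is unnecessary: $K_1(C^*_r(S))=0$ directly forces ${\rm Exp}$ surjective, and ${\rm Exp}[1]=0$ then forces ${\rm Exp}[\mathfrak b]$ to be a generator. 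Your naturality argument through the $\eta_i:C^*_r(S)\to\mathcal T^{\alpha_i}$ and the half-plane exponential maps is a genuine and welcome elaboration of the sign determination that the paper leaves implicit; since the squares
\begin{equation*}
\xymatrix{
0\ar[r] & \mathcal I\ar[r]\ar[d]^{\eta_i|} & C^*_r(S)\ar[r]^{\pi}\ar[d]^{\eta_i} & C^*_r(\ZZ^2)\ar[r]\ar@{=}[d] & 0 \\
0\ar[r] & \mathcal J^{\alpha_i}\ar[r] & \mathcal T^{\alpha_i}\ar[r]^{\pi_i} & C^*_r(\ZZ^2)\ar[r] & 0
}
\end{equation*}
commute (because $\pi=\pi_i\circ\eta_i$ on generators) and $q_*=(\eta_1|, \eta_2|)_*$ is injective on $K_1(\mathcal I)$, naturality of ${\rm Exp}$ plus the half-plane computations of \cite{Ji} indeed determine the image, consistently with the defining property of $w$ as a lift of $(\omega_{\alpha_1},\omega_{\alpha_2}^*)$.
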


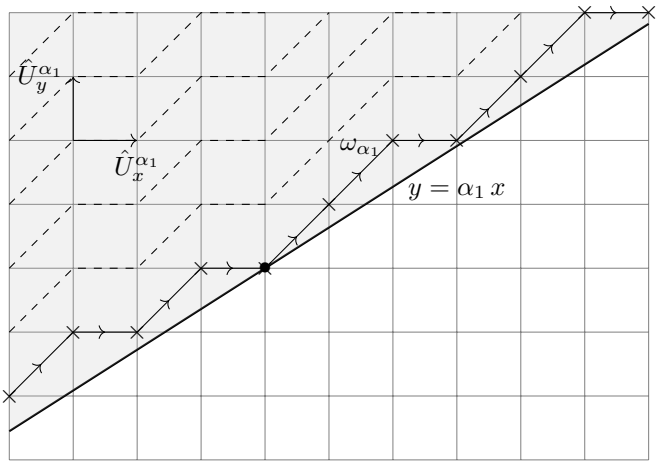
\begin{figure}
\begin{center}
\begin{tikzpicture}[scale=0.85]
\draw[help lines] (-4,-3) grid (6,4);
\draw[thick] (-4,-2.551) -- (6,3.826);
\fill[lightgray, opacity=0.2] (-4,-2.551) -- (6,3.826) -- (6,4) -- (-4,4) -- (-4,-2.551);

\node at (0,0) {$\bullet$};

\node (g) at (0,0) {$\times$};
\node (h) at (1,1) {$\times$};
\node (i) at (2,2) {$\times$};
\node (j) at (3,2) {$\times$};
\node (k) at (4,3) {$\times$};
\node (l) at (5,4) {$\times$};
\node (m) at (6,4) {$\times$};
\node (f) at (-1,0) {$\times$};
\node (e) at (-2,-1) {$\times$};
\node (d) at (-3,-1) {$\times$};
\node (c) at (-4,-2) {$\times$};

\path
   (-4,-2) edge[->-] (-3,-1)
   (-3,-1) edge[->-] (-2,-1)
   (-2,-1) edge[->-]  (-1,0)
   (-1,0) edge[->-]  (0,0)
   (0,0) edge[->-] (1,1)
   (1,1) edge[->-] (2,2)
   (2,2) edge[->-] (3,2)
   (3,2) edge[->-]  (4,3)
   (4,3) edge[->-] (5,4)
   (5,4) edge[->-] (6,4);
 
 \draw[thin, dashed] (-4,-1) -- (-3,0) -- (-2,0) -- (-1,1) -- (0,1) -- (1,2) -- (2,3) -- (3,3) -- (4,4);
  \draw[thin, dashed] (-4,0) -- (-3,1) -- (-2,1) -- (-1,2) -- (0,2) -- (1,3) -- (2,4) -- (3,4);
   \draw[thin, dashed] (-4,1) -- (-3,2) -- (-2,2) -- (-1,3) -- (0,3) -- (1,4);
      \draw[thin, dashed] (-4,2) -- (-3,3) -- (-2,3) -- (-1,4) -- (0,4);
         \draw[thin, dashed] (-4,3) -- (-3,4) -- (-2,4);

\node at (3,1.2) {$y=\alpha_1\,x$};
\node[above] at (1.5,1.6) {$\omega_{\alpha_1}$};

\draw[->] (-3,2) -- (-2,2);
\draw[->] (-3,2) -- (-3,3);
\node[below] at (-2,2) {$\hat{U}^{\alpha_1}_x$};
\node[left] at (-3,3) {$\hat{U}^{\alpha_1}_y$};

\end{tikzpicture}
\caption{Half-plane with irrational slope $0<\alpha_1<1$. The ``boundary projection'' $P_{Y}=1-\hat{U}^{\alpha_1}_y(\hat{U}^{\alpha_1}_y)^*\in\mathcal{J}^{\alpha_1}$ projects onto a set $Y$ of ``approximate boundary points'' as indicated by $\times$. The operator $\omega_{\alpha_1}=P_{Y}(\hat{U}^{\alpha_1}_x+\hat{U}^{\alpha_1}_{(1,1)})P_{Y}$ on $\ell^2(Y)$ effects ``translation along $Y$''. The set $S$ of lattice points in the half-plane can be ``foliated'' by the vertical translates of $Y$ (dashed lines).}\label{fig:irrational.slope}
\end{center}
\end{figure}

At this juncture, we would like to repeat the arguments in Section \ref{sec:topological.cornering.states}, but there are some difficulties. First, if $\alpha_i$ is irrational, we do not have a simple concrete representative\footnote{In the rational case, we can simply choose $\omega_{\alpha_i}$ to be the projection onto the half-plane boundary line $y=\alpha_i x$ composed with the translation $\hat{U}^{\alpha_i}_{a_i}$ along the boundary line.} $\omega_{\alpha_i}$ which generates $K_1(\mathcal{J}^{\alpha_i})\cong\ZZ$. On physical grounds, $\omega_{\alpha_i}$ can be taken to be an ``approximate generating translation along the boundary'', as illustrated in Fig.\ \ref{fig:irrational.slope}. Once we have such $\omega_{\alpha_1}, \omega_{\alpha_2}$, we can take $w=\hat{\omega}_{\alpha_2}+\hat{\omega}_{\alpha_1}^*$ --- the ``anticlockwise translation around the corner'' as in Lemma \ref{lem:ideal.k-theory} --- to represent the generator of $K_1(\mathcal{I})\cong \ZZ$. This veracity of this construction follows from coarse geometry arguments, see \S 2 and \S 5 of \cite{Ludewig-Thiang-cobordism}.

The next difficulty is to construct a trace on $\mathcal{I}$, for which we first need a trace on on $\mathcal{J}^{\alpha_i}$. In the rational case, there is a nice foliation of the half/quarter-plane lattice points by translates of the faces (Fig.\ \ref{fig:rational.angle.concave}), which exhibits an isomorphism $\mathcal{J}^{\alpha_1}\cong \mathcal{K}\otimes C^*_r(\ZZ)$ facilitating the construction of a trace. For the irrational case, the foliation in Fig.\ \ref{fig:irrational.slope} provides a clue in this direction.

Alternatively, there is another way to compute the boundary currents by taking a partition of the half or quarter-plane, and establishing a trace formula for the flow across the partition due to the boundary states. This approach is carried out in detail in \S 4 and \S 6 (especially Theorem 6.1) of \cite{Ludewig-Thiang-cobordism}.

\subsection{Cornering around concave corners}
The faces of a convex cone $C$ also define a closed \emph{concave quarter-plane} $\check{C}$, as illustrated in Fig.\ \ref{fig:rational.angle.concave}. Let $\check{S}=\check{C}\cap\ZZ^2$ be the lattice points in this concave quarter plane (this is no longer a subsemigroup although $S=C\cap\ZZ^2$ still acts on $\check{S}$), then we can truncate $U_\gamma, \gamma\in\ZZ^2$ to operators $\check{U}_\gamma$ acting on $\ell^2(\check{S})$, and generate the $C^*_r$-algebra $C^*_r(\check{S})$ in an analogous way to $C^*_r(S)$. In \cite{Hayashi2}, it was shown that there is again a short exact sequence
\begin{equation}
0\rightarrow \check{\mathcal{I}}\rightarrow C^*_r(\check{S})\xrightarrow{\pi} C^*_r(\ZZ^2)\rightarrow 0,\label{eqn:concave.SES}
\end{equation}
and that the computations of Section \ref{sec:irrational.computations} generalise almost verbatim. It is then easy to deduce that the analogue of Proposition \ref{prop:irrational.exponential.map} still holds: the exponential map in the LES for Eq.\ \eqref{eqn:concave.SES} maps the Bott element $[\mathfrak{b}]$ to the generator $[\check{w}]$ of $K_1(\check{\mathcal{I}})\cong\ZZ$, and a representative of the latter can be taken to be the unitary clockwise translation around the corner of the concave quarter-plane (Fig.\ \ref{fig:rational.angle.concave}).

\subsection{Magnetic translations and quantum Hall effect}\label{sec:magnetic.translations}
For the integer quantum Hall effect, $C^*_r(\ZZ^2)$ is replaced by a twisted version $C^*_r(\ZZ^2,\theta)$ generated by a \emph{projective} regular representation $\ZZ^2\ni \gamma\mapsto T_\gamma\in \mathcal{B}(\ell^2(\ZZ^2))$ with 2-cocycle $\sigma(\gamma_1,\gamma_2)={\rm exp}(-\im\pi \theta \gamma_1\wedge\gamma_2)$ \cite{Bellissard,BSB}. (Dual) \emph{magnetic translations} realise such a representation, and $T_xT_y=e^{2\pi \im \theta} T_yT_x$ for instance. The parameter $\theta\in \RR$ is the strength of a constant magnetic field applied perpendicularly to the 2D sample. One studies Hamiltonians which commute with magnetic translations, and the spectral projections $P_-$ live in some $M_N(C^*_r(\ZZ^2,\theta))$. Now, it is known  \cite{Rieffel} that $C^*_r(\ZZ^2,\theta)$ is a noncommutative torus, whose $K$-theory is identical to that of $C^*_r(\ZZ^2)$ with the Bott projection generator $[\mathfrak{b}]$ replaced by the class of the Rieffel projection $[P_{\rm Rieffel}]$. We may construct the twisted semigroup $C^*$-algebra $C^*_r(S,\theta)$ as the truncated version of $C^*_r(\ZZ^2)$, and obtain (commuting) face projections $P_{F_1}, P_{F_2}$ and the boundary-translation operator $w$ in the kernel of the canonical map $C^*_r(S,\theta)\rightarrow C^*_r(\ZZ^2,\theta)$; the results of Section \ref{sec:quarter.plane.section}-\ref{sec:topological.cornering.states} carry over in an almost identical way to the twisted case, although we omit details of the computation.

\vspace{1em}
{\bf Acknowledgements.} We thank E. Prodan, S. Hayashi, and M. Ludewig for insightful discussions and suggestions. This work was supported by Australian Research Council Discovery Grant DE170100149.


\begin{thebibliography}{9}
\bibliographystyle{plain}

\bibitem{Bellissard}
J. Bellissard.: $K$-theory of $C^*$-algebras in solid state physics. Statistical mechanics and field theory: mathematical aspects (Springer, Berlin, Heidelberg, 1986), 99--156

\bibitem{BSB}
J. Bellissard, A. van Elst, H. Schulz-Baldes.: The noncommutative geometry of the quantum Hall effect. J. Math. Phys. {\bf 35}(10) 5373--5451 (1994)


\bibitem{CoDo}
L.A. Coburn, R.G. Douglas.: $C^*$-algebras of operators on a half-space I. Publ. Math. Inst. Hautes \'{E}tudes Sci. {\bf 40} 69--79 (1971)

\bibitem{Connes}
A. Connes.: Non-commutative differential geometry. Publ. Math. Inst. Hautes \'{E}tudes Sci. {\bf 62} 41--144(1985)

\bibitem{Cuntz}
J. Cuntz.: Semigroup $C^*$-algebras and toric varieties. In: J. Cuntz, S. Echterhoff, X. Li, G. Yu (eds.) $K$-theory for Group $C^*$-Algebras and Semigroup $C^*$-Algebras (Birk\"{a}user, 2017), 297--306

\bibitem{Ding}
Y. Ding et al.: Experimental Demonstration of Acoustic Chern Insulators. Phys. Rev. Lett. {\bf 122} 014302 (2019)

\bibitem{DoHo}
R.G. Douglas, R. Howe.: On the $C^*$-algebra of Toeplitz operators on the quarter-plane. Trans. Amer. Math. Soc. {\bf 158}(1) 203--217 (1971)

\bibitem{FGW}
J. Fr\"{o}lich, G.M. Graf, J. Walcher.: On the Extended Nature of Edge States of Quantum Hall Hamiltonians. Ann. Henri Poincar\'{e} {\bf 1}(3) 405--442 (2000)

\bibitem{Haldane}
F.D.M. Haldane.: Model for a quantum Hall effect without Landau levels: Condensed-matter realization of the ``parity anomaly''. Phys. Rev. Lett. {\bf 61}(18) 2015 (1988)

\bibitem{Hayashi}
S. Hayashi.: Topological invariants and corner states for Hamiltonians on a three-dimensional lattice. Commun. Math. Phys. {\bf 364} 343--356 (2018)

\bibitem{Hayashi2}
S. Hayashi.: Toeplitz operators on concave corners and topologically protected corner states. Lett. Math. Phys. {\bf 109}(10) 2223--2254 (2019)

\bibitem{Ji}
R. Ji, J. Kaminker.: The $K$-theory of Toeplitz extensions. J. Operator Theory {\bf 19}(2) 347--354 (1988)

\bibitem{Jiang}
X. Jiang.: On Fredholm operators in quarter-plane Toeplitz algebras. Proc. Amer. Math. Soc. {\bf 123}(9) 2823--2830 (1995)

\bibitem{Jotzu}
G. Jotzu et al:. Experimental realization of the topological Haldane model with ultracold fermions. Nature {\bf 515} 237--240 (2014)

\bibitem{KRS}
J. Kellendonk, T. Richter, H. Schulz-Baldes.: Edge current channels and Chern numbers in the integer quantum Hall effect. Rev. Math. Phys. {\bf 14}(01) 87--119 (2002)

\bibitem{Klembt}
S. Klembt et al.: Exciton-polariton topological insulator. Nature {\bf 562} 552--556 (2018) 

\bibitem{Loring}
T.A. Loring.: The torus and noncommutative topology. PhD thesis, Univ. California, Berkeley (1986)

\bibitem{Lu}
L. Lu, J.D. Joannopoulos, M. Solja\v{c}i\'{c}.: Topological photonics. Nature Photonics {\bf 8} 821--829 (2014)

\bibitem{Ludewig-Thiang}
M. Ludewig, G.C. Thiang.: Good Wannier bases in Hilbert modules associated to topological insulators. J. Math. Phys. {\bf 61} 061902 (2020).  {\tt arXiv:1904.13051 [math-ph]}

\bibitem{Ludewig-Thiang-cobordism}
M. Ludewig, G.C. Thiang.: Cobordism invariance of topological edge-following states. {\tt arXiv:2001.08339 [math-ph]}

\bibitem{MTHyp}
V. Mathai, G.C. Thiang.: Topological phases on the hyperbolic plane: fractional bulk-boundary correpondence. Adv. Theor. Math. Phys. {\bf 23}(3) 803--840 (2019). {\tt arXiv:1712.02952 [cond-mat.str-el]}

\bibitem{Nash}
L.M. Nash et al.: Topological mechanics of gyroscopic metamaterials. Proc. Natl. Acad. Sciences. {\bf 47} 14496--14500 (2015)

\bibitem{Neeb}
K.-H. Neeb.: Toric varieties and algebraic monoids. Sem. Sophus Lie {\bf 2} 159--187 (1992)

\bibitem{ParkThesis}
E. Park.: The index theory of Toeplitz operators on the skew quarter-plane. PhD thesis, Stony Brook Univ. (1988)

\bibitem{Parkcones}
E. Park.: Index theory and Toeplitz algebras on certain cones in $\ZZ^2$. J. Operator Theory {\bf 23}(1) 125--146 (1990)

\bibitem{ParkSchochet}
E. Park, C. Schochet.: On the $K$-theory of quarter-plane Toeplitz algebras. Int. J. Math. {\bf 2}(02) 195--204 (1991)

\bibitem{PSB}
E. Prodan, H. Schulz-Baldes.: Bulk and boundary invariants for complex topological insulators. Math. Phys. Studies. Springer, Berlin, 2016.

\bibitem{Rechtsman}
M.C. Rechtsman et al.: Photonic Floquet topological insulators. Nature {\bf 496} 196--200 (2013) 

\bibitem{RoeToeplitz}
J. Roe.: Partitioning non-compact manifolds and the dual Toeplitz problem. In: Operator Algebras and Applications, D. Evans and M. Takesaki eds. (Cambridge Univ. Press 1989), 187--228

\bibitem{Rordam}
M. R\o rdam, F. Larsen, F. Larsen, N. Laustsen.: An introduction to $K$-theory for $C^*$-algebras. Vol. 49. Cambridge Univ. Press, 2000.

\bibitem{Rieffel}
M. Rieffel.: $C^*$-algebras associated with irrational rotations. Pacific J. Math. {\bf 93}(2) 415--429 (1981)

\bibitem{Susstrunk}
R. S\"{u}sstrunk, S.D. Huber.: Observation of phononic helical edge states in a mechanical topological insulator. Science {\bf 349} 47--50 (2015)

\bibitem{WO}
N.E. Wegge-Olsen.: $K$-theory and $C^*$-algebras: A Friendly Approach. Oxford Univ. Press, 1993

\end{thebibliography}
\end{document}